\newcommand{\valos}{\mathbb{R}}
\newcommand{\complex}{\mathbb{C}}
\newcommand{\eps}{\varepsilon}
\newtheorem{thm}{Theorem}
\newtheorem{thmlem}{Lemma}
\newcommand{\Pe}{\mathcal{P}}
\newcommand{\GGG}{\mathcal{G}}
\newcommand{\ket}[1]{{\left|#1\right\rangle}}
\newcommand{\bra}[1]{{\left\langle #1\right|}}
\newcommand{\skalarszorzat}[2]{{\langle #1 | #2 \rangle}}
\begin{document}

\numberwithin{equation}{section}

\title{On Form Factors in nested Bethe Ansatz systems}
\author{Bal\'azs Pozsgay, Willem-Victor van Gerven Oei, M\'arton Kormos}
\date{\today}

\maketitle

\abstract{
We investigate form factors of local operators in the multi-component
Quantum Non-linear Schr\"odinger model, a prototype theory solvable by
the so-called nested Bethe Ansatz. We determine the analytic
properties of the infinite volume form factors using the
coordinate Bethe Ansatz solution and we establish a connection with the finite
volume matrix elements. In the two-component models we derive a set of
recursion relations for the ``magnonic form factors'', which are the
matrix elements on the nested Bethe Ansatz states. In certain simple
cases (involving states with only one spin-impurity) we obtain
explicit solutions for the recursion 
relations. 
}

\section{Introduction}

One of the goals of many-body quantum physics is the calculation of
correlation functions of local observables. 
 The form factor program is an approach to tackle this problem; it
 consists of the following three steps:
 \begin{enumerate}
 \item Finding the eigenstates of the system and inserting a complete
   set of states between the two (or more) local operators.
\item Evaluating the matrix elements (form factors) of the local
  operators.
\item Summing up the resulting spectral series.
 \end{enumerate}
In generic models these tasks present a fantastic challenge. However,
the situation is quite different in one-dimensional integrable models,
where
there are
exact methods available to compute the exact spectrum and the form
factors.

One class of models where this program has been particularly
successful is the Bethe Ansatz solvable theories related to the $sl(2)$ symmetric
$R$-matrix, such as the Lieb-Liniger model (also known as the Quantum
Nonlinear Schr\"odinger equation) \cite{Lieb-Liniger,Lieb2} and the
XXX and XXZ spin chains \cite{XXX,XXZ1,XXZ2,XXZ3}. A very powerful
approach is the 
Algebraic Bethe Ansatz, developed by the Leningrad-school
\cite{Faddeev:1979gh,faddeev-history,Faddeev-ABA-intro}, which led to
important results concerning the scalar products of Bethe states
\cite{korepin-norms,slavnov-overlaps} and the form factors
\cite{springerlink:10.1007/BF01029221,kojima-korepin-slavnov}.
Tremendous effort has been devoted to the calculation of correlation
functions as well; we do not attempt here to review the literature, instead
we refer the reader to the book \cite{korepinBook} and the recent
paper \cite{2011arXiv1101.1626K}.

A different class of models are those non-relativistic theories, where
the excitations over a fixed reference state have internal degrees
of freedom, for example the multi-component Non-Linear Schr\"odinger
equation, 
the $sl(N)$ symmetric spin chains,
or the one-dimensional Hubbard model. The spectrum of these models can
be obtained by the so-called nested Bethe Ansatz
\cite{nested-S-1,McGuire-BA,nested-S-2,Yang-nested,Yang-nested-S,sutherland-nested,all-spin-chains,HubbardBook},
the algebraic formulation of which was worked out for the
$sl(N)$-related models in the papers
\cite{Kulish-FZ-algebra,kulish-resh-glN}. Although the nested Bethe
Ansatz is successful in finding the spectrum, the construction of
the eigenstates is rather complicated and there are far fewer results
available than in the $sl(2)$ case. Norms of eigenstates were obtained
 in
\cite{resh-su3,pang-zhao-norms,Hubbard-norms,
tarasov-varchenko1,tarasov-varchenko2} and there are approaches to
calculate the scalar products, see \cite{pakuliak-ragoucy}
and references therein. However, no compact and convenient formulas
have been found yet, which would facilitate the computation of
correlation functions.

In the non-relativistic models mentioned above there are explicit and
exact representations known for the eigenstates of the system; this
allows (at least in principle) for constructive methods to find the form factors and
correlation functions. The situation is different in the realm of
(massive) integrable relativistic QFT's \cite{Mussardo:1992uc}. These
theories are typically investigated in infinite volume, 
the Hilbert-space is spanned by asymptotic scattering states defined
using the Faddeev-Zamolodchikov algebra \cite{zam-zam}. This
construction does not allow the direct determination of form factors
of local operators; an indirect method has been developed instead: the
so-called form factor bootstrap program
\cite{Karowski:1978vz,Berg-Karowski-Weisz,smirnov_ff,Karowski-LSZ,bootstrap-osszfogl,Smirnov-qKZ1,Smirnov-qKZ2}. The
idea is to establish the analytic properties of the form
factors based on the requirement of locality, 
resulting in a closed set of equations also called ``form factor
axioms''. These equations are restrictive enough so that
supplied with a few additional assumptions (possibly depending on the
operator in question) they uniquely determine the form factors. In
massive field theories it is sufficient to obtain explicit expressions
for the form factors with a small number of particles, because they typically
saturate the spectral series for the vacuum correlations
even at small distances \cite{zam_Lee_Yang,balog-weisz-structure}. 

One of the most important form factor axioms is the kinematic pole
(or annihilation pole)
property, which relates $(N,M)$ form factors (matrix elements on an
$N$-particle and an $M$-particle state) to $(N-1,M-1)$ form
factors\footnote{In relativistic field theory the $(N,M)$ form factors
can be expressed in terms of the analytic continuation of the
$(0,N+M)$ form factors using the so-called crossing relation. Then the
kinematic pole is usually expressed in terms of the $(0,N)$ form
factors relating them to the $(0,N-2)$ matrix elements.}. It states
that whenever particle rapidities from the two 
states approach each other the form factor has a simple pole (kinematic
singularity) and the amplitude is given by the form factor with the
corresponding particles not present and a pre-factor depending on the
exact $S$-matrix of the theory \cite{smirnov_ff,Karowski-LSZ}. 
Similar singularity properties were also found in the models related
to the $sl(2)$-symmetric $R$-matrix in the framework of the
Algebraic Bethe Ansatz. The pole structure of the scalar products  of
Bethe states was first established in \cite{korepin-norms}, this led
to the discovery of the celebrated Slavnov-formula
\cite{slavnov-overlaps}, a determinant formula describing the scalar
product of an eigenstate and an arbitrary Bethe state. These
results refer to the finite volume states and they were derived using
a quite general
algebraic construction. Moreover, they were used to determine form
factors of local operators, and in the case when both states are
eigenstates, the singularity properties of the resulting form factors are
found to be essentially the same as in the relativistic case. This has been noted recently in
\cite{nonrelFF}, where it was also shown that a special
non-relativistic and small-coupling limit of the sinh-Gordon model
form factors yields the known matrix elements of the Lieb-Liniger model.

We wish to note that form factors of local and composite non-local
operators in the Lieb-Liniger model were also considered using the
infinite volume Quantum Inverse Scattering Method
\cite{PhysRevD.21.1523,PhysRevD.23.3081}. 
One result of this approach is the so-called quantum Rosales
expansion, which expresses the local field operators using the
non-local Faddeev-Zamolodchikov operators. The Rosales expansion can
be used to read off explicit expressions for the form factors, and to
establish their analytic properties
\cite{creamer-thacker-wilkinson-rossz}, leading to the same
kinematical pole equation (apart from the normalization) as the one
found in \cite{iz-kor-resh,korepinBook,nonrelFF}. 

The kinematical pole axiom also appears in the study of the form factors
of the anti-ferromagnetic spin chain \cite{Pakuliak-XXZ-FF}. In this
case the states involved are excitations above the infinite volume anti-ferromagnetic
ground state, which is already filled with a finite
density of elementary particles. In this respect the situation
considered in \cite{Pakuliak-XXZ-FF} is different than in
\cite{korepin-norms,slavnov-overlaps,nonrelFF}, where the states
in question are elementary excitations over the reference state. 

\bigskip

In this paper we contribute to the calculation
of form factors in the multi-component Nonlinear Schr\"odinger
equation. Inspired by the results of \cite{nonrelFF} we revisit the
methods of relativistic QFT:
 we consider the analytic properties of form
factors, set up recursion relations and make an attempt
to find solutions to them, without trying to find manageable
expressions for scalar 
products or related, more basic quantities. 

In obtaining explicit solutions to the recursion relations we restrict
ourselves to matrix elements on states with only a single
spin-impurity. One of our motivations to investigate this subclass of form
factors is provided by a recent  experiment with ultracold atomic gas 
\cite{impurity-spin-exp-koehl}, where the motion of spin-impurities
was studied in an otherwise polarized background. Our present results can form
the basis for the theoretical investigation of such
situations. Related questions were studied in a very recent article
\cite{zvonarev-impurity}. The paper \cite{zvonarev-impurity}  only considered the
infinite coupling case, whereas our results for the form factors hold
at arbitrary coupling strengths. 

One of the main steps of the present work is the identification
of the (un-normalized) form factors in finite and infinite
volume. This result bears relevance also for integrable
relativistic QFT, where related questions have been investigated
recently \cite{Takacs-Feher1,Takacs-Feher2}. We give a few remarks on this issue in the Conclusions.

The structure of the paper is as follows. In section 2 we consider
the one-component case (the Lieb-Liniger model) and establish a number
of results about the form factors, using only the coordinate Bethe
Ansatz wave functions. Although this section does not
contain new results, it serves as a basis for the generalizations in
later sections.
In section 3 we recall the
construction of the (infinite volume) Bethe Ansatz states in the
multi-component case 
and we establish the properties of the form factors, in particular the
kinematical pole property. Section 4 deals with the two-component
case: the magnonic form factors are introduced, which are the matrix
elements on the nested Bethe Ansatz states. A set of ``magnonic form
factor equations'' is obtained. In section 5 we solve these equations
in a number of simple cases, involving states with a single
spin-impurity. The sections 3-5 are concerned with the infinite
volume situation, the connection to the finite volume nested Bethe
Ansatz states is made in section 6.
Finally, section 7 is devoted to our conclusions.

\section{The Lieb-Liniger model: Coordinate Bethe Ansatz}

\label{LL}

In this section we review the basic facts about the
coordinate Bethe Ansatz solution  of the one-component Bose gas, the
Lieb-Liniger model. The ideas and results 
of this section will be the basis for our investigations of the
multi-component systems in section \ref{multicomponent}, \ref{2c} and \ref{fftcsa1hehe}.

\subsection{The model and the coordinate Bethe Ansatz solution}

The second quantized form of the Hamiltonian is
\begin{equation}
\label{H-LL}
H=\int_{-L/2}^{L/2}\,\mathrm d
x\left(\partial_x\Psi^\dagger\partial_x\Psi+ 
c\Psi^\dagger\Psi^\dagger\Psi\Psi\right).
\end{equation}
Here
$\Psi(x,t)$ and $\Psi^\dagger(x,t)$ are  canonical
non-relativistic Bose
fields satisfying
\begin{equation}
  [\Psi(x,t),\Psi^\dagger(y,t)]=\delta(x-y).
\end{equation}
We used the conventions  $m=1/2$ and $\hbar=1$. 
The first quantized form of the Hamiltonian is
\begin{equation*}
H=-\sum_{j=1}^N\frac{\partial^2}{\partial x_j^2}+2c\sum_{j<l} \delta(x_j-x_l).
\end{equation*}
The parameter $c$ is the
coupling constant; in the present work we only consider the repulsive
case $c>0$.

In \eqref{H-LL} $L$ denotes the size of the system. We will consider
both the infinite volume ($L=\infty$) and 
finite volume cases. In the latter case we always assume  periodic
boundary conditions.

The eigenstates of the Hamiltonian \eqref{H-LL} can be constructed
using the Bethe Ansatz \cite{Lieb-Liniger,Lieb2,korepinBook}.
The $N$-particle coordinate space wave function is given by
\begin{equation}
\label{egyfajta-coo}
  \chi_N(\{x\}_N|\{p\}_N)=\frac{1}{\sqrt{N!}} \sum_{\Pe\in S_N} (-1)^{[\Pe]}
\exp\left\{ i\sum_j x_j (\Pe p)_j \right\} \prod_{j>k}
\Big((\Pe p)_j-(\Pe p)_k-ic\epsilon(x_j-x_k)\Big),
\end{equation}
where $\{p\}_N$ is the set of pseudo-momenta or rapidities, $\epsilon(x)$ is the
sign function, and the $\Pe\in S_N$ are permutations.
The total energy and momentum of the multi-particle state is 
\begin{equation*}
  E_N=\sum_j p_j^2\qquad\qquad P_N=\sum_j p_j.
\end{equation*}

In the infinite volume case the wave function
\eqref{egyfajta-coo} is an eigenstate for arbitrary set of
rapidities. 
Moreover, the Bethe states with real rapidities form a complete set of
states \cite{gaudin-book,opdam-completeness,tracy-widom}. Rapidities
with non-zero imaginary parts are not allowed because they result in
unbounded wave functions.

In the finite volume case periodic
boundary conditions force the quasi-momenta to be solutions 
of the Bethe Ansatz equations:
\begin{equation}
  \label{LL-BA-e}
  e^{ip_jL}\prod_{k\ne j} \frac{p_j-p_k-ic}{p_j-p_k+ic}=1,
  \qquad\qquad
j=1\dots N.
\end{equation}
It is known that in the repulsive case ($c>0$) considered here all
solutions to the Bethe equations are given by real rapidities and they provide
 a complete set of states \cite{YangYang2,dorlas-completeness}.

In finite volume the wave function \eqref{egyfajta-coo} is
normalizable. The norm of the eigenstates is given by \cite{gaudin-LL-norms,korepin-norms}
\begin{equation}
  \label{egyfajta-norm}
\mathcal{N}^{LL}(\{p\}_N)= \int  \ |\chi_N|^2=\prod_{j<k} ((p_j-p_k)^2+c^2)\times
\det \mathcal{G}^{LL}
\end{equation}
with
\begin{equation}
  \mathcal{G}^{LL}_{jk}=\delta_{j,k}\left(
L+\sum_{l=1}^N \varphi(p_j-p_l)\right) -\varphi(p_j-p_k)
\end{equation}
and
\begin{equation*}
  \varphi(u)=\frac{2c}{u^2+c^2}.
\end{equation*}

\subsection{Form Factors in finite and infinite volume}

We are interested in the form factors of the field and density
operators:
\begin{equation*}
  \Psi(0) \qquad\text{and}\qquad \rho(0)=\Psi^\dagger(0)\Psi(0).
\end{equation*}
The finite volume form factors are defined as
\begin{equation}
\label{Fabadef}
\begin{split}
&  \mathbb{F}^\Psi_N(\{p\}_N,\{k\}_{N+1})=\\
&\sqrt{N+1} \int_{-L/2}^{L/2}  dx_1 \dots dx_N\
\chi_N^*(x_1,\dots,x_N|\{p\}_N)\chi_{N+1}(0,x_1,\dots,x_N|\{k\}_{N+1})
\end{split}
\end{equation}

\begin{equation}
\label{Fabadef2}
\begin{split}
&  \mathbb{F}^\rho_N(\{p\}_N,\{k\}_{N})=\\
&N \int_{-L/2}^{L/2} dx_1 \dots dx_{N-1}\
\chi_N^*(0,x_1,\dots,x_{N-1}|\{p\}_N)\chi_{N}(0,x_1,\dots,x_{N-1}|\{k\}_{N}).
\end{split}
\end{equation}

As coordinate space integrals these form factors are well defined for
arbitrary rapidities and they 
depend on the volume through the parameters
\begin{equation*}
  l(k_j)=e^{ik_jL}\qquad\qquad l^*(p_j)=e^{-ip_jL}.
\end{equation*}
The dependence on these parameters was studied thoroughly using
Algebraic Bethe Ansatz \cite{korepin-izergin,korepin-LL1,iz-kor-resh,korepinBook}.
In particular, it was shown that if
the rapidities are solutions to the Bethe equations, and there are no
coinciding rapidities, then the form
factors do not depend on the volume explicitly (apart from possible
overall phase factors). 

Note that the form factors are defined
using the un-normalized wave functions, therefore the actual finite
volume matrix elements can be obtained as
\begin{equation}
\label{fanswer}
  \bra{\{p\}_N}\Psi\ket{\{k\}_{N+1}}=\frac{ \mathbb{F}^\Psi_N(\{p\}_N,\{k\}_{N+1})}
{\sqrt{\mathcal{N}^{LL}(\{p\}_N)\mathcal{N}^{LL}(\{k\}_{N+1})}},
\end{equation}
and similarly for the density operator.

An alternative definition for the form factors can be given in
infinite volume. In this case the real space integrals are not  convergent
due to the oscillating wave functions. However, they can be made
convergent by introducing regulators $f_\eps(x)$ in $x$-space.
We choose
\begin{equation*}
  f_\eps(x)\equiv f(\eps |x|),
\end{equation*}
where $f(x): \valos^+\to \valos^+$ is a smooth function
satisfying
\begin{equation*}
\lim_{x\to 0}   f(x)=1\qquad\qquad \lim_{x\to\infty} f(x)=0.
\end{equation*}
An example is given by $f(x)=e^{-x}$.
It can be shown that for every $p\in \valos\setminus \{0\}$
\begin{equation}
\label{rules}
\lim_{\eps\to 0} \int_{0}^\infty dx\  f_\eps(x) e^{ipx}
=\frac{i}{p}\qquad\qquad
\lim_{\eps\to 0} \int_{-\infty}^0 dx\ f_\eps(x) e^{ipx} =-\frac{i}{p}
\end{equation}
independently of the choice of $f(x)$. Actually \eqref{rules} can be
considered as a well-defined prescription to evaluate the infinite
volume integrals.

Using this prescription the infinite volume form factors are defined
as
\begin{equation}
\label{Finfdef}
\begin{split}
  \mathcal{F}^\Psi_N(\{p\}_N,\{k\}_{N+1})&=
\lim_{\eps\to 0}\sqrt{N+1} \int_{-\infty}^\infty  dx_1 \dots dx_N\
\prod_{j=1}^N f_\eps(x_j) \\
&\hspace{1.5cm}\times \chi_N^*(x_1,\dots,x_N|p)\chi_{N+1}(0,x_1,\dots,x_N|k) 
\end{split}
\end{equation}
\begin{equation}
\begin{split}
\label{Finfdef2}
  \mathcal{F}^\rho_N(\{p\}_N,\{k\}_{N})&=
\lim_{\eps\to 0}
N \int_{-\infty}^\infty dx_1 \dots dx_{N-1}\  \prod_{j=1}^{N-1} f_\eps(x_j)\\
&\hspace{1.5cm}\times  \chi_N^*(0,x_1,\dots,x_{N-1}|p)\chi_{N}(0,x_1,\dots,x_{N-1}|k).
\end{split}
\end{equation}

The connection between the finite volume and infinite volume form
factors is established by the following theorem.

\begin{thm}
\label{fftcsa}
The form factors are the same in finite and infinite volume. In other
words, if both sets $\{p\}$ and $\{k\}$ are solution to the Bethe
equations with a volume parameter $L$ and there are no coinciding rapidities ($p_j\ne k_l$), then 
\begin{equation}
\label{dubmates2a}
   \mathbb{F}^\Psi_N(\{p\}_N,\{k\}_{N+1})=
 \mathcal{F}^\Psi_N(\{p\}_N,\{k\}_{N+1})
\end{equation}
\begin{equation}
\label{dubmates2}
   \mathbb{F}^\rho_N(\{p\}_N,\{k\}_{N})=
 \mathcal{F}^\rho_N(\{p\}_N,\{k\}_{N}).
\end{equation}
\end{thm}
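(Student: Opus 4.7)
The plan is to evaluate both form factors directly from the coordinate Bethe Ansatz wave function \eqref{egyfajta-coo} and reduce the equality to a statement about boundary contributions that are controlled by the Bethe equations \eqref{LL-BA-e}. First, I would decompose the $N$-dimensional integration domain in \eqref{Fabadef} and \eqref{Finfdef} into the $(N+1)!$ sectors determined by the orderings of $\{0,x_1,\dots,x_N\}$ on the real line. Inside each sector every sign function $\epsilon(x_j-x_k)$ freezes to a constant, so the integrand $\chi_N^{\ast}\chi_{N+1}$ becomes a finite double sum over permutations $(P,Q)\in S_N\times S_{N+1}$ of pure plane waves $\exp\{i\sum_j \alpha_j(P,Q)\,x_j\}$ with $x$-independent coefficients that depend only on the rapidities and on the sector.

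The integrations are then elementary. In finite volume, integrating a plane wave over $[-L/2,L/2]$ produces, for each variable $x_j$, boundary contributions of the form $\pm e^{\pm i\alpha_j L/2}/(i\alpha_j)$ at $x_j=\pm L/2$; the assumption $p_j\neq k_l$ guarantees $\alpha_j\neq 0$ and legitimises these elementary integrals. In infinite volume, the prescription \eqref{rules} suppresses the analogous boundary contributions at $\pm\infty$. Subtracting, $\mathbb{F}^\Psi_N-\mathcal{F}^\Psi_N$ collects exactly the finite-volume boundary terms at $x_j=\pm L/2$. I would then regroup these terms by which variable is pushed to which edge, and, after an elementary shift of integration variable that returns the boundary point to the bulk, uncover an overall factor $e^{ik_j L}$ whenever a $+L/2$ boundary on a $k$-carrying exponential is involved, and $e^{-ip_j L}$ in the mirror situation with $p$-carrying exponentials.

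At this point I would invoke the Bethe equations \eqref{LL-BA-e} to replace each $e^{ik_jL}$ and $e^{-ip_jL}$ by the corresponding product of two-body scattering factors $(q-q'-ic)/(q-q'+ic)$. The key combinatorial claim is that, after this substitution, each boundary group coincides up to a sign with a specific bulk contribution coming from the adjacent sector in the decomposition, the match being enforced by the antisymmetrisation over $S_N$, $S_{N+1}$ together with the scattering pre-factors $\prod_{j>k}((Pp)_j-(Pp)_k-ic\,\epsilon)$ in \eqref{egyfajta-coo}. Pairwise cancellation then gives \eqref{dubmates2a}. The density statement \eqref{dubmates2} is obtained by the same argument with one particle pinned at the origin in both wave functions; the combinatorics differs only through an extra $(N+1)$-th row of sectors around $x=0$ that does not affect the boundary/Bethe-equation mechanism.

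The main obstacle is the last combinatorial step: tracking the cumulative signs from the $\epsilon(x_j-x_k)$'s and the scattering pre-factors across adjacent sectors while applying the Bethe equations. A practical approach is to verify the cancellation pattern explicitly for $N=1$ and $N=2$ (where a single Bethe equation does the job) and then proceed by induction on $N$; the kinematical pole structure of $\chi_N$, which is available already at the level of coordinate-space wave functions, provides a natural inductive reduction to the $(N-1)$-particle form factors, the $N=0$ case being trivial.
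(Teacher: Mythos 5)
Your proposal is correct and follows essentially the same route as the paper's proof: decompose the integration domain according to the orderings of $\{0,x_1,\dots,x_N\}$, evaluate the elementary plane-wave integrals, observe that the Newton--Leibniz contributions at $x=0$ coincide exactly with the regularized infinite-volume values from \eqref{rules}, and cancel the remaining boundary terms at $x_j=\pm L/2$ by means of the Bethe equations. The only real difference is presentational: where you propose to track the $\pm L/2$ cancellations permutation by permutation after substituting \eqref{LL-BA-e} (and flag this bookkeeping as the main obstacle, offering an inductive fallback), the paper sidesteps the combinatorics by noting that these boundary terms reassemble into values of the full Bethe wave functions at $\pm L/2$, which cancel wholesale by periodicity --- the same mechanism, invoked once at the level of the wave function rather than term by term.
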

\begin{proof}
For simplicity we only consider the field operator and the case
$N=1$. Then we have to prove the equation 
  \begin{equation}
\label{dubmates}
    \lim_{\eps\to 0}\int_{-\infty}^\infty dx \ f_\eps(x) \
\chi_1^*(0|p)\ \chi_2(0,x|k_0,k_1)=
\int_{-L/2}^{L/2} dx \  \
\chi_1^*(0|p)\ \chi_2(0,x|k_0,k_1).
\end{equation}
The integrand consists of sums of free wave functions with certain amplitudes. Due to the
insertion of the field operator the amplitudes depend on the sign of
$x$, therefore the
integrals have to be split into
two parts:
\begin{equation*}
 \int_{-\infty}^{\infty}=\int_{-\infty}^{0}+\int_{0}^{\infty}\qquad\text{and}\qquad
  \int_{-L/2}^{L/2}=\int_{-L/2}^{0}+\int_{0}^{L/2}.
\end{equation*}
The integral of an exponential function can be evaluated in the
infinite volume case by \eqref{rules} whereas in the finite volume
case it is given by the
Newton-Leibniz formula. The essential step to prove \eqref{dubmates}
is to note that in the finite volume case those terms of the
Newton-Leibniz formula which 
represent the contributions at $x=0$ exactly coincide with the
corresponding contributions of the infinite volume case.
On the other hand, the contributions of the Newton-Leibniz formula
corresponding to $-L/2$ and $L/2$ cancel each other owing to the
periodicity of the Bethe wave function. Therefore, the two sides of
\eqref{dubmates} are indeed equal.

Similar arguments can be given in the case of higher particle form
factors, and also for the form factors of the density operator.
\end{proof}

It is clear from this derivation that the case of coinciding
rapidities, in particular the problem of expectation values  has to be
treated separately. Whenever there are coinciding rapidities the infinite 
volume FF becomes divergent, the finite volume FF remains finite and
can be expressed using the properly defined limits of the infinite
volume ones \cite{fftcsa2}. In the framework of Algebraic Bethe Ansatz such relations
were established for certain non-local operators related to
correlation functions in
\cite{korepin-izergin,korepin-LL1,iz-kor-resh}, whereas local operators
describing higher-body local correlations were considered recently  in
\cite{LM-sajat}. 
In this work we will only consider the case of non-coinciding
rapidities. 

\subsection{Important properties of the form factors}

\label{FFproperties}

The coordinate Bethe Ansatz wave functions \eqref{egyfajta-coo} are
completely anti-symmetric with respect to an exchange of two
rapidities. Therefore
\begin{equation*}
  \mathcal{F}_N^\Psi(p_1,\dots,p_N|k_0,\dots,k_j,k_{j+1},\dots,k_N)=
-\mathcal{F}_N^\Psi(p_1,\dots,p_N|k_0,\dots,k_{j+1},k_{j},\dots,k_N)
\end{equation*}
\begin{equation*}
  \mathcal{F}_N^\Psi(p_1,\dots,p_j,p_{j+1},\dots,p_N|k_0,\dots,k_N)=
-\mathcal{F}_N^\Psi(p_1,\dots,p_{j+1},p_{j},\dots,p_N|k_0,\dots,k_N),
\end{equation*}
and similarly for the density operator.

The form factors have kinematical pole singularities whenever $p_j\to
k_l$ for some $j,l$. The residue of the pole is given by Theorem
\ref{hukko} below. However, before establishing the theorem we need the
following lemma:

\begin{thmlem}
\label{grills}
  Let $D_j\subset \valos^{N-1}$, $j=0,\dots,N-1$ be the region
  \begin{equation*}
    x_1 < x_2 < \dots < x_j < 0 < x_{j+1} < \dots < x_{N-1}.
  \end{equation*}
Then integrating over this region we obtain
\begin{equation}
  \label{whitebear}
\lim_{\eps\to 0}  \int_{D_j} e^{i\sum_{l=1}^{N-1}
  p_lx_l}\prod_{l=1}^{N-1}f_\eps(x_l) =
\prod_{l=1}^{j}   \frac{-i}{\sum_{m=1}^l p_m}
\prod_{l=j+1}^{N-1} \frac{i}{\sum_{m=l}^{N-1} p_m}.
\end{equation}
\end{thmlem}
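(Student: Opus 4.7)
The plan is to reduce \eqref{whitebear} to a product of one-dimensional integrals of the type handled by \eqref{rules}, via a change of variables to consecutive gap coordinates.

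First I would observe that $D_j$ is the Cartesian product of the ``left'' simplex $\{x_1<\dots<x_j<0\}$ and the ``right'' simplex $\{0<x_{j+1}<\dots<x_{N-1}\}$. Because both the exponential and the regulator $\prod_l f_\eps(x_l)$ split across these two blocks of variables, the integral factorizes as $I^-\cdot I^+$, and it suffices to evaluate each factor separately.

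On the positive side I would change to gaps: $t_{j+1}=x_{j+1}$ and $t_l=x_l-x_{l-1}$ for $l>j+1$. The Jacobian is unity and the simplex maps bijectively onto $\valos_+^{N-1-j}$. Since $x_l=t_{j+1}+\cdots+t_l$, summation by parts yields
\[
\sum_{l=j+1}^{N-1} p_l x_l = \sum_{m=j+1}^{N-1} t_m\, Q_m, \qquad Q_m=\sum_{l=m}^{N-1} p_l.
\]
An analogous substitution $s_j=-x_j$, $s_l=x_{l+1}-x_l$ on the negative side produces
\[
\sum_{l=1}^{j} p_l x_l = -\sum_{m=1}^{j} s_m\, P_m, \qquad P_m=\sum_{l=1}^{m} p_l.
\]
At this point the oscillating exponent has fully decoupled across the new variables, and both regions are orthants.

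Using the particular admissible regulator $f(x)=e^{-x}$, the product $\prod_l f_\eps(x_l)$ also factorizes in the gap coordinates: a short index count gives $\sum_l|x_l|=\sum_m m\,s_m$ on the left and $\sum_l x_l=\sum_m(N-m)\,t_m$ on the right. The integral then becomes a product of one-variable integrals $\int_0^\infty e^{iAt-\eps Bt}\,dt=(\eps B-iA)^{-1}$ whose $\eps\to 0$ limits are exactly the $\pm i/(\cdot)$ factors appearing on the right-hand side of \eqref{whitebear}; bookkeeping of the signs and index ranges matches the claimed formula.

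The only delicate point is that for a generic admissible $f$ the regulator does not factorize in the gap variables, so the above computation has to be supplemented by a $f$-independence argument. I would dispatch this in the same spirit as \eqref{rules}: for each $\eps>0$ the integral is absolutely convergent, and an integration-by-parts comparison of an arbitrary $f$ with $e^{-\eps|x|}$ on each half-line shows that the $\eps\to 0$ limit is universal, so the value computed with $f(x)=e^{-x}$ coincides with the limit for every admissible $f$. I expect this to be the main, though mild, obstacle; the remainder is a straightforward Fubini-and-integrate calculation.
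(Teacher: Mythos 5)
Your proposal is correct, but it proceeds along a genuinely different route from the paper. The paper's proof is a one-line induction: integrate first over the extremal coordinate ($x_1$ or $x_{N-1}$), apply the one-dimensional rule \eqref{rules} (with a shifted endpoint, which produces the exponential $e^{ip\,x_{l}}$ that shifts the neighbouring momentum into a partial sum), and recurse; this handles an arbitrary admissible regulator $f$ directly at every step, since each step is exactly a one-variable integral of the type \eqref{rules}. You instead factorize $D_j$ into the two simplices, pass to gap coordinates with unit Jacobian, and evaluate everything in closed form — your bookkeeping is right: on the right simplex $\sum_l p_l x_l=\sum_m t_m Q_m$ with $Q_m=\sum_{l\ge m}p_l$ and $\sum_l x_l=\sum_m (N-m)t_m$, giving $\prod_m \bigl(\eps(N-m)-iQ_m\bigr)^{-1}\to\prod_m i/Q_m$, and symmetrically $\prod_m(-i)/P_m$ on the left, matching \eqref{whitebear}. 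What your route buys is transparency: the left/right product structure of \eqref{whitebear} is manifest from the factorization, and for the exponential regulator you even get an exact finite-$\eps$ formula. What it costs is the extra universality step you correctly flag: for generic admissible $f$ the regulator $\prod_l f(\eps|x_l|)$ does \emph{not} decouple in the gap variables, so you must separately argue $f$-independence of the $\eps\to 0$ limit. Your sketched comparison argument is plausible but is the least developed part of the proposal; note that the cleanest way to establish that universality is precisely the paper's inductive reduction to \eqref{rules}, so in a fully rigorous write-up your method would likely invoke the paper's argument as a lemma (or some equivalent dominated-convergence statement, valid whenever all partial sums $\sum_{m\le l}p_m$, $\sum_{m\ge l}p_m$ are nonzero, an assumption implicit in the statement). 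Given that the paper's own proof is itself only a sketch, your proposal meets and arguably exceeds its level of detail.
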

\begin{proof}
  The lemma is proven easily by induction, i.e. by
performing first the integral over
$x_1$ or $x_{N-1}$.
\end{proof}

\begin{thm}
\label{hukko}
Let $p_N\to k_N$. Then the behaviour of $\mathcal{F}^\Psi_N$  is given
by
\begin{equation}
\label{interloper}
\begin{split}
&\mathcal{F}^\Psi_N(p_1,\dots,p_N|k_0,\dots,k_N)\sim \\
&\frac{i}{k_N-p_N}\left(
\prod_{j=1}^{N-1} (p_{Nj}+ic)  \prod_{j=1}^{N} (k_{Nj}-ic)
-
\prod_{j=1}^{N-1} (p_{Nj}-ic)  \prod_{j=1}^{N} (k_{Nj}+ic)   
\right)\times \\
&\hspace{6cm}\mathcal{F}^\Psi_{N-1}(p_1,\dots,p_{N-1}|k_0,\dots,k_{N-1}).
\end{split}
\end{equation}
\end{thm}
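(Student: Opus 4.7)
The plan is to locate and extract the singular part of $\mathcal{F}^\Psi_N$ at $p_N = k_N$ directly from the integral representation \eqref{Finfdef}. By Lemma~\ref{grills} the regularized $x_j$-integrations produce only denominators that are partial sums of the exponential coefficients, so a simple pole in $k_N - p_N$ can be generated only by the $x_N$-integration on one of the two semi-infinite slabs $x_N > \max\{0, x_1, \ldots, x_{N-1}\}$ or $x_N < \min\{0, x_1, \ldots, x_{N-1}\}$, and only from terms of the wave-function expansions whose exponent in $x_N$ equals $\pm(k_N - p_N)$. I would first isolate these slabs and perform the $x_N$-integration via \eqref{rules}, then recognize the remaining $(N-1)$-dimensional regularized integral as the one defining $\mathcal{F}^\Psi_{N-1}$.

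On the positive slab every $\epsilon(x_N - x_j)$ with $j<N$ and $\epsilon(x_N - 0)$ is frozen to $+1$, so for each pair of permutations $(\Pe, \mathcal{Q}) \in S_N \times S_{N+1}$ the $x_N$-dependent factors in \eqref{egyfajta-coo} factor out of the sum as $\prod_{j<N}(p_{\Pe(N)} - p_{\Pe(j)} - ic)$ and $\prod_{j\neq \mathcal{Q}(N)}(k_{\mathcal{Q}(N)} - k_{\mathcal{Q}(j)} - ic)$. The $x_N$-exponent equals $(k_N - p_N) x_N$ exactly when $\Pe(N) = N$ and $\mathcal{Q}(N) = N$; restricting to this class and integrating by \eqref{rules} produces the factor $i/(k_N - p_N)$ multiplied by the products of $(p_N - p_j - ic)$ and $(k_N - k_j - ic)$ factors displayed in \eqref{interloper}. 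The remaining sum over permutations of $\{1,\ldots,N-1\}$ and the regularized $(x_1,\ldots,x_{N-1})$-integration, once the normalization prefactors $1/\sqrt{N!}$, $1/\sqrt{(N+1)!}$, and the $\sqrt{N+1}$ of \eqref{Finfdef} are combined, reproduce exactly the integrand of $\mathcal{F}^\Psi_{N-1}(p_1,\ldots,p_{N-1}|k_0,\ldots,k_{N-1})$.

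The negative slab is handled symmetrically: every $\epsilon(x_N - x_j)$ becomes $-1$, so the frozen products turn into $\prod(p_N - p_j + ic)$ and $\prod(k_N - k_j + ic)$, while \eqref{rules} now yields $-i/(k_N - p_N)$. Adding the two slab contributions produces the bracketed difference in \eqref{interloper} multiplied by $\mathcal{F}^\Psi_{N-1}$. The main obstacle is combinatorial bookkeeping: one must verify (i) that no permutation pair outside the class $\Pe(N) = \mathcal{Q}(N) = N$ contributes to the residue — in such a pair the coefficient of $x_N$ is a difference $k_{\mathcal{Q}(N)} - p_{\Pe(N)} \neq k_N - p_N$ and Lemma~\ref{grills} cannot supply a denominator vanishing on $p_N = k_N$ for generic rapidities — and (ii) that orderings in which $x_N$ is sandwiched between the other coordinates are regular at $p_N = k_N$, which again follows from Lemma~\ref{grills} because an interior $x_N$ enters only via partial sums mixing several other rapidities. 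The antisymmetry of the Bethe wave functions is essential to match the combinatorial factors in point (i) exactly.
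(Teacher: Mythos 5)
Your overall strategy is the paper's own: use Lemma~\ref{grills} and the prescription \eqref{rules} to argue that the pole can only come from the integration over an extremal coordinate carrying the exponent $(k_N-p_N)$, freeze the sign functions on the two semi-infinite slabs to obtain the two products of $(\cdot\pm ic)$ factors, and recognize the leftover regularized integral as $\mathcal{F}^\Psi_{N-1}$. However, your combinatorial claims (i) and (ii) contain a genuine gap: they do not exhaust the singular contributions. Since the Bethe wave functions \eqref{egyfajta-coo} are \emph{symmetric} in the coordinates, the pair $(p_N,k_N)$ can be attached to any coordinate $x_j$, $j=1,\dots,N$, not just to $x_N$; and by Lemma~\ref{grills} the region where \emph{that} $x_j$ is the largest (or smallest) of all coordinates also produces the denominator $k_N-p_N$, because the extremal coordinate's denominator is its own exponent coefficient. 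Your claim (ii) — that orderings with $x_N$ in the interior are regular — holds only within the class $\Pe(N)=\mathcal{Q}(N)=N$; for permutation pairs putting $p_N$ and $k_N$ on, say, $x_1$, the region with $x_1$ extremal and $x_N$ interior is singular and lies outside your two slabs. The paper's proof is careful on exactly this point: it allows both rapidities to be coupled to the same $x_j$ for \emph{any} $j$ and sums over all regions where that $x_j$ is extremal.

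The omission is not cosmetic, because it breaks your normalization bookkeeping by a factor of $N$. Combining the prefactors, $\mathcal{F}^\Psi_N$ carries $\sqrt{N+1}\,/\bigl(\sqrt{N!}\sqrt{(N+1)!}\bigr)=1/N!$ in front of the double permutation sum, while $\mathcal{F}^\Psi_{N-1}$ carries $\sqrt{N}\,/\bigl(\sqrt{(N-1)!}\sqrt{N!}\bigr)=1/(N-1)!$; your single class $\Pe(N)=\mathcal{Q}(N)=N$ therefore supplies only $\tfrac{1}{N}$ of the claimed residue, and the remaining $N-1$ equal contributions — obtained by relabeling the integration variables $x_j\leftrightarrow x_N$, which is legitimate precisely because the integrand is symmetric — are needed to convert $1/N!$ into $1/(N-1)!$ and reproduce $\mathcal{F}^\Psi_{N-1}$ with unit coefficient. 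At $N=1$ the two slabs do exhaust the real line, which is presumably why the slip is invisible in the simplest check. A smaller, separate slip: on the positive slab the bra wave function enters complex conjugated, so its frozen factors are $\overline{(p_{Nj}-ic)}=(p_{Nj}+ic)$, and $+i/(k_N-p_N)$ pairs with $\prod_j(p_{Nj}+ic)\prod_j(k_{Nj}-ic)$; your stated pairing of $(p_{Nj}-ic)$ with $(k_{Nj}-ic)$ matches neither term of \eqref{interloper}.
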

\begin{proof}
The form factor is given by a sum over two sets of permutations
and a sum over the different regions. 
 The pole in $p_N-k_N$ only
appears for those permutations when both rapidities are coupled to the same $x_j$. Moreover
it follows from Lemma \ref{grills}
 that the singularity only appears in those regions,
where either $x_j$ is larger than all other coordinates 
(including $x=0$ for the position of the field operator), or if $x_j$
is smaller than all other coordinates. In these cases the integral over
$x_j$ yields (assuming that the second largest or second smallest
coordinate is $x_l$)
\begin{equation*}
\begin{split}
& \lim_{\eps\to 0} \int_{x_l}^\infty dx_j\ f_\eps(x_j) e^{i
  (k_N-p_N)x_j}\quad \to\quad \frac{-1}{i(k_N-p_N)} 
  e^{i(k_N-p_N)x_l}\quad \sim \quad
\frac{-1}{i(k_N-p_N)}\\
& \lim_{\eps\to 0}   \int^{x_l}_{-\infty} dx_j\  f_\eps(x_j) e^{i
  (k_N-p_N)x_j}\quad \to\quad \frac{1}{i(k_N-p_N)} 
  e^{i(k_N-p_N)x_l}\quad \sim \quad
\frac{1}{i(k_N-p_N)}.\\
\end{split}
\end{equation*}
Collecting all these terms and adding the proper pre-factors which
arise due to the rapidities $p_N$ and $k_N$ according to
\eqref{egyfajta-coo} we obtain \eqref{interloper}.
\end{proof}

There is an analogous relation for the density operator.
The form factors of both operators
 have the structure
\begin{equation}
\label{rmbreturns}
  \mathcal{F}_N(\{p\},\{k\})=
\prod_{j<l} (k_j-k_l)\prod_{j>l}
 (p_l-p_j) \prod_{j,l} \frac{1}{p_j-k_l}\times P_N(\{p\}|\{k\}),
\end{equation}
where $P_N(\{p\}|\{k\})$ are polynomials symmetric in both sets of
variables. The degrees of the polynomials in their variables are
established by the following lemmas.

\begin{thmlem}
  \label{field-asy}
 The asymptotic behaviour of the field operator form factor is
 $p_1^{N-3}$ at $p_1\to\infty$ and $k_0^{N-1}$ at $k_0\to\infty$.
\end{thmlem}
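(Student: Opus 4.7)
The plan is to extract the leading powers of $p_1$ (resp.\ $k_0$) directly from the Bethe wave functions in the integral representation \eqref{Finfdef} and then to analyse the remaining Fourier-type integrals via the regulator prescription \eqref{rules}.

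For the limit $p_1\to\infty$, I would first group the permutations in $\chi_N^*$ by the coordinate $x_j$ to which $p_1$ is attached. Each of the $N-1$ polynomial factors in \eqref{egyfajta-coo} that couples $p_1$ to the remaining rapidities contributes a factor $\pm p_1$; the $j$-dependent signs conspire with the signature of the permutation to give an overall $(-1)^{N-1}$ independent of $j$, yielding
\begin{equation*}
\chi_N^*(x|p)=\frac{(-1)^{N-1}}{\sqrt{N}}\,p_1^{N-1}\sum_{j=1}^N e^{-ip_1 x_j}\,\chi_{N-1}^*(x_{\hat j}|p_2,\dots,p_N)+O(p_1^{N-2}),
\end{equation*}
where $x_{\hat j}=(x_1,\dots,x_{j-1},x_{j+1},\dots,x_N)$. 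Substituting into \eqref{Finfdef} reduces each summand to an integral $\int dx_j\,f_\eps(x_j)\,e^{-ip_1 x_j}\,G_j(x_j)$, where $G_j$ is the function obtained after integrating out the remaining $N-1$ coordinates. I would then argue that $G_j$ is continuous at $x_j=0$ but has a jump in its first derivative there, inherited (after the $x_l$ integrations) from the collision of the $\epsilon(x_j)$ factor in $\chi_{N+1}(0,x|k)$ with the $\epsilon(x_l-x_j)$ factors of both wave functions. Two integrations by parts, together with \eqref{rules}, then give decay $O(p_1^{-2})$, so the form factor is $O(p_1^{N-3})$.

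For $k_0\to\infty$ the scheme is analogous. Grouping the $(N+1)!$ permutations of $\chi_{N+1}(0,x|k)$ by the position of $k_0$ and isolating the leading power yields
\begin{equation*}
\chi_{N+1}(0,x|k)=\frac{(-1)^N}{\sqrt{N+1}}\,k_0^N\!\left[\chi_N(x_1,\dots,x_N|k')+\sum_{j=1}^N e^{ik_0 x_j}\,\chi_N(0,x_{\hat j}|k')\right]+O(k_0^{N-1}),
\end{equation*}
with $k'=(k_1,\dots,k_N)$. The non-oscillating contribution (where $k_0$ sits at $x_0=0$) gives $k_0^N$ times the infinite-volume scalar product $\int\chi_N^*(x|p)\chi_N(x|k')$; since $\{p\}\neq\{k'\}$ by the no-coincidence hypothesis, this scalar product vanishes in the regulated sense. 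The oscillating pieces ($j\geq 1$) decay as $O(k_0^{-2})$ by the same Riemann--Lebesgue analysis as in Part 1, contributing only $O(k_0^{N-2})$. The announced leading behaviour $O(k_0^{N-1})$ therefore comes from the subleading $O(k_0^{N-1})$ polynomial piece of $\chi_{N+1}$, which is generically non-vanishing.

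The main obstacle is the vanishing of $\int\chi_N^*(x|p)\chi_N(x|k')$ for distinct rapidity sets in the regulated infinite-volume sense. For $N=1$ it is a direct consequence of \eqref{rules}. For general $N$, a clean argument requires tracking the cancellations among permutations and coordinate-ordered regions when \eqref{rules} is applied; this looks tractable by induction, iterating the leading-rapidity extraction of Part 1 and reducing the scalar product of $N$-particle states to that of $(N-1)$-particle states.
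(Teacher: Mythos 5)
Your proposal is correct in substance, but the route you take for the $p_1\to\infty$ half is genuinely different from the paper's, and arguably sharper. The paper argues by degree counting: since the leading $p_1^{N-1}$ amplitude of $\chi_N^*$ does not depend on where the $p_1$-particle sits, the integral over its coordinate extends over the whole real line and cancels by \eqref{rules} ($i/p_1-i/p_1=0$); the bound $p_1^{N-3}$ is then obtained as (subleading amplitude $p_1^{N-2}$) times (maximal integral degree $p_1^{-1}$). You instead keep the leading term and show its oscillatory integral already decays as $p_1^{-2}$, because the smeared profile $G_j$ is continuous at $x_j=0$ with only a derivative jump there, so two integrations by parts apply. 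Your mechanism is more faithful to what actually happens: the paper's full-line cancellation kills only the part of the integrand whose amplitude is genuinely $x_j$-independent, whereas the $x_j$-dependence of $\chi_{N+1}$ through the $\epsilon(x_j)$ and $\epsilon(x_j-x_l)$ factors survives and leaves a nonzero $O(p_1^{N-3})$ contribution from the \emph{leading} amplitude itself. The case $N=1$, where $\mathcal{F}^\Psi_1(p|k_0,k_1)=2c(k_0-k_1)/((k_0-p)(k_1-p))=O(p^{-2})$, illustrates this: there the paper's ``subleading amplitude'' is empty ($N-2=-1$) and the whole decay comes from exactly your continuity argument, while your sign bookkeeping ($(-1)^{N-m}$ from the factors $(\Pe p)_j-p_1$ combining with $(-1)^{m-1}$ from the signature to give a uniform $(-1)^{N-1}$) is correct and confirms your extraction formula. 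Two points to tighten: first, you never explicitly dispose of the $O(p_1^{N-2})$ remainder in your expansion of $\chi_N^*$ — it still carries the $e^{-ip_1x_j}$ oscillation, so a single application of \eqref{rules} supplies the extra $p_1^{-1}$ needed for $O(p_1^{N-3})$, but this should be said. Second, for $k_0\to\infty$ your argument coincides with the paper's (the would-be $k_0^N$ term is killed by orthogonality of the regulated scalar product for distinct rapidity sets), and the gap you honestly flag — proving that this regulated scalar product vanishes — is present in the paper as well, which disposes of it in one sentence; your proposed induction via the leading-rapidity extraction is a reasonable way to close it, and your extra care with the oscillating $j\ge 1$ pieces at order $k_0^N$ (which the paper silently drops, and which decay as $k_0^{N-2}$ by your Riemann--Lebesgue analysis since the profile has no cusp at $x_j=0$ at leading order) is warranted. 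Since the Lemma is used only to bound the degree of the polynomial $P_N$ in \eqref{rmbreturns}, the upper bounds you establish, together with the generic non-vanishing you note, fully serve the purpose.
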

\begin{proof}
  At $p_1\to\infty$ the overall degree of the wave function is
  $p_1^{N-1}$. The leading term factorizes: the amplitude does not
  depend on the position of the particle with rapidity $p_1$.
 Therefore the integral over the coordinate attached to
  $p_1$ can be performed over the whole real line and the
  regularization scheme yields
  \begin{equation*}
\lim_{\eps\to 0}    \int_\valos f_\eps(x_1) e^{ip_1x_1}=
\lim_{\eps\to 0}  \left(  \int_{\valos^+} f_\eps(x_1)
  e^{ip_1x_1}+\int_{\valos^-} f_\eps(x_1) e^{ip_1x_1} \right)=
\frac{i}{p_1}+\frac{-i}{p_1}=
0.
  \end{equation*}
Therefore the overall degree of the form factor is determined by the sub-leading terms of
order $p_1^{N-2}$. The highest possible degree of the coordinate space
integrals is $p_1^{-1}$, resulting in an overall degree of $p_1^{N-3}$. 

At $k_0\to\infty$ the leading term in the real space integral for the
form factor is given by those terms where $k_0$ is attached to the
coordinate $x_0=0$. To leading order the reamining wave function is
proportional to a
Bethe Ansatz state with the rapidities $\{k_1,\dots,k_N\}$ and the
integrals yield the scalar product
$\skalarszorzat{p_1,\dots,p_N}{k_1,\dots,k_N}$. It is assumed that the
rapidities are different therefore this scalar product vanishes. The
sub-leading terms in the wave function then yield an overall degree of
$k_0^{N-1}$. 
\end{proof}

It follows from the Lemma above and the factorization
\eqref{rmbreturns} that
in the case of the field operator $P_N$ is of degree $N-1$
in all of the $p_j$ and the $k_j$ variables.
Therefore the recursion relations \eqref{interloper} 
contain enough
information which completely determine the form factors; the
polynomial $P_N$ can be reconstructed using 
the Lagrange interpolation procedure \cite{iz-kor-resh}.

In the case of the density operator the polynomial
$P_N(\{p\}|\{k\})$, which is symmetric with respect to the exchange
$\{p\}\leftrightarrow\{k\}$, has the following asymptotic behaviour:
\begin{thmlem}
\label{density-asy}
  The density form factor satisfies the asymptotics
  \begin{equation}
    \label{density-asymptotics}
\lim_{p_1\to\infty} \frac{F_N^\rho(p_1\dots p_N|k_1\dots k_N)}{p_1^{N-1}}=
(-1)^{N-1}F_{N-1}^\Psi(p_2\dots p_N|k_1\dots k_N).
  \end{equation}
\end{thmlem}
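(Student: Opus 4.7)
The plan is to extract the $p_1^{N-1}$ asymptotics of $\chi_N^*(0,x_1,\dots,x_{N-1}|p_1,\dots,p_N)$ and show that only one term in the resulting sum contributes to the leading order after integration. I parametrize the permutations defining $\chi_N$ by the slot $\alpha\in\{1,\dots,N\}$ to which $p_1$ is assigned, with the convention $y_1=0$ and $y_j=x_{j-1}$ for $j\geq 2$. Among the prefactors $\prod_{j>k}((\Pe p)_j-(\Pe p)_k-ic\epsilon(y_j-y_k))$ exactly $N-1$ involve slot $\alpha$, and at leading order each equals $\pm p_1$ independently of the sign function (since $ic\epsilon$ is subdominant). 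Tracking the signs from this count together with the permutation parity, one arrives at
\[
\chi_N^*(0,x_1,\dots,x_{N-1}|p_1,\dots,p_N)=\frac{(-1)^{N-1}p_1^{N-1}}{\sqrt{N}}\sum_{\alpha=1}^{N}e^{-ip_1 y_\alpha}\chi_{N-1}^*(\hat y_\alpha|p_2,\dots,p_N)+o(p_1^{N-1}),
\]
where $\hat y_\alpha$ denotes the list of positions with $y_\alpha$ removed.

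I then separate the contributions by the value of $\alpha$. For $\alpha=1$ the phase $e^{-ip_1\cdot 0}=1$ makes the leading integrand $p_1$-independent, and its integral against $\chi_N(0,x_1,\dots,x_{N-1}|k_1,\dots,k_N)$ is exactly $\mathcal{F}^\Psi_{N-1}(p_2,\dots,p_N|k_1,\dots,k_N)/\sqrt{N}$ by the definition \eqref{Finfdef}. Combining with the overall prefactor $N$ from \eqref{Finfdef2} and the leading factor $(-1)^{N-1}/\sqrt{N}$, the combinatorial factors collapse to $1$ and this channel produces exactly $(-1)^{N-1}p_1^{N-1}\mathcal{F}^\Psi_{N-1}$, the claimed right-hand side.

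For $\alpha>1$ the phase $e^{-ip_1 x_{\alpha-1}}$ oscillates in the integrated coordinate. I would decompose the $x$-domain into the finitely many regions where all sign functions $\epsilon(y_j-y_k)$ are constant; in each such region $\chi_N(0,x_1,\dots,x_{N-1}|k)$ reduces to a sum of plane waves $e^{iq x_{\alpha-1}}$ with constant amplitudes, and performing the $x_{\alpha-1}$ integral first via \eqref{rules} yields boundary contributions of the form $e^{i(q-p_1)a}/(q-p_1)$, i.e.\ of order $1/p_1$. Combined with the $p_1^{N-1}$ prefactor and the finite integrals over the remaining $N-2$ coordinates, these channels contribute at most at order $p_1^{N-2}$, strictly below the target asymptotic order.

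The main obstacle is the last step: controlling the sum over regions for the $\alpha>1$ channels and verifying that no coherent cancellation upgrades the per-region $O(1/p_1)$ estimate back to $O(1)$. This is the exact same mechanism underlying Lemma~\ref{field-asy}: the sign-independent part of the leading amplitude integrates to zero by the split-integral identity $i/p_1+(-i)/p_1=0$ encoded in \eqref{rules}, so only the $ic\epsilon$-suppressed, sign-dependent piece survives, yielding the $p_1^{N-2}$ bound and completing the argument.
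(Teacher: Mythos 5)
Your proof is correct and follows essentially the same route as the paper: the paper's proof keeps exactly those permutations attaching $p_1$ to the coordinate $0$, reads off the factorization \eqref{ezalimit} (your $\alpha=1$ channel, with the same $(-1)^{N-1}/\sqrt{N}$ factor and the same collapse of combinatorial prefactors against the definitions \eqref{Finfdef}--\eqref{Finfdef2}), and implicitly discards the remaining channels by the mechanism of Lemma~\ref{field-asy}, which you spell out explicitly. The ``obstacle'' you flag in your last paragraph is not one: with $N$ fixed there are finitely many regions, each boundary term is of order $1/p_1$ because $|q-p_1|\to\infty$ for every fixed momentum combination $q$, and a finite sum of $1/p_1$-sized terms cannot recombine into an order-one contribution, so even without invoking the full-line cancellation the $\alpha>1$ channels are already of order $p_1^{N-2}$, which suffices for \eqref{density-asymptotics}.
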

\begin{proof}
 The leading terms in the $p_1\to\infty$ limit are given by those
 permutations where $p_1$ is attached to the coordinate 0, in other
 words it is not integrated over. Concerning these terms the following
 relation can be read off from \eqref{egyfajta-coo}
 \begin{equation}
\label{ezalimit}
   \lim_{p_1\to\infty}
   \frac{\chi_N(0,x_1,\dots,x_{N-1}|p_1,\dots,p_N)}{p_1^{N-1}}
\sim
\frac{(-1)^{N-1}}{\sqrt{N}} \chi_{N-1}(x_1,\dots,x_{N-1}|p_2,\dots,p_N),
 \end{equation}
where the sign $\sim$ indicates that on the l.h.s. only those
permutations $\mathcal{P}\in S_N$ are kept which leave $p_1$ at the
first place. The statement of the theorem then follows directly from
\eqref{ezalimit} and the definitions \eqref{Finfdef}-\eqref{Finfdef2}.
\end{proof}

It follows that $P_N$ is of order $N$ in all of its variables. The
kinematic recursion relations together with the condition
\eqref{density-asymptotics} completely fix the form factor.

\subsection{Determinant formulas for the form factors}

For the sake of completeness we present here explicit determinant
formulas for the form factors, which 
will be the basis for
the generalizations in section \ref{solutions}.

In the case of the field operator the form factor can be expressed as
\cite{kojima-korepin-slavnov,korepin-slavnov} 
\begin{equation*}
  \mathcal{F}_N^\Psi= \prod_{j\ne l} ((k_j-k_l)^2+c^2)
\det(S_{jl}-S_{N+1,l}).
\end{equation*}
Here $S$ is an $N\times N$ matrix defined as
\begin{equation*}
  S_{jl}=t(p_j,k_l)\frac{\prod_{m=1}^N (p_m-k_j+ic)}{\prod_{m=0}^N
    (k_m-k_j+ic)}-
t(k_l,p_j)\frac{\prod_{m=1}^N (k_j-p_m+ic)}{\prod_{m=0}^N
    (k_j-k_m+ic)},
\end{equation*}
with
\begin{equation*}
  t(u)=\frac{-c}{u(u+ic)}
\end{equation*}

The first three examples are given explicitly by
\begin{equation*}
  \mathcal{F}^\Psi_0(k)=1,\
\qquad\qquad
    \mathcal{F}^\Psi_1(p|k_0,k_1)=\frac{2c(k_0-k_1)}{(k_0-p)(k_1-p)},
\end{equation*}
\begin{equation*}
\begin{split}
&   \mathcal{F}^\Psi_2(p_1,p_2|k_0,k_1,k_2)=\frac{-4c^2(k_0-k_1)(k_1-k_2)(k_0-k_2)(p_1-p_2)}
{(k_0-p_1)(k_0-p_2)(k_1-p_1)(k_1-p_2)(k_2-p_1)(k_2-p_2)}\times \\
&\hspace{1cm}\times (c^2 - k_0 k_1 - k_0 k_2 - k_1 k_2 + (k_0+k_1+k_2)(p_1+p_2)
- 3 p_1 p_2).
\end{split}
\end{equation*}

Concerning the density operator, the first determinant formula was
established in 
\cite{springerlink:10.1007/BF01029221}. In the present work we will
use an independent representation, which is easily derived from the
results of \cite{XXZ-to-LL-sajat}:
\footnote{The final form of the determinant formula \eqref{FFK} (and
the generalizations
  \eqref{FFKMSF}  and \eqref{FFKM}) was suggested by
  Jean-S\'ebastien Caux. The result given in \cite{XXZ-to-LL-sajat}
  was expressed as a sum of $N$ determinants, 
whereas the present
  formula is given by a single determinant, making it more convenient
  for numerical calculations.
}
\begin{equation}
  \label{FFK}
  \begin{split}
&    \mathcal{F}^\rho_N(\{p\},\{k\})=\frac{-i}{c} (-1)^{N(N+1)/2}
\prod_{o=1}^N \prod_{l=1}^N (k_o-p_l+ic)\times 
\det V.
  \end{split}
\end{equation}
Here $V$ is an $(N+1)\times (N+1)$ matrix with entries
\begin{equation}
  \begin{split}
 V_{jl}&= \tilde t(k_j,p_l)+ \tilde t(p_l,k_j)
\prod_{o=1}^N
\frac{(p_l-k_o+ic)(p_l-p_o-ic)}{(p_l-k_o-ic)(p_l-p_o+ic)} \qquad \qquad  j,l=1\dots N\\
V_{N+1,j}&=\prod_{o=1}^N \frac{p_o-p_j+ic}{k_o-p_j+ic} \qquad\text{and}\quad
V_{j,N+1}=1,\qquad \qquad  j=1\dots N\\
V_{N+1,N+1}&=0,
  \end{split}
\end{equation}
and
\begin{equation*}
\tilde t(u)=\frac{-i}{u(u+ic)}.
\end{equation*}
We wish to note that \eqref{FFK} can be written alternatively as an $N\times
N$ determinant, but it is useful to keep this form, which makes it possible
to find generalizations in section \ref{solutions}.

The first two cases are given explicitly as
\begin{equation*}
\begin{split}
  \mathcal{F}_1^\rho(p|k)&=1\\
 \mathcal{F}_2^\rho(p_1,p_2|k_1,k_2)&=
\frac{-2c (k_1+k_2-p_1-p_2)^2 (k_1-k_2)(p_1-p_2)}{(k_1-p_1)(k_1-p_2)(k_2-p_1)(k_2-p_2)}.
\end{split}
\end{equation*}

\subsection{An alternative representation for the form  factors}

The determinant formulas of the previous subsection 
are very convenient for both analytical and numerical analysis of the
correlation functions. However, it is possible to find alternative
representations, which might not seem as useful at first sight, but
which might give clues for the calculation of form factors in nested
Bethe Ansatz systems. 

One such representation can be derived using
results of the form factor bootstrap program of relativistic
integrable QFT's \cite{bootstrap-osszfogl}. The papers \cite{Babujian:2001xn,Babujian:2002fi} 
considered the form factors of breathers (in particular the
lowest-lying breathers) in the sine-Gordon model, and
they arrived at formulas, which give the form factors with a total
number of $N$ particles as a sum of $2^N$ terms. 
From this result it
is possible to derive formulas for 
the Lieb-Liniger model, first performing an analytic continuation in
the coupling constant to
get the form factors of the sinh-Gordon model, and then 
 using a non-relativistic and small-coupling limit as explained in
 \cite{sinhG-LL1,sinhG-LL2,nonrelFF}.  

In the case of the field operator we obtain the formula
\begin{equation}
\label{finally}
\begin{split}
   \mathcal{F}^{\Psi}_N(\{p\}_N|\{k\}_{N+1})=\frac{P}{2c^{N}}
\prod_{j,k} \frac{1}{k_j-p_l}.
\end{split}
\end{equation}
Here the polynomial $P$ is given by
\begin{equation}
\begin{split}
 P=&\mathop{\sum_{\alpha_j=0,1}}
\sum_{\beta_l=0,1}  (-1)^{\sum_j \alpha_j+\sum_l \beta_l }
\prod_{i1<i2} (p_{i1}-p_{i_2}+(\alpha_{i_1}-\alpha_{i_2})ic)\times \\
&\times \prod_{i1<i2} (k_{i1}-k_{i_2}+(\beta_{i_1}-\beta_{i_2})ic)
\prod_{i1,i2} (p_{i1}-k_{i_2}-(\alpha_{i_1}-\beta_{i_2})ic)  \times\\
&\times 
\left(  \sum_{j=1}^N (-1)^{\alpha_j}+ \sum_{j=1}^{N+1}
  (-1)^{\beta_j}\right).
\label{PP1}
\end{split}
\end{equation}
The summation is performed over the variables $\alpha_j=0,1$ with
$j=1\dots N$ and
$\beta_j=0,1$, with $j=0\dots N$. In general the number of the
$\alpha_j$ and $\beta_j$ variables coincides with the number of
particles in the bra and ket states, respectively.

In the case of the density operator the corresponding formula reads
\begin{equation}
\label{jaapeden}
\begin{split}
&   \mathcal{F}^{\rho}_N(\{p\}_N|\{k\}_{N},\mu)=\frac{-P}{8c^{N-1}}
\prod_{j,k} \frac{1}{k_j-p_l}
\end{split}
\end{equation}
with
\begin{equation}
\begin{split}
 P=&\mathop{\sum_{\alpha_j=0,1}}
\sum_{\beta_l=0,1}  (-1)^{\sum_j \alpha_j+\sum_l \beta_l }
\prod_{i1<i2} (p_{i1}-p_{i_2}+(\alpha_{i_1}-\alpha_{i_2})ic)\times \\
&\times \prod_{i1<i2} (k_{i1}-k_{i_2}+(\beta_{i_1}-\beta_{i_2})ic)
\prod_{i1,i2} (p_{i1}-k_{i_2}-(\alpha_{i_1}-\beta_{i_2})ic)  \times\\
&\times
\left( \sum_{j=1}^N \left((-1)^{\alpha_j}+ (-1)^{\beta_j}\right)\right)^2.
\end{split}
\label{PP2}
\end{equation}
Note that the only difference between formulas \eqref{PP1} and
\eqref{PP2} is the last factor, which is called the ``p-function'' in the original papers
\cite{Babujian:2001xn,Babujian:2002fi} .

It can be shown that formulas \eqref{finally}-\eqref{jaapeden}
satisfy all necessary conditions established 
in subsection \ref{FFproperties}, therefore they describe the field
operator and density form factors indeed. In the case of \eqref{jaapeden} the
factorization condition \eqref{density-asymptotics} is also easily checked 
using 
\begin{equation*}
\begin{split}
&\left(1+ \sum_{j=2}^N (-1)^{\alpha_j}+
\sum_{j=1}^N (-1)^{\beta_j} \right)^2-
\left(-1+ \sum_{j=2}^N (-1)^{\alpha_j}+
\sum_{j=1}^N (-1)^{\beta_j} \right)^2=\\
&\hspace{7cm}4\left(\sum_{j=2}^N (-1)^{\alpha_j}+
\sum_{j=1}^N (-1)^{\beta_j} \right).
\end{split}
\end{equation*}

\section{Multi-component systems: Coordinate Bethe Ansatz}

\label{multicomponent}

In this section we consider the general $K$-component models in infinite volume. In
second quantized form the Hamiltonian is 
\begin{equation}
\label{HM-LL}
H=\int_{-\infty}^{\infty}\,\mathrm d
x\left(\partial_x\Psi^\dagger_j\partial_x\Psi_j+ 
c\Psi_l^\dagger\Psi_j^\dagger\Psi_j\Psi_l\right).
\end{equation}
Here
$\Psi_j(x,t)$ and $\Psi_j^\dagger(x,t)$, $j=1\dots K$ are  canonical
non-relativistic Bose or Fermi fields satisfying
\begin{equation}
  \Psi_j(x,t)\Psi_l^\dagger(y,t)-\sigma \Psi_l^\dagger(y,t)\Psi_j(x,t)=\delta_{jl}\delta(x-y),
\end{equation}
where $\sigma=1$ for bosons ($\sigma=-1$ for fermions), respectively.

In first quantized form the Hamiltonian is
\begin{equation*}
H=-\sum_{j=1}^N\frac{\partial^2}{\partial x_j^2}+2c\sum_{j<l} \delta(x_j-x_l)
\end{equation*}
in both cases. Note that the above Hamiltonian is completely
spin-independent; the interaction between the different spin
components arises as the effect of the statistics of the wave function.

The construction of the eigenstates of the Hamiltonian \eqref{HM-LL}
was established in the papers
\cite{nested-S-1,McGuire-BA,nested-S-2,Yang-nested,Yang-nested-S} (for
a more general scheme see \cite{stokman-opdam-spin-case}). In
the following  
we collect the main results of this procedure;  our focus will be on the
form factors and their analytic properties.

\subsection{The wave functions}

Before constructing the coordinate Bethe Ansatz wave functions we need
to introduce a few basic objects and notations.

Consider the vector space $V=\complex^K$. Consider also the $N$-fold tensor product
\begin{equation*}
  V^{(N)}=\otimes^N V
\end{equation*}
and a representation $\rho$ of the permutation
group $S_N$ on $V^{(N)}$. Let $\rho_{ab}$ denote the action
corresponding to the elementary exchange $(ab)$. In the physical cases
\begin{equation*}
  \rho_{ab}=\sigma P_{ab},
\end{equation*}
where $P_{ab}$ is the permutation operator between the vector spaces
$V_a$ and $V_b$ and $\sigma=1$ ($\sigma=-1$) in the bosonic
(fermionic) case, respectively.

Consider also a set of parameters $\{p\}_N$, which will play the role
of particle rapidities for the Bethe wave function.

We introduce the operators \cite{Yang-nested,stokman-opdam-spin-case}
\begin{equation}
\label{yangsY}
  Y^{ab}_{jk}=\frac{(p_j-p_k)\rho_{ab}-ic}{p_j-p_k+ic}.
\end{equation}
Here it is understood that $Y^{ab}_{jk}$ acts only on the vector
spaces $a$ and $b$ and the indices $jk$ stand for the rapidities
entering the expression \eqref{yangsY}. 

The operators \eqref{yangsY} satisfy the unitarity condition and the
Yang-Baxter equations:
\begin{equation}
\label{unitary}
  Y^{ab}_{jk}\ Y^{ab}_{kj}=1
\end{equation}
\begin{equation}
  \label{YB1}
 Y_{jk}^{ab}\  Y_{ik}^{bc}\   Y_{ij}^{ab} =
 Y_{ij}^{bc}\  Y_{ik}^{ab}\   Y_{jk}^{bc} .
\end{equation}

In the following we attach the rapidities to the vector spaces. 
To every permutation of the rapidities $Q\in S_N$ we associate a
configuration
\begin{equation}
  V^{(Qp)_1}_1 \otimes   V^{(Qp)_2}_2 \otimes \cdots \otimes   V^{(Qp)_N}_N  .
\end{equation}

We define an operator $\mathcal{Q}(Q,\{p\}):\ (S_N\times \complex^N)\to \text{End}(V^{(N)})$ as
follows. The permutation $Q\in S_N$ is 
re-constructed as a product of elementary permutations and to  every
exchange of rapidities 
\begin{equation}
    V^{p_j}_a \otimes   V^{p_k}_b\quad\to\quad    V^{p_k}_a \otimes   V^{p_j}_b
\end{equation}
we associate the action of $Y^{ab}_{jk}$;
the operator $\mathcal{Q}(Q,\{p\})$ is defined as the product of the $Y^{ab}_{jk}$ matrices.
 This
definition leads to the property
\begin{equation}
\label{CNYANG}
\mathcal{Q}(Q_2,Q_1p)\
\mathcal{Q}(Q_1,p)=\mathcal{Q}(Q_2Q_1,p).
\end{equation}
The consistency of the construction is guaranteed by the Yang-Baxter
equation \eqref{YB1}. 

\bigskip

Now we are in a position to construct the vector valued wave functions:
\begin{equation*}
  \chi_N:\valos^N\to  V^{(N)}.
\end{equation*}
The wave functions depend on the (ordered)
set of rapidities $\{p\}_N$ and an arbitrary (fixed) vector $\omega_N\in
V^{(N)}$, which 
is a parameter describing the polarization of the wave function. It
will be specified in the two-component case in section \ref{2c}.

We define the fundamental domain as 
\begin{equation}
  x_j>x_l\quad\text{iff}\quad j>l.
\end{equation}
In this region the wave function is
\begin{equation}
\label{dejaren}
  \chi_N(x|p,\omega_N)=\frac{1}{\sqrt{N!}} \sum_{Q\in S_N}
  e^{i\skalarszorzat{Qp}{x}}  \mathcal{Q}(Q,p)\omega_N.
\end{equation}
It can be extended to $\valos^N$ by symmetry. To write down the formal
relation we need the representation $\rho(Q)$ of the permutation $Q$
which is such that
\begin{equation*}
  (Qx)_1<\dots < (Qx)_N.
\end{equation*}
Then the wave function in $\valos^N$ reads
\begin{equation}
\label{dejarenX}
  \chi_N(x|p,\omega_N)=\frac{1}{\sqrt{N!}} 
\mathcal{\rho}(Q^{-1})
\sum_{R\in S_N}
  e^{i\skalarszorzat{Rp}{Qx}}  \mathcal{Q}(R,p)\omega_N.
\end{equation}

It can be shown that the
wave function defined this way is an eigenstate of the Hamiltonian
\eqref{HM-LL} for arbitrary $\{p\}_N$ and $\omega_N$. The total energy and
the momentum is given by
\begin{equation*}
  E_N=\sum_j p_j^2\qquad\qquad P_N=\sum_j p_j,
\end{equation*}
the vector $\omega_N$ only determines the polarization of the wave
function.

We wish to note that the matrix
\begin{equation}
  X^{ab}_{jk}=P_{ab} Y^{ab}_{jk}=
\frac{\sigma (p_j-p_k)-icP_{ab}}{p_j-p_k+ic}
\label{X}
\end{equation}
can be interpreted as the two-particle S-matrix of the
theory. Therefore, the individual coefficients in \eqref{dejarenX}
describe the two-particle scattering events.

It is important to establish 
 the exchange property of the wave function with respect to an
exchange of rapidities:
\begin{thmlem}
  \begin{equation}
    \chi_N(x|p_1,\dots,p_{j},p_{j+1},\dots,p_N,\omega_N)=
 \chi_N(x|p_1,\dots,p_{j+1},p_{j},\dots,p_N,Y^{j,j+1}_{j,j+1}
 \omega_N).
\label{chiNex}
  \end{equation}
\end{thmlem}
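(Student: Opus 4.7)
The plan is to expand both sides of \eqref{chiNex} using the definition \eqref{dejaren} in the fundamental domain and to show term-by-term equality after a clever relabeling of the summation variable. Writing $\tau \in S_N$ for the elementary transposition $(j,j+1)$ and $p' = \tau p$ for the permuted rapidity list, the right-hand side reads
\begin{equation*}
\chi_N(x|p',Y^{j,j+1}_{j,j+1}\omega_N) = \frac{1}{\sqrt{N!}} \sum_{R\in S_N} e^{i\skalarszorzat{Rp'}{x}}\,\mathcal{Q}(R,p')\,Y^{j,j+1}_{j,j+1}\omega_N.
\end{equation*}

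First I would observe, directly from the construction of $\mathcal{Q}$ in terms of elementary exchanges, that $\mathcal{Q}(\tau,p) = Y^{j,j+1}_{j,j+1}$, since $\tau$ consists of a single exchange swapping rapidities $p_j, p_{j+1}$ on vector space positions $j, j+1$. Next, I would change the summation variable on the right-hand side by writing $R = Q\tau$; as $Q$ ranges over $S_N$, so does $R$. This substitution has two effects. In the exponent, $Rp' = Q\tau\tau p = Qp$, so the oscillatory factor matches the corresponding term $e^{i\skalarszorzat{Qp}{x}}$ on the left. For the operator factor, the composition rule \eqref{CNYANG}, applied with $Q_1 = \tau$ and $Q_2 = Q\tau$, gives
\begin{equation*}
\mathcal{Q}(Q\tau,\tau p)\,\mathcal{Q}(\tau,p) = \mathcal{Q}(Q,p),
\end{equation*}
hence $\mathcal{Q}(R,p')\,Y^{j,j+1}_{j,j+1}\omega_N = \mathcal{Q}(Q,p)\,\omega_N$. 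Summing over $Q$ reproduces the left-hand side of \eqref{chiNex} term by term in the fundamental domain.

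To extend from the fundamental domain to all of $\valos^N$, I would invoke the symmetric extension formula \eqref{dejarenX}: outside the fundamental region the wave function is obtained by prepending the position-space action $\rho(Q^{-1})$ to the same sum over rapidity permutations. Since the identity holds for each term of that sum regardless of $x$, it propagates to all of $\valos^N$.

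The main obstacle is purely bookkeeping: the indices on $Y^{ab}_{jk}$ carry two distinct meanings (vector space positions $a,b$ versus rapidity labels $j,k$), and under the relabeling $p \to \tau p$ the upper indices must remain anchored to the physical vector space positions while the lower indices are read off the new rapidity list. Once the correct application of \eqref{CNYANG} is identified — in particular, ensuring that the composed permutation acting on $p$ gives back $Q$ rather than $Q\tau^2 = Q$ via the right intermediate state — the calculation is algebraically immediate, with the unitarity relation \eqref{unitary} guaranteeing that everything is consistent.
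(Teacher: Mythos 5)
Your proof is correct and takes essentially the same route as the paper's: both hinge on the change of summation variable $R = Q\tau$ (the paper writes $Q' = QP_j$, manipulating the left-hand side instead of the right) together with the composition rule \eqref{CNYANG} and the identification $\mathcal{Q}(\tau,p) = Y^{j,j+1}_{j,j+1}$. The only cosmetic differences are the direction in which you run the substitution and your closing appeal to unitarity \eqref{unitary}, which is in fact not needed anywhere in the argument.
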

\begin{proof}
It is enough to consider the fundamental region. 
  Let us denote the exchange $(j\leftrightarrow j+1)$ by $P_j$. 
We introduce a new summation variable $Q'=QP_j$ leading to
  \begin{equation*}
      \chi_N(x|p_1,\dots,p_{j},p_{j+1},\dots,p_N,\omega_N)=
\frac{1}{\sqrt{N!}} \sum_{Q'\in S_N}
  e^{i\skalarszorzat{Q'P_j p}{x}}  \mathcal{Q}(Q'P_j,p)\omega_N.
  \end{equation*}
It follows from \eqref{CNYANG} that
\begin{equation*}
  \mathcal{Q}(Q'P_j,p)=  \mathcal{Q}(Q',P_j p) 
  \mathcal{Q}(P_j,p)=
 \mathcal{Q}(Q',P_j p)  Y^{j,j+1}_{j,j+1}.
\end{equation*}
Therefore
  \begin{equation*}
\begin{split}
      \chi_N(x|p_1,\dots,p_{j},p_{j+1},\dots,p_N,\omega_N)&=
\frac{1}{\sqrt{N!}} \sum_{Q'\in S_N}
  e^{i\skalarszorzat{Q'P_j p}{x}}  \mathcal{Q}(Q',P_jp)
  Y^{j,j+1}_{j,j+1}\omega_N=\\
&=  \chi_N(x|p_1,\dots,p_{j+1},p_{j},\dots,p_N,
   Y^{j,j+1}_{j,j+1} \omega_N).
\end{split}
  \end{equation*}
\end{proof}

\subsection{The Form Factors}

\label{MFF}

We are interested in the form factors of the field operators
$\Psi_l(0)$
and the bilinear operators
$\rho_{jl}(0)=\Psi^\dagger_j(0)\Psi_l(0)$. The latter encompass the
density operators of particles with a given spin (when $j=l$) and also
the spin-flip operators (when $j\ne l$).

The (infinite volume) form factors are co-vector valued functions of
the rapidities. Evaluated on two vectors $\psi_N\in
V^{(N)}$ and $\phi_{N+1}\in V^{(N+1)}$ they are
given 
as the coordinate space integrals
\begin{equation}
\label{FinfdefM}
\begin{split}
&  \mathcal{F}^l_N(\{p\}_N,\{k\}_{N+1})(\psi_N,\phi_{N+1})=
\lim_{\eps\to 0}\sqrt{N+1} \int_{-\infty}^\infty  dx_1 \dots dx_N\
\prod_{j=1}^N f_\eps(x_j)
\times \\
&\hspace{2cm}\Big\langle \chi_N(x_1,\dots,x_N|\{p\}_N,\psi_N)   \Big|
U^{(0)}_l\chi_{N+1}(0,x_1,\dots,x_N|\{k\}_{N+1},\phi_{N+1})\Big\rangle_N
\end{split}
\end{equation}

\begin{equation}
\begin{split}
\label{Finfdef2M}
&  \mathcal{F}^{jl}_N(\{p\}_N,\{k\}_{N})(\psi_N,\phi_{N+1}) =
\lim_{\eps\to 0}
N \int_{-\infty}^\infty dx_1 \dots dx_{N-1}\  \prod_{j=1}^{N-1} f_\eps(x_j) \times \\
&\hspace{2cm}\Big\langle U^{(1)}_j\chi_N(p|0,x_1,\dots,x_{N-1},\psi_N)\Big|
U^{(1)}_l \chi_{N}(k|0,x_1,\dots,x_{N-1},\phi_{N})\Big\rangle_{N-1}.
\end{split}
\end{equation}
Here $U^{(j)}_l$ is an operator 
\begin{equation}
\label{uudef}
  U^{(j)}_l:V^{(N)}\to V^{(N-1)}
\end{equation}
which acts by taking the scalar product with the unity vector $e_l$ in the
$j$-th vector space and leaving the others invariant:
\begin{equation*}
\begin{split}
 & U^{(j)}_l\Big(e_{a_1}\otimes e_{a_2} \otimes\dots\otimes
e_{a_{j-1}}\otimes  e_{a_j}\otimes e_{a_{j+1}}\otimes \dots \otimes e_{a_N}\Big)
=\\
&\hspace{2cm}=\delta_{l,a_j} (e_{a_1}\otimes e_{a_2} \otimes\dots\otimes
e_{a_{j-1}}\otimes e_{a_{j+1}}
\otimes \dots \otimes e_{a_N})
\end{split}
\end{equation*}

The scalar
products in \eqref{FinfdefM}-\eqref{Finfdef2M} are the canonical ones
in $V^{(N)}$ and $V^{(N-1)}$, respectively. In order to conform with
our previous notations, in the case of the field operator the
indexation of the vector spaces corresponding to the rapidities
$\{k\}_{N+1}$ is given by
\begin{equation}
 V^{(N+1)}= V^{k_0}_0 \otimes   V^{k_1}_1 \otimes \cdots \otimes   V^{k_N}_N  .
\end{equation}

In the following we establish the analytic properties of the form
factors. 
The behaviour under the exchange of rapidities follows simply from the
properties \eqref{chiNex} of the Bethe Ansatz wave functions:
\begin{equation}
\label{Mex}
\begin{split}
&
\mathcal{F}_N^l(p_1,\dots,p_N|k_0,\dots,k_j,k_{j+1},\dots,k_N)(\phi_N,\psi_{N+1})
=\\
&\hspace{3cm}F_N^l(p_1,\dots,p_N|k_0,\dots,k_{j+1},k_{j},\dots,k_N)
(\phi_N,\hat S_{j,j+1} \psi_{N+1})
\\
&  \mathcal{F}_N^l(p_1,\dots,p_j,p_{j+1},\dots,p_N|k_0,\dots,k_N)(\phi_N,\psi_{N+1})=\\
& \hspace{3cm}F_N^l(p_1,\dots,p_{j+1},p_{j},\dots,p_N|k_0,\dots,k_N)
(\hat S^{j,j+1}\phi_N,\psi_{N+1}).
\end{split}
\end{equation}
Here we introduced the short-hand notation
\begin{equation}
  \label{hatSdef}
\hat S_{j,l}=Y_{jl}^{jl}(\{k\})
=\frac{(k_j-k_l)\rho_{jl}-ic}{k_j-k_l+ic}
\qquad\qquad
\hat S^{j,l}=Y_{jl}^{jl}(\{p\})=
\frac{(p_j-p_l)\rho_{jl}-ic}{p_j-p_l+ic}.
\end{equation}
Analogous relations hold for the bilinear operators as well.

The singularity properties of the form factors are more
involved. All the poles of the form factors arise from the coordinate
space integrals and therefore the positions of the poles are identical
with those in the scalar case, ie. the form factors have poles at
$p_j=k_l$,   whenever two rapidities from the two sides
coincide. To write down the residues we introduce the following notation: 
\begin{equation*}
  Id_{1,0}\otimes \mathcal{F}_{N-1}\qquad \text{and}\qquad
\mathcal{F}_{N-1}\otimes Id_{N,N}
\end{equation*}
are operations where it is understood that a trace is taken with
respect to the corresponding spaces of $\psi_N$ and $\phi_{N+1}$ and
the remaining vector spaces are substituted in the form factor. For
example
\begin{equation*}
  \begin{split}
&( Id_{1,0}\otimes \mathcal{F}_{N-1})
(e_{a_1}\otimes \dots \otimes e_{a_N},e_{b_0}\otimes \dots \otimes
e_{b_{N}})=\\
&\hspace{2cm}\delta_{a_1,b_0} \mathcal{F}_{N-1}(e_{a_2}\otimes \dots \otimes e_{a_N},e_{b_1}\otimes \dots \otimes
e_{b_{N}})
  \end{split}
\end{equation*}

With
this notation the kinematical poles are given by the following theorem.

\begin{thm}
The pole of the field operator form factor at $p_N=k_N$ is given by
\begin{equation}
  \begin{split}
\label{infect-me}
&\mathcal{F}^{l}_N(p_1,\dots,p_{N}|k_0,\dots,k_{N})(\psi_N,\phi_{N+1})\sim
\frac{i}{k_N-p_N}\times\\
&\left[(\mathcal{F}^l_{N-1}\otimes Id_{N,N})(\psi_N,\phi_{N+1}) 
-\sigma (Id_{1,0}\otimes \mathcal{F}^l_{N-1})
(\hat S^{1,N}\dots \hat S^{N-1,N} \psi_N,
\hat  S_{0,N}\dots \hat S_{N-1,N}\ \phi_{N+1})
 \right].
  \end{split}
\end{equation}
Here we abbreviated $\mathcal{F}^l_{N-1}=\mathcal{F}^l_{N-1}(p_1,\dots,p_{N-1}|k_0,\dots,k_{N-1})$.
 Poles at other $p_j=k_l$ can be obtained using the exchange property.  
\end{thm}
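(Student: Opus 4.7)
The strategy parallels the proof of Theorem~\ref{hukko}, with all scalar coefficients replaced by tensors in $V^{(N)}$ and $V^{(N+1)}$. Expanding both wave functions in \eqref{FinfdefM} via \eqref{dejarenX}, the integrand becomes a finite sum of products of plane waves in $x_1,\dots,x_N$ with tensor-valued coefficients. By Lemma~\ref{grills}, a pole at $p_N=k_N$ can only be generated by those contributions in which some coordinate $x_j$ ($j\in\{1,\dots,N\}$) carries the exponent $e^{i(k_N-p_N)x_j}$ and is extremal — either larger than all other coordinates and $0$ (case~(a)), or smaller than all of them (case~(b)). The exponent structure forces that in the bra the rapidity $p_N$ be attached to $x_j$ and in the ket $k_N$ be attached to $x_j$.

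In case~(a) the sorted orderings in both wave functions place $x_j$ at the rightmost slot. Since $p_N$ is already the last entry of the bra rapidity list and $k_N$ the last entry of the ket list, no rearrangement via \eqref{chiNex} is required. Performing the integration over $x_j$ exactly as in Theorem~\ref{hukko} contributes $-1/(i(k_N-p_N))$, and the remaining $(N-1)$ integrations factorize into the construction defining $\mathcal{F}^l_{N-1}(p_1,\dots,p_{N-1}|k_0,\dots,k_{N-1})$; the contraction between the last vector space of $\psi_N$ and the last vector space of $\phi_{N+1}$ is encoded by $Id_{N,N}$. This reproduces the first summand in \eqref{infect-me} multiplied by $i/(k_N-p_N)$.

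In case~(b) the extremal $x_j$ sits at the leftmost sorted slot. To expose the $(N-1)$-particle form factor on the same truncated rapidity sets $(p_1,\dots,p_{N-1})$ and $(k_0,\dots,k_{N-1})$, I would first apply \eqref{chiNex} iteratively to move $p_N$ from position $N$ to position $1$ in the bra argument list, replacing $\psi_N$ by $\hat S^{1,N}\hat S^{2,N}\cdots\hat S^{N-1,N}\psi_N$; and analogously move $k_N$ from position $N$ to position $0$ in the ket, replacing $\phi_{N+1}$ by $\hat S_{0,N}\hat S_{1,N}\cdots\hat S_{N-1,N}\phi_{N+1}$. After this relabelling the argument of case~(a) applies symmetrically: the $x_j$-integral now yields $+1/(i(k_N-p_N))$, and the remaining integrals reconstruct $\mathcal{F}^l_{N-1}$ with a contraction between vector space $1$ of the bra and vector space $0$ of the ket, producing $Id_{1,0}\otimes\mathcal{F}^l_{N-1}$. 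The opposite sign of the $x_j$-integral accounts for the overall minus in front of the second term of \eqref{infect-me}.

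The main technical obstacle is the correct tracking of the prefactor $\sigma$. In the bra, the sorting permutation $\rho(Q^{-1})$ of \eqref{dejarenX} that brings $x_j$ to the leftmost sorted position is precisely undone by the $\hat S^{\cdot,N}$ chain introduced through \eqref{chiNex}, leaving no residual statistics factor. In the ket, however, the same procedure must push the insertion-point coordinate $0$ across the coordinate $x_j$ in the sorted order, so the sorting permutation on $V^{(N+1)}$ contains one elementary transposition between vector spaces~$0$ and~$j$ whose action is $\rho_{0,j}=\sigma P_{0,j}$; this single factor $\sigma$ is not absorbed by the $\hat S_{\cdot,N}$ chain and survives in the residue as the overall $\sigma$ in \eqref{infect-me}. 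Once this bookkeeping is carried out the proof reduces to a direct transcription of Theorem~\ref{hukko} into the multicomponent setting, and the analogous statement for the bilinear operators $\rho_{jl}$ follows by the same reasoning.
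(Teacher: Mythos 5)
Your proposal is correct and follows essentially the same route as the paper's proof of this theorem: the same reduction to the two extremal configurations of the distinguished coordinate, the same $x_j$-integrals producing the opposite-sign factors $\mp 1/(i(k_N-p_N))$, and the same identification of the chains $\hat S^{1,N}\cdots\hat S^{N-1,N}$ and $\hat S_{0,N}\cdots\hat S_{N-1,N}$ from reordering the rapidities. The only cosmetic difference is the statistics bookkeeping: the paper extracts $\sigma^{N}$ from symmetrizing the ket and $\sigma^{N-1}$ from the bra (the pairwise cancellation happens between the two sorting permutations in the scalar product, not against the $\hat S$ chain as you phrase it), and the product $\sigma^{2N-1}=\sigma$ is exactly your single surviving transposition of the operator insertion point past the extremal coordinate.
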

\begin{proof}
We follow the ideas of the proof of theorem \ref{hukko}.
We consider those terms where $p_N$ and $k_N$ are coupled to $x_N$
and there is a singularity. Similar to the scalar case, the only two
possibilities are if $x_N$ is larger, or smaller than any of the other
coordinates. To be specific we first consider the following two cases:
\begin{equation*}
  x_1<\dots < x_{N-1}<x_N,\ 0<x_N\qquad\text{and} \qquad 
x_N< x_1<\dots < x_{N-1},\ x_N<0.
\end{equation*}
In the first case \eqref{dejaren} gives no action on the two vectors,
therefore this term yields
\begin{equation}
\label{oneof1}
  \frac{i}{k_N-p_N}   \mathcal{F}_{N-1}^l \otimes Id_{N,N}.
\end{equation}
In the second case the wave function is obtained by symmetrization
from \eqref{dejaren}:
\begin{equation*}
  \chi_{N+1}(0,x_1,\dots,x_N|\{k\})=\sigma^{N}P_{N-1,N}\dots P_{01}
  \chi_{N+1}(x_N,0,x_1,\dots,x_{N-1}|\{k\}).
\end{equation*}
Here we used
\begin{equation*}
  (P_{01}P_{12}\dots P_{N-1,N})^{-1}= P_{N-1,N} \dots P_{01}.
\end{equation*}
To obtain the coefficient of the exponential
\begin{equation*}
  e^{i(k_1x_1+\dots+k_Nx_N)}
\end{equation*}
we have to consider the permutation $Q\in S_{N+1}$
\begin{equation*}
  Q(\{k_0,\dots,k_{N}\})=\{k_N,k_0,\dots, k_{N-1}\}
\end{equation*}
For this permutation the corresponding linear operator is
\begin{equation*}
  \mathcal{Q}(Q,\{k\})=\hat S_{0,N}\dots \hat S_{N-1,N}.
\end{equation*}
Performing similar steps for the dual vector we obtain the scalar
product
\begin{equation*}
\sigma  \Big\langle
P_{N-1,N}\dots P_{1,2} \hat S^{1,N}\dots \hat S^{N-1,N} \phi_N
|
U^{(0)}_lP_{N-1,N}\dots P_{12}P_{0,1} 
\hat S_{0,N}\dots \hat S_{N-1,N}
\phi_{N+1}
\Big\rangle_N.
\end{equation*}
This is equivalent to
\begin{equation*}
\sigma  \Big\langle
P_{12}\dots P_{N-1,N} \hat S^{1,N}\dots \hat S^{N-1,N} \phi_N
|
P_{12}\dots P_{N-1,N} U^{(1)}_l
\hat S_{0,N}\dots \hat S_{N-1,N}
\phi_{N+1}
\Big\rangle_N.
\end{equation*}
Moreover the scalar product is invariant with the permutation of
vector spaces within $V^{(N)}$ therefore the above scalar product is
equivalent to
\begin{equation*}
\sigma  \Big\langle
 \hat S^{1,N}\dots \hat S^{N-1,N} \phi_N
|
 U^{(1)}_l
\hat S_{0,N}\dots \hat S_{N-1,N}
\phi_{N+1}
\Big\rangle_N.
\end{equation*}
For the contribution in question the above scalar product  is
equivalent to the action of the operator 
\begin{equation}
\label{oneof2}
-\frac{i\sigma}{k_N-p_N}  (Id_{1,0}\otimes \mathcal{F}_{N-1}^l)
\left(\hat S^{1,N}\dots \hat S^{N-1,N} \phi_N,
\hat S_{0,N}\dots \hat S_{N-1,N}
\phi_{N+1}
\right).
\end{equation}
Adding the contributions \eqref{oneof1}-\eqref{oneof2}, performing the
summations over the remaining possibilities for the permutations, and
using the Yang-Baxter equation one obtains finally the statement \eqref{infect-me}.
\end{proof}

The residue equation \eqref{infect-me} involves the parameter $\sigma$
which distinguishes the bosonic and fermionic cases. It is useful to
write down a relation which does not depend on $\sigma$. Therefore we
introduce one more $S$-matrix, namely
\begin{equation}
  \label{Sezis}
  \tilde S_{j,k}=\sigma \hat S_{j,k}\qquad\qquad
 \tilde S^{j,k}=\sigma \hat S^{j,k}.
\end{equation}
Then the kinematical pole equation reads
\begin{equation}
\begin{split}
\label{infect-me2}
&\mathcal{F}^{l}_N(p_1,\dots,p_{N}|k_0,\dots,k_{N})(\psi_N,\phi_{N+1})\sim
\frac{i}{k_N-p_N}\times\\
&\left[(\mathcal{F}^l_{N-1}\otimes Id_{N,N})(\psi_N,\phi_{N+1}) 
- (Id_{1,0}\otimes \mathcal{F}^l_{N-1})
(\tilde S^{1,N}\dots \tilde S^{N-1,N} \psi_N,
\tilde  S_{0,N}\dots \tilde S_{N-1,N}\ \phi_{N+1})
 \right].
  \end{split}
\end{equation}
It can be shown that an analogous equation (with the $\tilde S$
operators involved) holds for the bilinear
operators and arbitrary higher body local operators as
well, irrespective of the statistics of the model. As a final remark we note that equation
\eqref{infect-me2} can be considered as a 
non-relativistic version of the kinematical pole axiom known in the
form factor bootstrap in integrable relativistic QFT's \cite{smirnov_ff}.

We conjecture that in the case of the field operator the recursion
relation \eqref{infect-me} together with the exchange properties
\eqref{Mex} determine the form factors uniquely. Then, at least in
principle they can be
constructed with a procedure similar to the one presented in
\cite{Balog:jopofa} for the case of the relativistic $O(3)$
$\sigma$-model. We leave this problem to further research.

In the case of the bilinear operators it is expected that the
recursion relations are not restrictive enough. However, similar to
the one-component case there is a useful asymptotic condition:
\begin{thm}
  The asymptotic behaviour of the form factors of the bilinear
  operators is given by
  \begin{equation}
    \label{elmegy}
\begin{split}
&    \lim_{p_1\to \infty}
\mathcal{F}^{jk}_N(p_1,\dots,p_N|k_1,\dots,k_N)(\psi_N,\phi_N)=\\
&\hspace{3cm}\mathcal{F}^{k}_{N-1}(p_2,\dots,p_N|k_1,\dots,k_N)(U^{(1)}_j\psi_N,\phi_N).
\end{split}
  \end{equation}
\end{thm}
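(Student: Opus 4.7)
The plan is to adapt the argument of Lemma \ref{density-asy} to the multi-component setting, keeping careful track of how the vector-valued data rearranges in the limit. First I would expand the bra wave function $\chi_N(0,x_1,\dots,x_{N-1}|p_1,\dots,p_N,\psi_N)$ via \eqref{dejarenX} and isolate the terms that are $\mathcal O(p_1^{N-1})$ as $p_1\to\infty$. Exactly as in the scalar case, only those permutations in which $p_1$ is attached to the insertion coordinate $x=0$ contribute at this order: for any other permutation the factor $e^{ip_1 x_j}$ with $j\ne 0$ integrates via the regularized prescription \eqref{rules} to produce an additional $1/p_1$, bringing the overall degree down to $p_1^{N-2}$. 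The ket wave function is inert under this limit since $p_1$ does not enter it.

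Next I would examine what happens to the $Y$-matrices carrying the rapidity $p_1$. As $p_1\to\infty$ one has
\begin{equation*}
Y^{ab}_{1k}=\frac{(p_1-p_k)\rho_{ab}-ic}{p_1-p_k+ic}\;\xrightarrow{p_1\to\infty}\;\rho_{ab},
\end{equation*}
so these matrices degenerate to pure statistics operators $\rho_{ab}=\sigma P_{ab}$ at leading order. Using the Yang--Baxter equation \eqref{YB1} to push them through to the left, the first tensor slot of $\psi_N$ (the one associated with $p_1$) gets detached and the remaining product of $Y$-matrices is exactly the operator $\mathcal Q(Q',p_2,\dots,p_N)$ of an $(N-1)$-particle wave function with rapidities $p_2,\dots,p_N$ on the vector spaces $V_2,\dots,V_N$. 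Thus the bra factorizes, to leading order, into $p_1^{N-1}$ times an $(N-1)$-particle wave function carrying a polarization obtained by separating the first tensor factor of $\psi_N$.

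The operator $U^{(1)}_j$ in \eqref{Finfdef2M} then contracts this separated slot with $e_j$, which by definition produces $U^{(1)}_j\psi_N\in V^{(N-1)}$ as the polarization of the reduced wave function. The normalization works out: the explicit prefactor $N$ in \eqref{Finfdef2M} combined with the $1/\sqrt N$ from the reduction $\chi_N\to\chi_{N-1}$ reproduces the $\sqrt N$ prefactor of the $(N-1)$-particle field-operator form factor in \eqref{FinfdefM}. After these identifications, the integrand matches exactly the one defining $\mathcal F^k_{N-1}(p_2,\dots,p_N|k_1,\dots,k_N)(U^{(1)}_j\psi_N,\phi_N)$, establishing \eqref{elmegy}.

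The main obstacle I anticipate is the bookkeeping of statistics signs and vector-space labels during the $Y\to\rho$ degeneration. One must verify that the $\sigma$'s produced by the $\rho_{ab}=\sigma P_{ab}$ pull-throughs combine with the signs from the permutation prefactors to give \emph{exactly} $U^{(1)}_j\psi_N$ as the reduced polarization, with no stray overall sign. The Yang--Baxter equation \eqref{YB1} guarantees that the various orderings of the pull-through are consistent, but making the cancellations completely explicit is the place where care is needed.
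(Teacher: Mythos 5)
Your proposal follows essentially the same route as the paper's proof: the paper establishes \eqref{elmegy} via the auxiliary relation \eqref{ezalimit2}, i.e.\ by keeping only the terms in which $p_1$ is attached to the coordinate $x_0=0$ and using $\lim_{p_1\to\infty}Y^{ab}_{1k}=\rho_{ab}$, which is exactly your mechanism, and you additionally carry out the normalization count $N\cdot\frac{1}{\sqrt{N}}=\sqrt{N}$ that the paper leaves implicit. Your flagged worry about statistics signs is at the same level of detail as the paper's own proof, which merely asserts the check via \eqref{dejaren}--\eqref{dejarenX} is easy.

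One correction to your bookkeeping, however: the degree counting ``$\mathcal{O}(p_1^{N-1})$ leading, $p_1^{N-2}$ subleading'' is imported from the scalar Lemma \ref{density-asy} and does not apply here. In the multi-component wave function \eqref{dejaren} the amplitudes $\mathcal{Q}(Q,p)\,\omega_N$ are built from the unimodular matrices \eqref{yangsY} and remain $\mathcal{O}(1)$ as $p_1\to\infty$; this is precisely why \eqref{elmegy}, unlike \eqref{density-asymptotics}, involves no division by $p_1^{N-1}$. Taken literally, your claim that the surviving terms are $\mathcal{O}(p_1^{N-1})$ would make the limit on the left of \eqref{elmegy} divergent. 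What actually holds --- and what your argument genuinely uses --- is the relative statement: terms where $p_1$ is attached to an integrated coordinate acquire an extra factor $\mathcal{O}(1/p_1)$ from the regularized oscillatory integral \eqref{rules} and drop out, while terms with $p_1$ attached to $x=0$ tend to a finite limit in which the $Y$-matrices carrying $p_1$ degenerate to $\rho_{ab}$ and the corresponding tensor slot is contracted by $U^{(1)}_j$ into $U^{(1)}_j\psi_N$. With that normalization corrected, your proof coincides with the paper's.
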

\begin{proof}
  The statement of the theorem is a direct consequence of the relation
  \begin{equation}
\label{ezalimit2}
\begin{split}
& \lim_{p_1\to\infty} \Psi_j(0)\chi_N(0,x_1,\dots,x_{N-1}|p_1,\dots,p_N,\omega_N)\sim\\
&\hspace{3cm}\chi_{N-1}(x_1,\dots,x_{N-1}|p_2,\dots,p_N,U^{(1)}_j\omega_N),
\end{split}
  \end{equation}
where similar to \eqref{ezalimit} the sign $\sim$ indicates that on
the l.h.s. only those terms are kept where $p_1$ is attached to
$x_0=0$. Equation \eqref{ezalimit2} is checked easily using the
definitions \eqref{dejaren}-\eqref{dejarenX} and the the limiting values
\begin{equation*}
\lim_{p_j\to\infty} Y_{jk}^{ab}=\rho_{ab}.
\end{equation*}
\end{proof}

We conjecture that the form factors of the bilinear operators are
determined uniquely by the exchange relations, the kinematical pole
axiom \eqref{infect-me2} and the asymptotic condition
\eqref{elmegy}. Also, we note that \eqref{elmegy} can be considered as
a non-relativistic version of the factorization property known in
relativistic integrable QFT \cite{Delfino:1996nf}. 

\section{Two-component systems: the nested Bethe Ansatz in infinite volume}

\label{2c}

In this section we consider the two-component models.
As a first step we construct the so-called nested Bethe Ansatz states.
Our approach is somewhat different from the usual one: the results
presented here apply directly in infinite volume, therefore we are not
concerned with the periodicity of the wave functions. 
The
connection to the finite volume states is made in section \ref{fftcsa1hehe}.

As a second step we also introduce the 
``magnonic form factors'' which are matrix elements of local operators
on the nested BA states. We investigate the analytic properties of
these objects and obtain a set of ``magnonic form factor equations''.

\subsection{The nested BA states}

We consider the infinite volume Bethe Ansatz states defined in \eqref{dejaren} in the
case of $V=\complex^2$. The basis in $V$ is formed by the two vectors
$\ket{+}$ and $\ket{-}$. 
Our aim here is to specify the vectors $\omega_N$ entering the Bethe
Ansatz states. 

We consider an ``auxiliary space'' $V_a=\complex^2$ and the operator (also
called the ``monodromy matrix'')
\begin{equation}
\label{T1}
  T(u|p)=X_{aN}(u-p_N)\dots X_{a1}(u-p_1).
\end{equation}
The trace in auxiliary space is called the transfer matrix:
\begin{equation}
\label{T2}
  t(u|p)=\text{Tr}_a   T(u|p).
\end{equation}
The operator \eqref{T1} can be viewed as the monodromy matrix of an
inhomogeneous spin chain, where the rapidities $p_j$ play the role of
inhomogeneities. Then the standard Algebraic Bethe Ansatz techniques
can be used to construct Bethe states in $V^{(N)}$. 

To establish the notations we recall that
 the rational $R$-matrix is defined as
\begin{equation}
  \label{R}
  R(u,c)=\frac{1}{u+ic}
  \begin{pmatrix}
    u+ic & & & \\
& u & ic & \\
& ic & u & \\
& & & u+ic\\
  \end{pmatrix}.
\end{equation}
Comparing to the formula \eqref{X} note that
\begin{equation}
\label{best}
  X(u)=S_{1}(u,c)  R(u,-\sigma c)\qquad\qquad S_{1}(u,c)=\frac{\sigma u -ic}{u+ic}.
\end{equation}
Here $S_{1}(u)$ is the ``one-particle'' S-matrix, which describes the
amplitude associated to the exchange of two particles with the same
spin. In the fermionic case it is equal to $(-1)$, whereas in the
bosonic case it is just the Lieb-Liniger amplitude. 

In order to conform with the conventions of the spin chain literature we 
define the normalized monodromy matrix
\begin{equation}
   \tilde  T(u|p)=R_{aN}(u-p_N)\dots R_{a1}(u-p_1).
\end{equation}
Here it is understood that in the $R$-operators the coupling constant
is $-\sigma c$.
With respect to the auxiliary space it is written  as
\begin{equation*}
  \tilde T(u|p)=
  \begin{pmatrix}
    \tilde A(u|p) & \tilde B(u|p) \\
\tilde C(u|p) & \tilde D(u|p)
  \end{pmatrix}.
\end{equation*}
The commutation relations of the elements of the monodromy matrix can
be expressed in the compact form
\begin{equation}
\label{RTT}
  R(u-v)\tilde T(u)\otimes \tilde T(v)=\tilde T(v)\otimes \tilde T(u) R(u-v),
\end{equation}
which is understood as an equation in the tensor product of two
auxiliary spaces. Eq. \eqref{RTT} follows from a repeated use of the Yang-Baxter
equations \eqref{YB1}. It follows from \eqref{RTT} that the transfer
matrices $t(u|p)$ form a commuting set of operators.

We fix the reference state $\ket{+}_N=\ket{++\dots +}_N\in V^{(N)}$. 
The $B(\mu)$-operators can be considered as creation operators of (interacting)
spin-waves of $\ket{-}$ spins acting on the reference state. The parameter $\mu$  describes the
rapidity of the spin wave and is often called the magnonic rapidity.

We 
define the function $\omega_N(\{p\}_N,\{\mu\}_M): \complex^N\times
\complex^M\to V^{(N)}$ as
\begin{equation}
\label{omegadef}
  \omega_N(\{p\}_N,\{\mu\}_M)=\tilde B(\mu_1+i\sigma c/2|\{p\})
\dots \tilde B(\mu_M+i\sigma c/2 |\{p\}) \ket{+}_N.
\end{equation}
The shift of $i\sigma c/2$ is introduced for technical
reasons. According to \eqref{RTT} the $B$-operators commute with each
other, 
therefore the function $\omega_N(\{p\}_N,\{\mu\}_M)$ is completely symmetric
with respect to the set $\{\mu\}$. The exchange
properties with respect to the set $\{p\}$ are determined once more by
the Yang-Baxter relation:

\begin{thmlem}
  The function $\omega_N(\{p\}_N,\{\mu\}_M)$ satisfies
\begin{equation}
\label{goed}
\omega_N(p_1,\dots,p_{j+1},p_j,\dots,p_N,\{\mu\})=
\hat R_{j,j+1} \omega_N(p_1,\dots,p_{j},p_{j+1},\dots,p_N,\{\mu\}),
\end{equation}
where $\hat R_{j,j+1}=P_{j,j+1}R_{j,j+1}(p_j-p_{j+1})$.
\end{thmlem}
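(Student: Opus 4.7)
The plan is to upgrade the \textit{RTT}-style Yang-Baxter relation for the normalized monodromy matrix $\tilde T(u|p)$ into a conjugation identity that describes the effect of swapping two inhomogeneities $p_j\leftrightarrow p_{j+1}$, and then to observe that this conjugation passes cleanly through the extraction of $\tilde B(u|p)$ and through the reference state $\ket{+}_N$.

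First, I would compare $\tilde T(u|p_1,\dots,p_{j+1},p_j,\dots,p_N)$ and $\tilde T(u|p_1,\dots,p_j,p_{j+1},\dots,p_N)$, which differ only in the two adjacent factors acting on spaces $j$ and $j+1$. Using the Yang-Baxter equation \eqref{YB1} in the rational form
\begin{equation*}
R_{j,j+1}(p_j-p_{j+1})\,R_{a,j}(u-p_{j+1})\,R_{a,j+1}(u-p_j)
= R_{a,j+1}(u-p_j)\,R_{a,j}(u-p_{j+1})\,R_{j,j+1}(p_j-p_{j+1}),
\end{equation*}
together with $P_{j,j+1}R_{a,j}(v)=R_{a,j+1}(v)P_{j,j+1}$ and the trivial commutation of $R_{j,j+1}$ and $P_{j,j+1}$ with all $R_{a,k}$ for $k\neq j,j+1$, I would derive the conjugation identity
\begin{equation*}
\tilde T(u|p_1,\dots,p_{j+1},p_j,\dots,p_N)
=\hat R_{j,j+1}\,\tilde T(u|p_1,\dots,p_j,p_{j+1},\dots,p_N)\,\hat R_{j,j+1}^{-1},
\end{equation*}
with $\hat R_{j,j+1}=P_{j,j+1}R_{j,j+1}(p_j-p_{j+1})$.

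Second, since $\hat R_{j,j+1}$ acts only on the quantum spaces $j$ and $j+1$ and is independent of the auxiliary space, taking the $(+,-)$ matrix element in the auxiliary space commutes with the conjugation and gives
\begin{equation*}
\tilde B(u|\dots,p_{j+1},p_j,\dots)=\hat R_{j,j+1}\,\tilde B(u|\dots,p_j,p_{j+1},\dots)\,\hat R_{j,j+1}^{-1}.
\end{equation*}
Inserting this into the definition \eqref{omegadef} and telescoping, every internal pair $\hat R_{j,j+1}^{-1}\hat R_{j,j+1}$ cancels, leaving a single $\hat R_{j,j+1}^{-1}$ next to $\ket{+}_N$. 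The final step is to check that $\hat R_{j,j+1}\ket{+}_N=\ket{+}_N$: the permutation $P_{j,j+1}$ clearly fixes $\ket{++}_{j,j+1}$, and from the explicit form \eqref{R} the vector $\ket{++}_{j,j+1}$ is an eigenvector of $R_{j,j+1}(u)$ with eigenvalue $1$, so both factors of $\hat R_{j,j+1}^{\pm 1}$ leave the ferromagnetic reference state invariant. This yields \eqref{goed}.

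The only subtleties are bookkeeping ones: keeping track of the sign of the argument $p_j-p_{j+1}$ in the Yang-Baxter application (so that the factor that appears is $R_{j,j+1}(p_j-p_{j+1})$, not $R_{j,j+1}(p_{j+1}-p_j)$), and confirming that $P_{j,j+1}$ and $R_{j,j+1}(u)$ combine on the correct side to produce the $\hat R_{j,j+1}$ of the statement (using $PR(u)P=R(u)$ for the symmetric rational $R$-matrix). These are routine once the conjugation identity is set up, so the heart of the argument is really just the Yang-Baxter move, and the invariance of $\ket{+}_N$ is what makes the identity descend from the monodromy to $\omega_N$ without leftover factors.
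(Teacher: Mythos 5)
Your proof is correct and follows essentially the same route as the paper: the paper's modified Yang--Baxter relation \eqref{YXX} is precisely your rational YBE conjugation move (the $X$ and $Y$ operators differ from the $R$-matrices only by scalar factors that cancel on the two sides), and the paper likewise deduces the intertwining relation \eqref{goed2} for the $\tilde B$-operators, commutes $\hat R_{j,j+1}$ through the product in \eqref{omegadef}, and uses that $\hat R_{j,j+1}$ acts trivially on $\ket{+}_N$. The only cosmetic difference is that you first establish the conjugation identity for the full monodromy matrix $\tilde T$ and then restrict to its $B$-entry, while the paper writes the relation directly for $\tilde B$; this changes nothing of substance.
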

\begin{proof}
A modified form of the Yang-Baxter equation \eqref{YB1} is
\begin{equation}
\label{YXX}
\begin{split}
&  Y_{j,j+1}(p_j-p_{j+1}) X_{j+1,a}(u-p_{j+1}) X_{ja}(u-p_j)
=\\&\hspace{3cm} X_{j+1,a}(u-p_j) X_{ja}(u-p_{j+1}) Y_{j,j+1}(p_j-p_{j+1}) .
\end{split}
\end{equation}
This implies
\begin{equation}
\label{goed2}
B(\mu|p_1,\dots,p_{j+1},p_j,\dots,p_N) \hat R_{j,j+1}=
\hat R_{j,j+1} B(\mu|p_1,\dots,p_{j},p_{j+1},\dots,p_N).
\end{equation}
By commuting $\hat R$ through the $B$-operators and using the fact
that $\hat R$ acts trivially on the reference state we 
obtain the statement \eqref{goed}.
\end{proof}

This leads immediately 
to the following lemma:
\begin{thmlem}
\label{thmlem1}
  The states defined by \eqref{omegadef} satisfy the exchange property
  \begin{equation}
\mathcal{Q}(P,p)   \omega_N(\{p\}_N,\{\mu\}_M)=
\mathop{\prod_{j<l}}_{Pj>Pl} S_{1}(p_j-p_l) \times
\omega_N(\{Pp\}_N,\{\mu\}_M)
  \end{equation}
for arbitrary $P\in S_N$.
\end{thmlem}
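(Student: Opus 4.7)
The plan is to prove the statement by induction on the length $\ell(P)$ of $P$, i.e.\ the minimal number of adjacent transpositions in a reduced expression for $P$. The base case $\ell(P)=0$ is trivial, since then $P=e$, $\mathcal{Q}(e,p)$ is the identity (by \eqref{CNYANG}), and the product over inversions is empty. My first step would be to handle the elementary transposition case $P=P_{j,j+1}$: using \eqref{best} and the relation $Y^{ab}_{jk}=P_{ab}X^{ab}_{jk}$, together with the fact that $S_1$ is a scalar that commutes with $P_{ab}$, one checks
\begin{equation*}
Y^{j,j+1}_{j,j+1}(p_j-p_{j+1})=S_{1}(p_j-p_{j+1})\,\hat R_{j,j+1}(p_j-p_{j+1}),
\end{equation*}
and combining this identity with the adjacent exchange property \eqref{goed} immediately yields the claim for $P_{j,j+1}$, since the only inversion of $P_{j,j+1}$ is the pair $(j,j+1)$ itself.

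For the inductive step I would fix a reduced word $P=P_{i_1,i_1+1}\cdots P_{i_k,i_k+1}$ (with the rightmost factor applied first) and invoke \eqref{CNYANG} repeatedly to write $\mathcal{Q}(P,p)$ as an ordered product of $\mathcal{Q}$'s for elementary transpositions, each evaluated at the intermediate rapidity configuration produced by the previous steps. Acting on $\omega_N(\{p\}_N,\{\mu\}_M)$ and using the elementary case at each stage, the $\hat R$-factors would successively swap adjacent rapidities in the argument of $\omega_N$ via \eqref{goed} until the argument is transformed to $Pp$, while each step $t$ contributes a scalar factor $S_1(q^{(t-1)}_{i_t}-q^{(t-1)}_{i_t+1})$, where $q^{(t-1)}$ denotes the rapidity sequence after the first $t-1$ swaps.

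The main obstacle is then the combinatorial identification of this collected scalar product with $\prod_{j<l,\,Pj>Pl}S_{1}(p_j-p_l)$. The essential observation I would rely on is that, because the word is reduced, every elementary transposition creates a new inversion rather than undoing an existing one; hence at step $t$ the rapidities currently sitting at positions $i_t$ and $i_t+1$ in $q^{(t-1)}$ must be $p_a$ and $p_b$ with $a<b$, and the pair $(a,b)$ is precisely the inversion added to $P$ at that step. Iterating gives a bijection between the steps of the reduced word and the $k=\ell(P)$ inversions of $P$, so the accumulated scalar matches the claim. Independence of the final factor from the choice of reduced word is automatic from the Yang-Baxter equation \eqref{YB1} underlying \eqref{CNYANG}, which also guarantees the consistency of each elementary swap with the $\hat R$-rearrangement in the argument of $\omega_N$.
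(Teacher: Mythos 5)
Your proposal is correct and takes essentially the same route as the paper, whose proof consists of checking the elementary transposition case via $Y^{j,j+1}_{j,j+1}=S_{1}(p_j-p_{j+1})\,\hat R_{j,j+1}$ (i.e.\ combining \eqref{best} with \eqref{goed}) and then extending multiplicatively through \eqref{CNYANG}. Your induction over a reduced word, with the bijection between its letters and the inversions of $P$, simply makes explicit the combinatorial bookkeeping that the paper leaves implicit in the phrase ``it is enough to check the statement for the elementary permutations.''
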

\begin{proof}
It is enough to check the statement for the elementary
permutations. Then the statement follows from equations \eqref{goed}
and \eqref{best}.
\end{proof}

Finally we obtain the main statement about the nested Bethe Ansatz states:
\begin{thm}
  The coordinate space wave functions defined as
  \begin{equation}
\label{hukkle}
      \chi_N(x|\{p\}_N,\{\mu\}_M)=\frac{1}{\sqrt{N!}} \sum_{P\in S_N}
  e^{i\skalarszorzat{Pp}{x}}  
\mathop{\prod_{j<l}}_{Pj>Pl} S_{1}(p_j-p_l) 
\omega_N(\{Pp\}_N,\{\mu\}_M),
  \end{equation}
where $\omega_N(\{p\}_N,\{\mu\}_M)$ is given by \eqref{omegadef}, are
eigenstates of the Hamiltonian \eqref{HM-LL}. Here it is understood
that \eqref{hukkle} is defined in the fundamental domain
$x_1<\dots<x_N$ and it is extended to the other domains by symmetry.
\end{thm}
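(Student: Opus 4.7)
The plan is to observe that \eqref{hukkle} is not really a new object: it is just the general coordinate Bethe Ansatz wave function \eqref{dejaren} evaluated at a particularly chosen polarization vector, and the eigenvalue property then follows from the general result already recorded after \eqref{dejarenX} that \eqref{dejaren} is an eigenstate of \eqref{HM-LL} for \emph{any} $\omega_N\in V^{(N)}$.

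First I would apply the Lemma immediately preceding the theorem (the exchange property of $\omega_N(\{p\}_N,\{\mu\}_M)$). Reading it in the reverse direction gives the identity
\begin{equation*}
\mathop{\prod_{j<l}}_{Pj>Pl} S_{1}(p_j-p_l)\ \omega_N(\{Pp\}_N,\{\mu\}_M)
= \mathcal{Q}(P,p)\,\omega_N(\{p\}_N,\{\mu\}_M)
\end{equation*}
for every $P\in S_N$. Substituting this back into \eqref{hukkle}, the sum collapses to
\begin{equation*}
\chi_N(x|\{p\}_N,\{\mu\}_M)=\frac{1}{\sqrt{N!}}\sum_{P\in S_N}
e^{i\skalarszorzat{Pp}{x}}\,\mathcal{Q}(P,p)\,\omega_N(\{p\}_N,\{\mu\}_M),
\end{equation*}
which is exactly \eqref{dejaren} with the concrete choice $\omega_N=\omega_N(\{p\}_N,\{\mu\}_M)\in V^{(N)}$ from \eqref{omegadef}.

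Second, I would invoke the general statement already quoted in the excerpt: \eqref{dejaren} is an eigenstate of \eqref{HM-LL} with energy $\sum_j p_j^2$ and momentum $\sum_j p_j$ for an \emph{arbitrary} polarization vector. Since $\omega_N(\{p\}_N,\{\mu\}_M)$ is just one particular vector in $V^{(N)}$, applying this result finishes the argument. The symmetric extension to all of $\valos^N$ goes through exactly as in \eqref{dejarenX}, so the continuity/matching conditions across coincidence hyperplanes are automatic.

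The main conceptual hurdle, such as it is, lies in the preceding Lemma rather than in the theorem itself: one must recognize that the scalar prefactor $\prod S_1(p_j-p_l)$ is precisely what the $Y$-operators produce when they act on the specific vector built out of $\tilde B$-operators on the reference state, which in turn rests on the intertwining relation \eqref{goed2} derived from the modified Yang--Baxter equation \eqref{YXX}. Once that identification is in hand, the theorem is essentially immediate; no further analysis of the $\delta$-function contact terms is required beyond what was already done for the general Bethe state \eqref{dejaren}. The magnonic rapidities $\{\mu\}_M$ play a purely passive role here — they only parametrize the internal polarization and do not enter the energy or momentum.
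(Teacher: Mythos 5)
Your proof is correct and takes essentially the same route as the paper: the paper's own (one-line) proof likewise combines Lemma \ref{thmlem1} --- the exchange property $\mathcal{Q}(P,p)\,\omega_N(\{p\}_N,\{\mu\}_M)=\prod_{j<l,\,Pj>Pl}S_1(p_j-p_l)\,\omega_N(\{Pp\}_N,\{\mu\}_M)$ --- with the previously stated fact that \eqref{dejaren} is an eigenstate of \eqref{HM-LL} for an arbitrary polarization vector $\omega_N\in V^{(N)}$. Your collapsing of the sum in \eqref{hukkle} back into the form \eqref{dejaren} with the particular choice $\omega_N=\omega_N(\{p\}_N,\{\mu\}_M)$ from \eqref{omegadef} is precisely the intended argument, and your closing remarks (contact terms already handled for \eqref{dejaren}, magnonic rapidities entering only the polarization) are consistent with the paper.
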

\begin{proof}
  The statement follows from Lemma \ref{thmlem1} and the fact that the states
  \eqref{dejaren} are eigenstates.
\end{proof}

It is easy to see that the wave function \eqref{hukkle} possesses the
following exchange property with respect to the set $\{p\}$:
\begin{equation}
\label{LLex}
\begin{split}
&\chi_N(x|p_1,\dots,p_j,p_{j+1},\dots,p_N,\{\mu\}_M)=\\
&\hspace{4cm}S_1(p_j-p_{j+1})\chi_N(x|p_1,\dots,p_{j+1},p_{j},\dots,p_N,\{\mu\}_M).
\end{split}
\end{equation}
On the other hand, $\chi_N$ is completely symmetric with respect to the sets $\{\mu\}$.

It is useful to obtain explicit representations for the
vectors
 $\omega_N(\{p\}_N,\{\mu\}_M)$.
Here we just present the known results \cite{iz-kor-resh} using the notations of
 \cite{HubbardBook}:
\begin{equation}
\label{intreatment}
  \omega_N(\{p\}_N,\{\mu\}_M)=\sum_{a_1,\dots,a_M=1}^N
A(a_1,\dots,a_M)
\sigma_-^{a_1}\dots \sigma_-^{a_M}
\ket{+},
\end{equation}
where
\begin{equation}
\label{Afunct}
  A(a_1,\dots,a_M)=\frac{1}{M!}\sum_{R\in S_M}
\prod_{1\le k<l\le M}\frac{(R\mu)_l-(R\mu)_k+i\sigma  c\epsilon(a_l-a_k)}
{(R\mu)_l-(R\mu)_k}
\prod_{l=1}^M \mathcal{A}((R\mu)_l,a_l),
\end{equation}
where $\epsilon(a)$ is the sign function and the propagator of the spin waves is given by
\begin{equation}
  \mathcal{A}(u,a)=\frac{-i\sigma c}{u-p_a-i\sigma c/2}
  \prod_{b=1}^{a-1} \frac{u-p_b+i\sigma c/2}{u-p_b-i\sigma c/2}.
\end{equation}
With this we have finished the explicit construction of the nested
Bethe Ansatz states. Note that we did not address the completeness of
states; in the present approach the magnonic rapidities are arbitrary
parameters, they are not assumed to satisfy the Bethe equations. 

It is useful to consider the $\mu\to\infty$ limit of the
wave function. We obtain the following statement:
\begin{thmlem}
\label{we}
Sending a magnonic rapidity to infinity yields the action of the
overall spin lowering operator:
  \begin{equation}
    \label{peyote}
\lim_{\mu_1\to\infty}\   \mu_1\   \chi_N(x|\{p\}_N,\mu_1,\dots,\mu_M)=
-i\sigma c S_-\ \chi_N(x|\{p\}_N,\mu_2,\dots,\mu_M).
  \end{equation}
\end{thmlem}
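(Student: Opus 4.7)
The plan is to use the explicit coordinate representation (\ref{intreatment})--(\ref{Afunct}) of the nested Bethe vector $\omega_N$ and compute the $\mu_1\to\infty$ limit termwise inside the sum defining the amplitude $A$, then promote the result to $\chi_N$ via (\ref{hukkle}). As a preliminary reduction, it suffices to prove the analogue of (\ref{peyote}) for $\omega_N$, namely
$$\lim_{\mu_1\to\infty}\mu_1\,\omega_N(\{p\}_N,\mu_1,\mu_2,\dots,\mu_M)=-i\sigma c\, S_-\,\omega_N(\{p\}_N,\mu_2,\dots,\mu_M).$$
Once this is known, (\ref{peyote}) follows at once by inserting it into (\ref{hukkle}): the only $\mu$-dependent factor in (\ref{hukkle}) is $\omega_N$, while the exponentials and the scattering factors $S_{1}(p_j-p_l)$ are scalars and so commute with the global operator $S_-$.

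For the $\omega_N$ statement, in each summand of (\ref{Afunct}) labelled by $R\in S_M$ I would identify the position $l_0=R^{-1}(1)$ occupied by $\mu_1$ in the permuted sequence. From the explicit propagator, $\mathcal{A}(\mu_1,a_{l_0})$ behaves as $-i\sigma c/\mu_1$ to leading order and, crucially, loses its dependence on $a_{l_0}$. Simultaneously every ratio in $\prod_{k<l}((R\mu)_l-(R\mu)_k+i\sigma c\,\epsilon(a_l-a_k))/((R\mu)_l-(R\mu)_k)$ with an index equal to $l_0$ tends to $1$. After multiplying by $\mu_1$, the surviving factors reproduce precisely the summand of the $(M-1)$-particle amplitude $A'$ built from $\{\mu_2,\dots,\mu_M\}$ and indices $a_1,\dots,\hat a_{l_0},\dots,a_M$. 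Partitioning $S_M$ by the value of $l_0$ and using the bijection between $\{R\in S_M:R^{-1}(1)=l_0\}$ and $S_{M-1}$, I expect
$$\lim_{\mu_1\to\infty}\mu_1\,A(a_1,\dots,a_M)=\frac{-i\sigma c}{M}\sum_{l_0=1}^{M}A'(a_1,\dots,\hat a_{l_0},\dots,a_M).$$

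To assemble the final vector identity I would substitute this into (\ref{intreatment}). Because the operators $\sigma_-^a$ at distinct sites commute and those at coinciding sites square to zero, I can move $\sigma_-^{a_{l_0}}$ to the left past the others without harm and then perform the sum $\sum_{a_{l_0}=1}^{N}\sigma_-^{a_{l_0}}=S_-$, which is unconstrained because the surviving $A'$ no longer depends on $a_{l_0}$. Each of the $M$ terms from the $l_0$-sum then produces the same vector $S_-\,\omega_N(\{p\}_N,\mu_2,\dots,\mu_M)$, so the factor $1/M$ cancels and the claimed identity drops out.

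The step I expect to be the most delicate is verifying that the $O(\mu_1^{-1})$ subleading pieces hidden inside the $l_0$-dependent propagator and the ratio factors do not combine with the explicit $1/\mu_1$ from $\mathcal{A}(\mu_1,a_{l_0})$ to produce an additional leading contribution. A careful check of orders should show that those corrections only generate $O(\mu_1^{-2})$ and therefore drop; together with the bookkeeping of $\sigma_-$ reorderings, this is where care is needed.
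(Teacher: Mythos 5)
Your proof is correct, but it reaches the key fact by a different route than the paper. The paper disposes of the lemma in two lines: it invokes the standard Algebraic Bethe Ansatz limit \eqref{Blim}, $\lim_{\mu\to\infty}\mu\,\tilde B(\mu|\{p\}_N)=-i\sigma c\, S_-$ (immediate from the large-$u$ expansion of the monodromy matrix \eqref{T1}), applies it to the leftmost $B$-operator in \eqref{omegadef} --- the shift by $i\sigma c/2$ being irrelevant in the limit, and no commutation being needed since $\mu_1$ sits in the leftmost factor --- and then passes the limit through the scalar prefactors of \eqref{hukkle}, exactly as in your preliminary reduction. You instead prove this $B$-operator limit by hand from the explicit coordinate representation \eqref{intreatment}--\eqref{Afunct}: partitioning $S_M$ by the slot $l_0$ occupied by $\mu_1$, using that $\mathcal{A}(\mu_1,a_{l_0})\to -i\sigma c/\mu_1$ with the $a_{l_0}$-dependence dropping out while every ratio factor carrying the index $l_0$ tends to $1$, the bijection of $\{R:\ (R\mu)_{l_0}=\mu_1\}$ with $S_{M-1}$ (your combinatorial factor $(M-1)!/M!=1/M$ is right), the unconstrained sum $\sum_{a_{l_0}}\sigma_-^{a_{l_0}}=S_-$, and the cancellation of $1/M$ against the $M$ choices of $l_0$; moving $\sigma_-^{a_{l_0}}$ past the other lowering operators is indeed harmless, since distinct-site factors commute and coinciding-site terms vanish on both sides. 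The delicate point you flag resolves trivially: for fixed $N,M$ each summand contains only finitely many $\mu_1$-dependent factors, each of the form $1+\ordo(\mu_1^{-1})$ or $(-i\sigma c/\mu_1)\bigl(1+\ordo(\mu_1^{-1})\bigr)$, so after multiplying by $\mu_1$ the corrections are uniformly $\ordo(\mu_1^{-1})$ over the finite sum over $S_M$ and no conspiracy of subleading pieces can occur. What the paper's route buys is brevity and structure --- the limit is a property of the monodromy matrix, independent of the explicit form of $\omega_N$; what yours buys is a self-contained verification, including the normalization $-i\sigma c$, without citing known ABA results. In fact the paper itself remarks that \eqref{Blim} ``is easily seen \dots from the explicit expression \eqref{intreatment}'', and your argument is precisely that computation carried out in full.
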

\begin{proof}
  It is known from the Algebraic Bethe Ansatz that
  \begin{equation}
\label{Blim}
    \lim_{\mu\to\infty}\ \mu\ \tilde B(\mu|\{p\}_N)=-i\sigma c S_-. 
  \end{equation}
This is easily seen from the construction of the transfer matrix
\eqref{T1} or from the explicit expression \eqref{intreatment}.
Applying \eqref{Blim} to \eqref{omegadef} and finally to
\eqref{hukkle} we obtain the statement of the lemma.
\end{proof}

\subsection{Magnonic Form Factors}

The magnonic Form Factors are defined as the matrix elements of local
operators on the nested Bethe states. They are scalar
functions of four sets of rapidities. 

In the case of the field operator the form factor is defined as
\begin{equation}
\begin{split}
  \mathcal{F}^l_N(\{p\}_N,\{\nu\}_{M'},\{k\}_{N+1},\{\mu\}_M)=
\lim_{\eps\to 0}\sqrt{N+1} \int_{-\infty}^\infty  dx_1 \dots dx_N\
\prod_{j=1}^N f_\eps(x_j)\\
\times \Big\langle \chi_{N}(x_1,\dots,x_N|\{p\},\{\nu\})
\Big| U^{(0)}_l\chi_{N+1}(0,x_1,\dots,x_N|\{k\},\{\mu\})
\Big\rangle_N.
\end{split}
\end{equation}
Due to spin conservation 
\begin{equation*}
  M'=M+\frac{l-1}{2}
\end{equation*}

In the case of the bilinear operators the form factors are defined as
\begin{equation}
\begin{split}
  \mathcal{F}^{lm}_N(\{p\}_N,\{\nu\}_{M'},\{k\}_{N},\{\mu\}_M)=
\lim_{\eps\to 0}N \int_{-\infty}^\infty  dx_1 \dots dx_{N-1}\
\prod_{j=1}^{N-1} f_\eps(x_j)\\
\times \Big\langle U^{(1)}_l\chi_{N}(0,x_1,\dots,x_{N-1}|\{p\},\{\nu\})
\Big| U^{(1)}_m\chi_{N}(0,x_1,\dots,x_{N-1}|\{k\},\{\mu\})
\Big\rangle_{N-1}.
\end{split}
\end{equation}
Spin conservation requires
\begin{equation*}
  M'=M+\frac{m-l}{2}.
\end{equation*}

The $U^{(j)}_l$ operators entering the formulas above are the projectors introduced
in \eqref{uudef}.

\bigskip

It is important to establish the analytic structure of the magnonic
form factors.

It follows from the properties of the nested wave functions
that the form factors are completely symmetric with
respect to the magnonic rapidities. With respect to the particle
rapidities they possess the exchange property \eqref{LLex} (and its
complex conjugate).

Concerning the kinematical poles we substitute 
\eqref{omegadef} into
 \eqref{infect-me} 
and  using
Lemma \ref{thmlem1} we obtain
\begin{equation}
  \begin{split}
\label{infect-me-a-kurva-eletbe}
&\mathcal{F}^{l}_N(\{p\}|\{k\})\Big(\omega_N(p_1,\dots,p_N,\{\nu\}),\omega_{N+1}
(k_0,\dots,k_N,\{\mu\})\Big)\sim
\frac{i}{k_N-p_N}\times\\
&\left[(\mathcal{F}^l_{N-1}\otimes Id_{N,N})
\Big(\omega_N(p_1,\dots,p_N,\{\nu\}),\omega_{N+1}(k_0,\dots,k_N,\{\mu\})\Big)
-\right.\\
&- \prod_{j=0}^{N-1} S_1(k_j-k_N)   \prod_{j=1}^{N-1} S_1(p_N-p_j)  \times\\
&\left.\times\sigma (Id_{1,0}\otimes \mathcal{F}^l_{N-1})
\Big(\omega_N(p_N,p_1,\dots,p_{N-1},\{\nu\}),\omega_{N+1}(k_N,k_0,\dots,k_{N-1},\{\mu\})\Big)
 \right].
  \end{split}
\end{equation}
Note that in both terms the $Id$ operator 
acts on those vector spaces to which
 the rapidities $p_N$ and $k_N$ are attached. 
Taking the scalar product with respect to these spaces 
leads to two possibilities: Either the
corresponding components are $+$ or $-$. 
In both cases the resulting
amplitudes are evaluated easily using the explicit representation
\eqref{intreatment}. We arrive at the following ``inhomogeneous''
form factor recursion equation:
\begin{equation}
  \label{psybient}
\begin{split}
 &
 \mathcal{F}_N^l\big(\{p\}_{1..N},\{\nu\}_{1..M'}|\{k\}_{0..N},\{\mu\}_{1..M}\big)\sim
\frac{i}{k_N-p_N}\times \\
&\times\left[1-\prod_{t=1}^M S_{\frac{1}{2}}(\mu_t-k_{N})\prod_{t=1}^{M'} S_{\frac{1}{2}}(p_N-\nu_t)
\prod_{j=0}^{N-1}\tilde S_1(k_{jN})\prod_{k=1}^{N-1}\tilde S_1(p_{Nk})
\right] \times \\
&\times \mathcal{F}^l_{N-1}(\{p\}_{1..N-1},\{\nu\}_{1..M'}|\{k\}_{0..N-1},\{\mu\}_{1..M})\\
&+\sum_{s=1}^{M'}\sum_{r=1}^M \frac{c^2}{(\mu_r-k_N-i\sigma c/2)(\nu_s-p_N+i\sigma
  c/2)}\times \\
&\left[
\prod_{j=0}^{N-1} S_{\frac{1}{2}}(\mu_r-k_j)\prod_{j=1}^{N-1} S_{\frac{1}{2}}(p_j-\nu_s)
\mathop{\prod_{t=1}^M}_{t\ne r} W(\mu_t-\mu_r)
\mathop{\prod_{t=1}^{M'}}_{t\ne s} W(\nu_s-\nu_t)-\right.\\
&-\left.  \mathop{\prod_{t=1}^M}_{t\ne r} S_{\frac{1}{2}}(\mu_t-k_{N}) W(\mu_r-\mu_t)
\mathop{\prod_{t=1}^{M'}}_{t\ne s} S_{\frac{1}{2}}(p_N-\nu_t) W(\nu_t-\nu_s)
\prod_{j=0}^{N-1}\tilde S_1(k_{jN})\prod_{k=1}^{N-1}\tilde S_1(p_{Nk})
\right]\\
& \times
  \mathcal{F}^l_{N-1}(\{p\}_{1..N-1},\{\nu\}_{1..\hat s..M'}|\{k\}_{0..N-1},\{\mu\}_{1..\hat r..M}).
\end{split}
\end{equation}
Here we used
\begin{equation*}
\tilde S_1(u)=\sigma S_1(u)=\frac{u-i\sigma c}{u+ic}\qquad\qquad
  S_{\frac{1}{2}}(u)=\frac{u+i\sigma c/2}{u-i\sigma c/2}\qquad\qquad
W(u)=\frac{u+i\sigma c}{u},
\end{equation*}
and in the last line it is understood that the magnonic rapidities
$\nu_s$ and $\mu_r$ are not substituted into the form factor.

Analogous relations can be written down for the bilinear operators as
well. In those cases the ranges for the different products change according to
the number of rapidities present.

The interpretation of the residue equation \eqref{psybient} is the
following. The kinematical poles arise when two particle rapidities approach
each other; in the multi-component case this leads to different
contributions corresponding to the 
different  spin orientations. In the cases when this component
is $\ket{+}$ the remaining part of the wave function has the same number of
$\ket{-}$ spins, and its explicit polarization is described by the
same sets of magnonic rapidities. This corresponds to the second and
third lines of  \eqref{psybient}. On the other hand, when the singular
piece of the wave function carries a $\ket{-}$ spin, this is
associated with one of the magnonic rapidities from both
states. Therefore the remaining part of the wave functions yields 
form factors with one less number of $\ket{-}$ spins, ie. one
less magnonic rapidity, just as in the form factors on the seventh line
of  \eqref{psybient}.

The above recursive equations can be called ``inhomogeneous'' in the
sense that they can not be solved by considering fixed sets of
$\{\mu\}$ and $\{\nu\}$ as spectator variables: any
recursion procedure with a given number of magnonic rapidities
involves all the form factors with one less magnonic rapidity.
This makes the solution of
the system \eqref{psybient} more involved. 

Note that the recursion relation \eqref{psybient} has a different
structure than the singularity properties of the scalar products in
the general $sl(3)$ symmetric model considered in
\cite{resh-su3}.
This is due to the fact that we considered directly the form factors
and not the scalar products, and that we
used the explicit
form of the coordinate wave functions, as opposed to the 
algebraic construction of \cite{resh-su3}.
Therefore we do not find
singularities associated with coinciding magnonic rapidities,
 but the
kinematical poles of the particle rapidities yield also the
``inhomogeneous'' terms. 

The magnonic form factors have the structure
\begin{equation}
\label{structure}
  \begin{split}
&     \mathcal{F}_N\big(\{p\}_{N},\{\nu\}_{M'}|\{k\}_{N},\{\mu\}_{M}\big)=
c^{M'+M} \mathcal{P}_N\big(\{p\}_{N},\{\nu\}_{M'}|\{k\}_{N},\{\mu\}_{M}\big)\times\\
&\prod_{j>l} F_{\text{min}}(k_j-k_l) 
\prod_{j>l} F_{\text{min}}(p_l-p_j) 
\prod_{j,l} \frac{1}{k_j-p_l} 
\prod_{j,l} \frac{1}{\mu_j-k_l+ic/2}
\prod_{j,l} \frac{1}{\nu_j-p_l-ic/2}.
  \end{split}
\end{equation}
Here $F_{\text{min}}(k)$ is the so-called minimal two-particle form
factor responsible for the exchange properties; it satisfies the relation
\begin{equation*}
  F_{\text{min}}(u)=S_1(-u)F_{\text{min}}(-u).
\end{equation*}
The solutions are given by
\begin{equation}
  \label{Fmin}
  F_{\text{min}}(u)=
  \begin{cases}
    u & \text{for fermions}\\
\frac{u}{u-ic}  &  \text{for bosons.}
  \end{cases}
\end{equation}
$\mathcal{P}_N$ is a polynomial which is symmetric with respect
to all four sets of variables. 

\begin{thmlem}
  \label{PNdegree}
The maximal degree of $\mathcal{P}_N$ in its
variables depends on the operator 
in question and the statistics of the model and is given as follows:
\begin{itemize}
\item In the fermionic case:
  \begin{itemize}
  \item Field operators: 
 $\mathcal{P}_N$ is of order $M'$ in
    the $p$ variables and of order $M-1$ in the $k$ variables.
\item Bilinear operators: 
$\mathcal{P}_N$ is of order $M'+1$ in
    the $p$ variables and of order $M+1$ in the $k$ variables.
  \end{itemize}
\item In the bosonic case:
  \begin{itemize}
  \item Field operators: $\mathcal{P}_N$ is of order $N-1+M'$ in
    the $p$ variables and of order $N-1+M$ in the $k$ variables.
\item Bilinear operators: 
$\mathcal{P}_N$ is of order $N+M'$ in
    the $p$ variables and of order $N+M$ in the $k$ variables.
  \end{itemize}
\end{itemize}
\end{thmlem}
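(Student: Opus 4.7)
The plan is to establish each degree bound by computing the asymptotic behavior of $\mathcal{F}_N$ itself as one representative rapidity (say $p_1$, then $k_0$ for the field operator, or $k_1$ for the bilinear case) is sent to infinity, and then reading off the degree of $\mathcal{P}_N$ from the ansatz \eqref{structure} by dividing out the explicit prefactors. Since $\mathcal{P}_N$ is symmetric in each of the four sets $\{p\},\{\nu\},\{k\},\{\mu\}$, it suffices to handle one representative variable from each set. The analysis is carried out separately for the field and bilinear operators; the fermionic/bosonic distinction enters only through $F_{\min}$.

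My starting point is the scalar-case argument behind Lemmas \ref{field-asy} and \ref{density-asy}, adapted to the nested setting. The coordinate wave function \eqref{hukkle} is a sum of plane waves $e^{i\langle Pp,x\rangle}$ multiplied by the vector amplitudes $\omega_N(\{Pp\},\{\mu\})$. As $p_1\to\infty$, the exponential in the coordinate to which $p_1$ is attached dominates; but if that coordinate is integrated over, then the regularization prescription \eqref{rules} gives zero at leading order (the $+i/p_1$ and $-i/p_1$ contributions from the two half-lines cancel), so the leading contribution in $p_1$ is lowered by one power. The only permutations surviving at leading order are therefore those for which $p_1$ is attached to the non-integrated insertion coordinate $x_0=0$, giving a bare polynomial growth of $p_1^{N-1}$ (field) or $p_1^{N}$ (bilinear) before accounting for $\omega_N$.

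Next, I would track the $p_1$-dependence of $\omega_N$ using its explicit form \eqref{intreatment}-\eqref{Afunct}. The propagator $\mathcal{A}(u,a)$ contains a factor $1/(u-p_a-i\sigma c/2)$ that contributes $p_1^{-1}$ precisely when the spin-impurity sits at the $p_1$ site, together with ratios $(u-p_b\pm i\sigma c/2)$ that stay $O(1)$ as $p_1\to\infty$. Combining this with the coordinate-integral analysis and using the exchange lemma \ref{thmlem1} to bring $p_1$ into a canonical position, I expect the total asymptotic $p_1$-degree of $\mathcal{F}_N$ to be $-2$ for the field operator and $0$ for the bilinear operator, independent of statistics (the $\sigma$-dependence being absorbed into the prefactors). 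A parallel analysis at $k_0\to\infty$ for the field operator leads the leading piece to be proportional to a scalar product of nested states on $\{p\}$ and on $\{k_1,\dots,k_N\}$, which vanishes by the same orthogonality mechanism as in the scalar case, lowering the leading $k_0$-degree by one; the bilinear case at $k_1\to\infty$ is handled by the factorization \eqref{elmegy}, which reduces the question to the field-operator asymptotics with one less particle.

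The claimed degrees of $\mathcal{P}_N$ then follow by subtracting the prefactor degrees in \eqref{structure} from the asymptotic degrees of $\mathcal{F}_N$: $\prod_{j,l}(k_j-p_l)^{-1}$ contributes $-(N+1)$ or $-N$ to the $p_1$-degree, $\prod_{j,l}(\nu_j-p_l-ic/2)^{-1}$ contributes $-M'$, and $\prod_{j>l}F_{\min}(p_l-p_j)$ contributes $N-1$ in the fermionic case but $0$ in the bosonic case; matching these yields exactly the four entries of the lemma. A useful sanity check is that at $M=M'=0$ in the bosonic case the bounds reduce to Lemmas \ref{field-asy} and \ref{density-asy}, and that the $\mu\to\infty$ limit of Lemma \ref{we} is consistent with the claimed $\mu$-degrees via an induction on $M$. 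The main anticipated obstacle is the careful bookkeeping of $\omega_N$ through the sum over site labels $a_1,\dots,a_M$ in \eqref{Afunct}: one must rule out both hidden cancellations in this sum that would further reduce the degree, and hidden enhancements from configurations where several $\mathcal{A}(\mu_r,a_r)$ factors simultaneously attach to the $p_1$-site.
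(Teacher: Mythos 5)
Your overall strategy coincides with the paper's: establish the large-rapidity asymptotics of $\mathcal{F}_N$ by adapting the arguments of Lemmas \ref{field-asy} and \ref{density-asy} to the nested wave function \eqref{hukkle}, then divide out the prefactors of \eqref{structure}, with the fermionic/bosonic split carried entirely by $F_{\text{min}}$. Your final subtraction step is also correct: the four claimed degrees of $\mathcal{P}_N$ are exactly equivalent to $\mathcal{F}_N=\ordo(p_1^{-2})$ for the field operator and $\ordo(p_1^0)$ for the bilinear operators (and $\ordo(k_0^{-1})$, $\ordo(k_1^0)$ on the $k$-side), independently of statistics, once the degrees $N-1$ (fermions) versus $0$ (bosons) of the $F_{\text{min}}$ products and the degrees $-(N+1)$ or $-N$, $-M'$, $-M$ of the Cauchy-type factors are accounted for.

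However, the central asymptotic computation, as you wrote it, does not produce those asymptotics. First, for the field operator the rapidity $p_1$ belongs to the \emph{bra} state $\chi_N(x_1,\dots,x_N|\{p\},\{\nu\})$, which has no coordinate pinned at the insertion point: every coordinate carrying $p_1$ is integrated over, so there are no ``surviving permutations with $p_1$ attached to $x_0=0$''. The correct mechanism (the one actually used in Lemma \ref{field-asy}) is that the leading, position-independent amplitude cancels upon full-line integration by \eqref{rules}, and the answer comes from subleading amplitude terms multiplied by $\ordo(p_1^{-1})$ integration factors. Second, your ``bare growth'' figures $p_1^{N-1}$ (field) and $p_1^{N}$ (bilinear) belong to neither normalization: in the nested wave function \eqref{hukkle} the amplitudes are products of $S_1$-factors and $\omega_N$, which are $\ordo(1)$ ratios (with some components $\ordo(p_1^{-1})$), not polynomials — the polynomial amplitudes of degree $N-1$ are a feature of the scalar normalization \eqref{egyfajta-coo}, and even there the bilinear bra amplitude has degree $N-1$, not $N$. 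Combining your stated bare growth with $\ordo(1)$ corrections from $\omega_N$ and $\ordo(p_1^{-1})$ integrals can never yield the asserted $p_1^{-2}$ and $p_1^{0}$; the final numbers appear back-fitted from \eqref{structure} rather than derived. Two repairs: for the bilinear operators the $p$-asymptotics (degree $0$, i.e., a finite limit) follows immediately from the already-proved factorization \eqref{elmegy} applied to the nested vectors, with no new analysis needed; for the field operator one must redo the cancellation argument in the normalization of \eqref{hukkle}, where in the fermionic case the leading amplitude is position-independent only up to factors of $\sigma$ per crossing and the limit of $\omega_N(\{Pp\},\{\mu\})$ depends on the slot of $p_1$ — a point your statistics-independence claim silently skips, and which is precisely the cancellation bookkeeping your last sentence correctly flags as the main obstacle but does not carry out.
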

\begin{proof}
The total degree of the form
factor in the particle rapidities can be established using the
arguments given in Lemmas \ref{field-asy} and \ref{density-asy} and
the explicit form of the wave function \eqref{hukkle}.
Then the degree of $\mathcal{P}_N$ can be read off from
\eqref{structure}. The main difference between the degrees in the
fermionic and bosonic cases is a result of the different structure of
the minimal two-particle form factor.
\end{proof}

It follows from the Lemma that in the fermionic case \eqref{psybient} is restrictive
enough to fix the form factors completely. 

On the other hand, in the
bosonic case the degree of $\mathcal{P}_N$  is typically higher than the number
of conditions provided by \eqref{psybient}. 
Further constraints can be found by sending one of the
magnonic rapidities to infinity:  Lemma \eqref{we} and the
Wigner-Eckhart theorem provide additional relations between different
form factors. 
However, at the present moment it is not clear whether \eqref{psybient} can be
supplemented with other conditions which would fix the form
factors. These questions are left for further research.

\section{A few explicit solutions for the Form Factors}

\label{solutions}

In this section we present solutions to the recursive equations
\eqref{psybient}. We only consider cases when there is at most one
$\ket{-}$ spin in the two states; for certain operators this leads to homogeneous recursive
equations which can be solved using generalizations of already known techniques. The
solution of the full inhomogeneous equations would require new
techniques and is outside the scope of the present paper.

As an independent check of our results we also evaluated the
coordinate Bethe Ansatz expressions for a low number of particles
using \texttt{Mathematica}.
In the
cases $N=1,2,3$ we performed the comparisons analytically, whereas in the 
cases $N=4,5$ we could only do numeric checks. In all cases we found complete
agreement. This provides a strong justification for the results
presented below, especially for the bilinear operators in
the bosonic case, where the recursion relations 
do not fix the form factors completely.

\subsection{Fermions}

In the fermionic case we have $\tilde S_{1}=1$ leading to simple
recursive equations. Form factors with only $\ket{+}$ spins (no
magnonic rapidities) vanish identically, because the one-component
fermionic model is a free theory (on the level of form factors this
follows from $\tilde S_{1}=1$ leading to vanishing kinematic poles). 

\subsubsection{The matrix elements  $\bra{N,0}\Psi_-\ket{N+1,1}$}

Here we consider matrix elements of the $\ket{-}$ spin field operator
between a completely polarized state and a state with only one $\ket{-}$
spin:
\begin{equation*}
  \mathcal{F}^-_N(p_1,\dots,p_N|k_0,\dots,k_N,\mu).
\end{equation*}

It follows from \eqref{psybient} that the
pole at $p_N=k_{N}$ is given by
\begin{equation}
\label{ss1}
\begin{split}
&\mathcal{F}^-_N(p_1,\dots,p_N|k_0,\dots,k_N,\mu)\sim
\frac{i}{k_N-p_N}\times\\
&\hspace{2.5cm}\left(
1-\frac{\mu-k_N-ic/2}{\mu-k_N+ic/2}\right)
\mathcal{F}^-_{N-1}(p_1,\dots,p_{N-1}|k_0,\dots,k_{N-1},\mu)\\
&\hspace{2cm}=\frac{c}{(k_N-p_N)(k_N-\mu-ic/2)} \mathcal{F}^-_{N-1}(p_1,\dots,p_{N-1}|k_0,\dots,k_{N-1},\mu).
\end{split}
\end{equation}
The starting point for the recursion is the formal value at $N=0$:
\begin{equation*}
  \mathcal{F}^-_0(k_0,\mu)=\frac{-ic}{k_0-\mu-ic/2}.
\end{equation*}
The solution to equation \eqref{ss1} is
\begin{equation*}
  \mathcal{F}^-_N(p_1,\dots,p_N|k_1,\dots,k_{N+1},\mu)=
-i \left(\prod_j \frac{c}{k_j-\mu-ic/2}\right)
\frac{\prod_{i<j} k_{ji}\prod_{i>j} p_{ij}}{\prod_{j,l} (k_j-p_l)}.
\end{equation*}

\subsubsection{The matrix elements  $\bra{N,1}\Psi_+\ket{N+1,1}$}

If $N>1$ then the ``inhomogeneous'' term vanishes and we obtain the
recursion equations
\begin{equation}
\label{ss2}
\begin{split}
&  \mathcal{F}^+_{N}(p_1,\dots,p_N,\nu|k_0,\dots,k_N,\mu)\sim 
\frac{i}{k_N-p_N} \\
&\left(
1-\frac{\mu-k_N-ic/2}{\mu-k_N+ic/2}\frac{\nu-p_N+ic/2}{\nu-p_N-ic/2}\right)
\mathcal{F}^+_{N-1}(p_1,\dots,p_{N-1},\nu|k_0,\dots,k_{N-1},\mu)=\\
&\frac{c(\mu-\nu)}{(k_N-p_N)(\mu-k_N+ic/2)(\nu-p_N-ic/2)}
\mathcal{F}^+_{N-1}(p_1,\dots,p_{N-1},\nu|k_0,\dots,k_{N-1},\mu).
\end{split}
\end{equation}
At $N=1$ the inhomogeneous
term in \eqref{psybient} is the only contribution yielding for the pole
at $k_1\to p_1$:
\begin{equation*}
\begin{split}
  \mathcal{F}^+_1(p_1,\nu|k_0,k_1,\mu)&\sim
\frac{i}{k_1-p_1} 
\frac{ic}{\mu-k_1+ic/2}\frac{-ic}{\nu-p_1-ic/2} [S_{\frac{1}{2}}(\mu-k_0)-1]
\mathcal{F}^+_0(k_0)\\
&=\frac{i}{k_1-p_1} 
\frac{ic}{\mu-k_1+ic/2}\frac{-ic}{\nu-p_1-ic/2}\frac{-ic}{\mu-k_0+ic/2}.
\end{split}
\end{equation*}
There is an analogous relation for the pole at $p_1\to k_0$. The
solution is
\begin{equation*}
  \mathcal{F}^+_1(p_1,\nu|k_0,k_1,\mu)=
-i \frac{-ic}{\nu-p_1-ic/2}
\frac{ic}{\mu-k_1+ic/2}\frac{ic}{\mu-k_0+ic/2}
\frac{k_0-k_1}{(k_1-p_1)(k_0-p_1)}.
\end{equation*}

The solution to the recursive equation then reads
\begin{equation*}
\begin{split}
&  \mathcal{F}^+_{N}(p_1,\dots,p_N,\nu|k_0,\dots,k_N,\mu)=\\
&c^3 \big(c(\mu-\nu)\big)^{N-1}
\left(\prod_j \frac{1}{\mu-k_j+ic/2}\right)
\left(\prod_j \frac{1}{\nu-p_j-ic/2}\right)
\frac{\prod_{i<j} k_{ji}\prod_{i>j} p_{ij}}{\prod_{j,l} (k_j-p_l)}.
\end{split}
\end{equation*}

\subsubsection{The matrix elements $\bra{N,0}\Psi_+^\dagger \Psi_-\ket{N,1}$}

The recursive relation is analogous to \eqref{ss1}:
\begin{equation}
\label{ss3}
\begin{split}
&\mathcal{F}^{+-}_N(p_1,\dots,p_N|k_1,\dots,k_N,\mu)\sim \frac{i}{k_N-p_N}\times\\
&\hspace{1cm}\frac{c}{(k_N-p_N)(k_N-\mu-ic/2)} \mathcal{F}^{+-}_{N-1}(p_1,\dots,p_{N-1}|k_1,\dots,k_{N-1},\mu).
\end{split}
\end{equation}
The starting value is 
\begin{equation*}
  \mathcal{F}^{+-}_1(p_1|k_1,\mu)=\frac{ic}{\mu-k_1+ic/2}.
\end{equation*}
The solution is
\begin{equation}
  \label{freschbach}
   \mathcal{F}^{+-}(p_1,\dots,p_N|k_1,\dots,k_{N},\mu)=-ic^N
\left(\sum_j (k_j-p_j)\right) \prod_j \frac{1}{\mu-k_j+ic/2}\det \frac{1}{k_j-p_l}.
\end{equation}

\subsubsection{The matrix elements $\bra{N,1}\Psi_-^\dagger \Psi_-\ket{N,1}$}

The recursion equation reads
\begin{equation}
\label{ss4}
\begin{split}
&  \mathcal{F}^{--}_{N}(p_1,\dots,p_N,\nu|k_1,\dots,k_N,\mu)\sim \frac{i}{k_N-p_N}\times\\
&\frac{c(\mu-\nu)}{(k_N-p_N)(\mu-k_N+ic/2)(\nu-p_N-ic/2)}
\mathcal{F}^{--}_{N-1}(p_1,\dots,p_{N-1},\nu|k_1,\dots,k_{N-1},\mu).
\end{split}
\end{equation}
The starting value is
\begin{equation*}
  \mathcal{F}^{--}_1(p_1,\nu|k_1,\mu)=\frac{c^2}{(\mu-k_N+ic/2)(\nu-p_N-ic/2)}.
\end{equation*}
The solution is
\begin{equation}
  \label{frechbax}
\begin{split}
   \mathcal{F}^{--}_N(p_1,\dots,p_N,\nu|k_1,\dots,k_{N},\mu)= \big(c(\mu-\nu)\big)^{N-1}
\left(\sum_j (k_j-p_j)\right)\\
\times  \frac{c^2}{\prod_j (\mu-k_j+ic/2)(\nu-p_j-ic/2)}
 \det \frac{1}{k_j-p_l}.
\end{split}
\end{equation}

\subsubsection{The matrix elements $\bra{N,1}\Psi_+^\dagger \Psi_+\ket{N,1}$}

Here the recursion relation is similar to the previous case:
\begin{equation}
\label{ss5}
\begin{split}
&  \mathcal{F}^{++}_{N}(p_1,\dots,p_N,\nu|k_1,\dots,k_N,\mu)\sim \\
&\frac{c(\mu-\nu)}{(k_N-p_N)(\mu-k_N+ic/2)(\nu-p_N-ic/2)}
\mathcal{F}^{++}_{N-1}(p_1,\dots,p_{N-1},\nu|k_1,\dots,k_{N-1},\mu).
\end{split}
\end{equation}
However, this relation is valid only for $N>2$.  At $N=2$ only the
inhomogeneous term contributes and we
have for example the pole at $p_2\to k_2$:
\begin{equation*}
\begin{split}
&  \mathcal{F}^{++}_2(p_1,p_2,\nu|k_1,k_2,\mu)\sim\\
&\hspace{1cm}\sim\frac{i}{k_2-p_2} 
\frac{ic}{\mu-k_2+ic/2}\frac{-ic}{\nu-p_2-ic/2} [S_{\frac{1}{2}}(\mu-k_1)S_{\frac{1}{2}}(p_1-\nu)-1]
\mathcal{F}^{++}_1(p_1|k_1)\\
&\hspace{1cm}=\frac{i}{k_2-p_2} 
\frac{ic}{\mu-k_2+ic/2}\frac{-ic}{\nu-p_2-ic/2}
\frac{ic(\mu-\nu+p_1-k_1)}{(\mu-k_1+ic/2)(\nu-p_1-ic/2)}.
\end{split}
\end{equation*}
There are similar pole relations for the other residues. The solution
is
\begin{equation*}
\begin{split}
&  \mathcal{F}^{++}_2(p_1,p_2,\nu|k_1,k_2,\mu)=\\
&\hspace{0.5cm}c^3 \frac{(k_1+k_2-p_1-p_2)(\nu-\mu+k_1+k_2-p_1-p_2)}
{(\mu-k_1+ic/2)(\mu-k_2+ic/2)(\nu-p_1-ic/2)(\nu-p_2-ic/2)}\det
\frac{1}{k_j-p_l}.
\end{split}
\end{equation*}
The solution to the recursive equations is then
\begin{equation}
\begin{split}
&  \mathcal{F}^{++}_N (p_1,\dots,p_N,\nu|k_1,\dots,k_N,\mu)=
c (c(\mu-\nu))^{N-2}  \frac{c^2}{\prod_j(\mu-k_j+ic/2)(\nu-p_j-ic/2)} \\
&\hspace{2cm}\times \big(\sum_j (k_j-p_j)\big)\big(\nu-\mu+\sum_j (k_j-p_j)\big)
\det \frac{1}{k_j-p_l}.
\end{split}
\end{equation}

\subsection{Bosons}

In the bosonic case $\tilde S_1(u)=(u-ic)/(u+ic)$ and the recursive
equations for the polarized states (no magnonic rapidities) 
produce the form factors of the Lieb-Liniger model. Therefore it is
expected, that in those cases, where the recursive equations are
homogeneous with fixed magnonic rapidities, the form factors can be
obtained as generalizations of the Lieb-Liniger formulas. 

There are
three cases where this occurs, with one magnon at most in each of the
states:
\begin{itemize}
\item Matrix elements of the down spin field operator:
  $\bra{N,0}\Psi_-\ket{N+1,1}$
\item Matrix elements of the spin-flip operator:
 $\bra{N,0}\Psi_+^\dagger\Psi_-\ket{N,1}$
\item Matrix elements of the density of down spins:
 $\bra{N,1}\Psi_-^\dagger\Psi_-\ket{N,1}$
\end{itemize}

The next matrix elements to consider would be the ones
$\bra{N,1}\Psi_+^\dagger\Psi_+\ket{N,1}$. However, in this case there
is an inhomogeneous term 
at each step of the recursion; it is given by a Lieb-Liniger density form
factor. Therefore, already this case is outside the scope of the
present methods and it requires new ideas.

\subsubsection{The matrix elements $\bra{N,0}\Psi_-\ket{N+1,1}$}

The residue property at $p_N\to k_{N+1}$ is
\begin{equation}
  \label{mirage}
\begin{split}
     \mathcal{F}^-_N(p_1,\dots,p_N|k_1,\dots,k_{N+1},\mu)\sim
\frac{i}{k_{N+1}-p_N}\times\hspace{6cm}\\
\times\left[1-\frac{\mu-k_{N+1}+ic/2}{\mu-k_{N+1}-ic/2}
\prod_{j=1}^N S(k_{j,N+1})\prod_{k=1}^{N-1}S(p_{Nk})
\right] 
\mathcal{F}^-_{N-1}(p_1,\dots,p_{N-1}|k_1,\dots,k_{N},\mu).
\end{split}
\end{equation}

The starting point for the recursion is the formal result
\begin{equation*}
  \mathcal{F}^-_0(k_0,\mu)=\frac{-ic}{\mu-k_0-ic/2}.
\end{equation*}

It follows from Lemma \ref{PNdegree} that the recursion with respect
to the $p$ variables completely fixes this form factor. 
We find the following solution:
\begin{equation}
\label{hawker}
   \mathcal{F}_N^-(p_1,\dots,p_N|k_1,\dots,k_{N+1},\mu)=
\prod_{i>j} \frac{k_i-k_j+ic}{p_i-p_j+ic}
 \frac{-ic}{\prod_j(\mu-k_j-ic/2)}\ \det \mathbb{M},
\end{equation}
where $\mathbb{M}$ is an $N\times N$ matrix with entries
\begin{equation*}
  \mathbb{M}_{jk}=M_{jk}-M_{N+1,k}
\end{equation*}
with
\begin{equation}
\label{ready-or-not}
  M_{jk}=t(p_k,k_j)
h_{\frac{1}{2}}(\mu,k_j)
\frac{\prod_{m=1}^N h(p_m,k_j)}{\prod_{m=1}^{N+1}
    h(k_m,k_j)}+ t(k_j,p_k)h_{\frac{1}{2}}(k_j,\mu) \frac{\prod_{m=1}^N h(k_j,p_m)}{\prod_{m=1}^{N+1}
    h(k_j,k_m)}
\end{equation}
and
\begin{equation*}
  h(u)={u+ic} \qquad\qquad h_{\frac{1}{2}}(u)=u+ic/2 \qquad\qquad t(u)=\frac{-c}{u(u+ic)}.
\end{equation*}

The first example is
\begin{equation*}
  \mathcal{F}^-_1(p|k_0,k_1,\mu)=
\frac{k_1-k_0}{k_1-k_0-ic}
\frac{ic^2(k_0+k_1-2\mu)}{(\mu-k_0-ic/2)(\mu-k_1-ic/2)(p-k_0)(p-k_1)}.
\end{equation*}

\bigskip

An alternative expression for the same function is a
generalization of the formula \eqref{finally}:
\begin{equation}
\label{maraeleglesz}
\begin{split}
&   \mathcal{F}^{-}_N(\{p\}_N|\{k\}_{N+1},\mu)=\\
&\hspace{1cm}\frac{i P}{c^{N+1}}
\prod_{j>l}\frac{1}{k_j-k_l-ic} \prod_{j>l}\frac{1}{p_j-p_l+ic}
  \frac{-ic}{\prod_j (\mu-k_j-ic/2)}
\prod_{j,k} \frac{1}{k_j-p_l}.
\end{split}
\end{equation}
Here
\begin{equation}
\begin{split}
 P=&\mathop{\sum_{\alpha_j=0,1}}
\sum_{\beta_l=0,1}  (-1)^{\sum_j \alpha_j+\sum_l \beta_l }
\prod_{i1<i2} (p_{i1}-p_{i_2}+(\alpha_{i_1}-\alpha_{i_2})ic)\times \\
&\times \prod_{i1<i2} (k_{i1}-k_{i_2}+(\beta_{i_1}-\beta_{i_2})ic)
\prod_{i1,i2} (p_{i1}-k_{i_2}-(\alpha_{i_1}-\beta_{i_2})ic)  \times\\
&\times\prod_j (\mu-k_j+(2\beta_j-1)ic/2).
\end{split}
\end{equation}
It is easy to see that \eqref{maraeleglesz} satisfies all the
conditions and the starting value for the recursion, therefore it is
equivalent to \eqref{hawker}.

\subsubsection{The matrix elements $\bra{N,0}\Psi^\dagger_+\Psi_-\ket{N,1}$}

In this case the recursion relation is analogous to \eqref{mirage}
and we found the following solution, which is given by a generalization of \eqref{FFK}:
\begin{equation}
  \label{FFKMSF}
  \begin{split}
& 
 \mathcal{F}_N^{+-}(\{p\}|\{k\},\mu)=\frac{i}{c} (-1)^{N(N-1)/2}
\prod_{j>l}\frac{1}{k_j-k_l-ic} \prod_{j>l}\frac{1}{p_j-p_l+ic}\\
&\hspace{3cm}\times \prod_{o=1}^N \prod_{l=1}^N (k_o-p_l+ic)\times 
\det V \times \frac{-ic}{\prod_j (\mu-k_j-ic/2)}.
  \end{split}
\end{equation}
Here $V$ is an $(N+1)\times (N+1)$ matrix with entries
\begin{equation}
  \begin{split}
 V_{jl}&=(p_l-\mu+ic/2) \tilde t(k_j,p_l)+\\ 
&\hspace{0.5cm}+(p_l-\mu-ic/2)\tilde t(p_l,k_j)
\prod_{o=1}^N
\frac{(p_l-k_o+ic)(p_l-p_o-ic)}{(p_l-k_o-ic)(p_l-p_o+ic)}, \qquad  j,l=1\dots N\\
V_{N+1,j}&=\prod_{o=1}^N \frac{p_o-p_j+ic}{k_o-p_j+ic} \qquad\text{and}\quad
V_{j,N+1}=1,\qquad  j=1\dots N\\
V_{N+1,N+1}&=0
  \end{split}
\end{equation}
and
\begin{equation*}
\tilde t(u)=\frac{-i}{u(u+ic)}  
\end{equation*}

An alternative representation for the same function is given by
\begin{equation}
\label{korsakoff23}
\begin{split}
&   \mathcal{F}^{+-}_N(\{p\}_N|\{k\}_{N},\mu)=\\
&\hspace{1cm}\frac{-i}{2c^{N}}
\prod_{j>l}\frac{1}{k_j-k_l-ic} \prod_{j>l}\frac{1}{p_j-p_l+ic}
\times  P\times \frac{-ic}{\prod_j (\mu-k_j-ic/2)}
\prod_{j,k} \frac{1}{k_j-p_l}
\end{split}
\end{equation}
with
\begin{equation}
\begin{split}
 P=&\mathop{\sum_{\alpha_j=0,1}}
\sum_{\beta_l=0,1}  (-1)^{\sum_j \alpha_j+\sum_l \beta_l }
\prod_{i1<i2} (p_{i1}-p_{i_2}+(\alpha_{i_1}-\alpha_{i_2})ic)\times \\
&\times \prod_{i1<i2} (k_{i1}-k_{i_2}+(\beta_{i_1}-\beta_{i_2})ic)
\prod_{i1,i2} (p_{i1}-k_{i_2}-(\alpha_{i_1}-\beta_{i_2})ic)  \times\\
&\times\prod_j (\mu-k_j+(2\beta_j-1)ic/2)\times
 \sum_{j=1}^N \left((-1)^{\alpha_j}+ (-1)^{\beta_j}\right)
\end{split}
\end{equation}

The first two cases are given explicitly as
\begin{equation*}
     \mathcal{F}^{+-}_1(p|k,\mu)=\frac{-ic}{\mu-k-ic/2} 
\end{equation*}
\begin{equation*}
\begin{split}
 \mathcal{F}^{+-}_2(p_1,p_2|k_1,k_2,\mu)=&
 (c^2 - 4 k_1 k_2 + 2 (k_1+k_2-p_1-p_2) \mu +  (k_1+k_2)
 (p_1+p_2))\times \\
&\frac{-ic}{(\mu-k_1-ic/2)(\mu-k_2-ic/2)}
\frac{k_2-k_1}{k_2-k_1-ic} \frac{p_2-p_1}{p_2-p_1+ic}\times
\\
&\frac{-c (k_1+k_2-p_1-p_2)}{(k_1-p_1)(k_1-p_2)(k_2-p_1)(k_2-p_2)}.
  \end{split}
\end{equation*}

In the present case the kinematic recursion relation itself is not
sufficient to fix the form factor. We compared the two representations
above to each other and to the coordinate Bethe Ansatz results. We
performed the analytical check up to $N=3$ and numerical checks for
$N=4,5$ with \texttt{Mathematica}. Complete agreement was found, which
provides a very strong justification for the general case of $N>5$.

\subsubsection{The matrix elements $\bra{N,1}\Psi_-^\dagger \Psi_-\ket{N,1}$}

The residue property at $p_N\to k_{N}$ is
\begin{equation}
  \label{mirage2}
\begin{split}
 &    \mathcal{F}^{--}_N(p_1,\dots,p_N,\nu|k_1,\dots,k_{N},\mu)\sim
\frac{i}{k_{N}-p_N}\times\\
&\hspace{2cm}\left[1-\frac{\mu-k_{N}+ic/2}{\mu-k_{N}-ic/2}\frac{\nu-p_{N}-ic/2}{\nu-p_{N}+ic/2}
\prod_{j=1}^{N-1} S(k_{jN})\prod_{k=1}^{N-1}S(p_{Nk})
\right] \times\\
&\hspace{3cm}\mathcal{F}^{--}_{N-1}(p_1,\dots,p_{N-1},\nu|k_1,\dots,k_{N-1},\mu).
\end{split}
\end{equation}

One solution is
\begin{equation}
  \label{FFKM}
  \begin{split}
& \mathcal{F}_N^{--}(\{p\},\nu|\{k\},\mu)=\frac{-i}{c} (-1)^{N(N+1)/2}
\prod_{j>l}\frac{1}{k_j-k_l-ic} \prod_{j>l}\frac{1}{p_j-p_l+ic}
\\ & \hspace{2cm}\times \det V\times \prod_{o=1}^N \prod_{l=1}^N (k_o-p_l+ic)
\times  \frac{c^2}{\prod_j (\mu-k_j-ic/2)(\nu-p_j+ic/2)}.
  \end{split}
\end{equation}
Here $V$ is an $(N+1)\times (N+1)$ matrix with entries
\begin{equation}
  \begin{split}
 V_{jl}&=(p_l-\mu+ic/2)(p_l-\nu-ic/2) \tilde t(k_j,p_l)+\\ &
(p_l-\mu-ic/2)(p_l-\nu+ic/2)\tilde t(p_l,k_j)
\prod_{o=1}^N
\frac{(p_l-k_o+ic)(p_l-p_o-ic)}{(p_l-k_o-ic)(p_l-p_o+ic)}, \qquad j,l=1\dots N\\
V_{N+1,j}&=\prod_{o=1}^N \frac{p_o-p_j+ic}{k_o-p_j+ic} \qquad\text{and}\quad
V_{j,N+1}=1,\qquad   j=1\dots N\\
V_{N+1,N+1}&=0
  \end{split}
\end{equation}
and
\begin{equation*}
\tilde t(u)=\frac{-i}{u(u+ic)}  .
\end{equation*}

An alternative expression reads
\begin{equation}
\label{korsakoff2}
\begin{split}
&   \mathcal{F}^{--}_N(\{p\}_N,\nu|\{k\}_{N},\mu)=\\
&\hspace{1cm}c^{-(N+1)}\frac{\sum_j (k_j-p_j)}{\nu-\mu+2\sum_j (k_j-p_j)}
\prod_{j>l}\frac{1}{k_j-k_l-ic} \prod_{j>l}\frac{1}{p_j-p_l+ic}\\
&\hspace{1cm}\times  P\times \frac{c^2}{\prod_j(\mu-k_j-ic/2)(\nu-p_j+ic/2)} 
\prod_{j,k} \frac{1}{k_j-p_l}.
\end{split}
\end{equation}
Here
\begin{equation}
\begin{split}
 P=&\mathop{\sum_{\alpha_j=0,1}}
\sum_{\beta_l=0,1}  (-1)^{\sum_j \alpha_j+\sum_l \beta_l }
\prod_{i1<i2} (p_{i1}-p_{i_2}+(\alpha_{i_1}-\alpha_{i_2})ic)\times \\
&\times \prod_{i1<i2} (k_{i1}-k_{i_2}+(\beta_{i_1}-\beta_{i_2})ic)
\prod_{i1,i2} (p_{i1}-k_{i_2}-(\alpha_{i_1}-\beta_{i_2})ic)  \times\\
&\times\prod_j (k_j-\mu-(2\beta_j-1)\frac{ic}{2})
(p_j-\nu-(2\alpha_j-1)\frac{ic}{2})
\end{split}
\end{equation}

The first two cases are given explicitly as
\begin{equation*}
     \mathcal{F}^{--}_1(p,\nu|k,\mu)=\frac{c^2}{(\mu-k-ic/2)(\nu-p+ic/2)} 
\end{equation*}

\begin{equation}
\label{twentythree}
\begin{split}
& \mathcal{F}^{--}_2(p_1,p_2,\nu|k_1,k_2,\mu)=
\frac{c (k_1+k_2-p_1-p_2)
  (k_1-k_2)(p_1-p_2)}{(k_1-p_1)(k_1-p_2)(k_2-p_1)(k_2-p_2)}\times\\
&\hspace{2cm}\frac{c^2}{(\mu-k_1-ic/2)(\mu-k_2-ic/2)(\nu-p_1+ic/2)(\nu-p_2+ic/2)}\times
\\
&\hspace{2cm}(c^2 (\tilde \mu - \tilde \nu) - (k_1-k_2)^2 \tilde \nu 
-   2 (k_1+k_2-p_1-p_2) \tilde \mu \tilde \nu  + 
 + \tilde \mu (p_1-p_2)^2 ).
  \end{split}
\end{equation}
Here we used the auxiliary variables
\begin{equation*}
  \tilde \mu=\mu-\frac{k_1+k_2}{2}\qquad\qquad
\tilde \nu=\nu-\frac{p_1+p_2}{2}.
\end{equation*}

Once again we compared the two representations
above to each other and to the coordinate Bethe Ansatz results up to
$N=5$ and found complete agreement. 

\section{The nested Bethe Ansatz in a finite volume}

\label{fftcsa1hehe}

The previous sections were concerned with the infinite volume form
factors of the multi-component systems. Here we consider the
two-component case in a finite volume. 

First we recall the conditions for the periodicity of the nested Bethe
Ansatz wave function.

\begin{thm}
  The wave function \eqref{hukkle} is periodic in a finite
  volume $L$ iff the rapidities $\{p\}_N$ and $\{\mu\}_M$ satisfy the following coupled set of
  equations:
\begin{equation}
\label{nestedBA0}
  e^{ip_j L}
=
\mathop{\prod_{l=1}^N}_{l\ne j} 
\tilde S_1(p_l-p_j)
\prod_{l=1}^M
\frac{\mu_l-p_j+i\sigma c/2}{\mu_l-p_j-i\sigma c/2}
\qquad j=1\dots N
\end{equation}
\begin{equation}
\label{nestedBA1}
  \prod_{j=1}^N
  \frac{\mu_l-p_j+ic/2}{\mu_l-p_j-ic/2}=
\mathop{\prod_{j=1}^M}_{j\ne l}
\frac{\mu_l-\mu_j+ic}{\mu_l-\mu_j-ic}
\qquad l=1\dots M.
\end{equation}
\end{thm}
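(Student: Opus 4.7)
The plan is to reduce the periodicity requirement on the coordinate wave function \eqref{hukkle} to an eigenvalue equation for the inhomogeneous transfer matrix $t(u|\{p\})$ acting on the spin vector $\omega_N$, and then diagonalize that equation by the standard nested Algebraic Bethe Ansatz.

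First I would impose $\chi_N(x_1,\dots,x_j+L,\dots,x_N) = \chi_N(x_1,\dots,x_N)$ for each $j$. Taking $j=1$ and evaluating at a point in the fundamental domain $x_1 < \dots < x_N$, the shifted configuration $(x_1+L, x_2,\dots, x_N)$ lies in the adjacent domain where the ordering permutation $Q$ is the cyclic shift $1 \mapsto N$. Writing both sides through \eqref{dejarenX} and using the exchange property \eqref{LLex} to move the rapidity $p_1$ past $p_2,\dots,p_N$, each elementary exchange contributes a factor $S_1$. Combined with the statistical sign $\sigma^{N-1}$ arising from $\rho(Q^{-1})$, these reorganize via \eqref{best} into $\prod_{l\neq 1} \tilde S_1(p_l - p_1)$, matching the scalar factor in \eqref{nestedBA0}.

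Second, I would identify the residual action on $\omega_N$, generated by the string of $Y^{ab}_{jk}$ operators produced by cyclically transporting a particle once around the ring, with the transfer matrix of the auxiliary spin chain. Concretely, the product $X_{aN}(p_j-p_N)\cdots \widehat{X_{aj}}\cdots X_{a1}(p_j-p_1)$ traced over the auxiliary space is precisely $t(p_j|\{p\})$ in the sense of \eqref{T1}-\eqref{T2} (the omitted $j$-th factor being responsible for the self-scattering factor already absorbed above). This reduces the periodicity condition for particle $j$ to the vector equation
\begin{equation*}
e^{ip_j L}\, \omega_N(\{p\},\{\mu\}) = \left(\prod_{l\neq j} \tilde S_1(p_l-p_j)\right) t(p_j|\{p\})\, \omega_N(\{p\},\{\mu\}).
\end{equation*}

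Third, I would diagonalize $t(u|\{p\})$ on $\omega_N = \tilde B(\mu_1+i\sigma c/2)\cdots \tilde B(\mu_M+i\sigma c/2)\,\ket{+}_N$ using standard ABA: the commutation relations implied by \eqref{RTT} allow one to move $\tilde A(u)$ and $\tilde D(u)$ through each $\tilde B$, producing a wanted eigenvalue $\Lambda(u|\{p\},\{\mu\})$ plus unwanted terms in which some $\tilde B(\mu_l+i\sigma c/2)$ is replaced by $\tilde B(u)$. At $u=p_j$ the inhomogeneity sits exactly at the $j$-th site, so the corresponding factor $R(0)$ collapses to a permutation; the vacuum eigenvalues of $\tilde A(p_j)$ and $\tilde D(p_j)$ on $\ket{+}_N$ reduce to $1$ and $0$ respectively, and the coefficients produced by commuting through the $\tilde B$'s assemble into exactly the magnonic product $\prod_l (\mu_l-p_j+i\sigma c/2)/(\mu_l-p_j-i\sigma c/2)$ appearing on the right-hand side of \eqref{nestedBA0}. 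Demanding that the unwanted $\tilde B(p_j)$-terms cancel — which is necessary for the periodicity identity to hold as a vector equation for every $j$ simultaneously — yields exactly the auxiliary Bethe equations \eqref{nestedBA1}.

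The main obstacle is the first two steps: carefully tracking the combinatorial factors generated by the exchange relation across adjacent fundamental domains, and verifying that the spin-space operator produced by the sequence of $Y^{ab}_{jk}$ really assembles into the untwisted transfer matrix $t(p_j|\{p\})$ rather than a twisted or shifted variant. Once this identification is clean the remainder is a direct application of the textbook nested ABA diagonalization.
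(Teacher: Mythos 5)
Your proposal follows essentially the same route as the paper's proof: compare the coefficient of a single exponential across the periodic shift, identify the resulting cyclic spin-transport operator $\rho(Q^{-1})\mathcal{Q}(Q,p)$ with the inhomogeneous transfer matrix evaluated at $u=p_j$ (the paper's steps \eqref{per1}--\eqref{per2}), and then read off \eqref{nestedBA0} from the eigenvalue while \eqref{nestedBA1} arises as the condition that $\omega_N$ be an eigenvector of $t(u|\{p\})$ --- the paper merely quotes the known ABA eigenvalue \eqref{eig2} from the literature instead of redoing the wanted/unwanted-term computation. One bookkeeping correction: it is not the auxiliary trace of the monodromy with the $j$-th factor \emph{omitted} that gives $t(p_j|\{p\})$; rather the \emph{full} transfer matrix at $u=p_j$ collapses because $X_{aj}(0)=\sigma P_{aj}$ converts the auxiliary trace into a product of $\hat S$-operators acting in the physical spaces, so the scalar $\tilde S_1$ factors emerge from the eigenvalue itself rather than from a separate exchange-property step as you arranged it.
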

\begin{proof}
For simplicity we only consider the periodicity with respect to the
variable $x_1$.  By symmetry it is enough to consider the periodicity
condition
\begin{equation*}
  \chi_N(-L/2,x_2,\dots,x_{N}|\{p\}_N,\{\mu\}_M)= \chi_N(x_2,\dots,x_{N},L/2|\{p\}_N,\{\mu\}_M)
\end{equation*}
in the domain
\begin{equation*}
  -L/2 < x_2 < \dots < x_{N} < L/2.
\end{equation*}
For simplicity we only consider the coefficient of the term
proportional to
\begin{equation}
\label{exp123}
  e^{i(x_1p_1+\dots+x_Np_N)},
\end{equation} 
this will lead to the condition \eqref{nestedBA0} with $j=1$. 
At $x_1=-L/2$ the coefficient is equal to $\omega_N(\{p\},\{\mu\})$. On the other
hand, at $x_1=L/2$ the coefficient is 
\begin{equation*}
\rho(Q^{-1})  \mathcal{Q}(Q,p) \omega_N(\{p\},\{\mu\}),
\end{equation*}
where $Q$ is the permutation giving $Qp=\{p_2,\dots,p_{N},p_1\}$. 
Thus we obtain the condition
\begin{equation}
\label{per1}
  \omega_N(\{p\},\{\mu\})=
e^{ip_1L}
\rho(Q^{-1})  \mathcal{Q}(Q,p) \omega_N(\{p\},\{\mu\}).
\end{equation}

It follows from the definition \eqref{T1} and  from 
$X_{a,1}(0)=\sigma P_{a,1}$ that
\begin{equation}
  T(p_1|\{p\})=\sigma P_{12}P_{23}\dots P_{N-1,N}P_{a,N} 
\hat S_{1,N} \hat S_{1,N-1} \dots  \hat S_{1,2} .
\end{equation}
Here $\hat S_{1,j}=Y^{j-1,j}_{1,j}$. Taking the trace with respect to
the auxiliary space
\begin{equation}
\label{per2}
  t(p_1|\{p\})=\sigma^{N}\rho(Q^{-1}) \hat S_{1,N} \hat S_{1,N-1} \dots
  \hat S_{1,2}=\sigma^{N} \rho(Q^{-1})\mathcal{Q}(Q,p).
\end{equation}
Using \eqref{per1} and \eqref{per2} the periodicity condition is
expressed as
\begin{equation}
   \omega_N(\{p\},\{\mu\})=
e^{ip_1L} \sigma^N t(p_1|\{p\})  \omega_N(\{p\},\{\mu\}).
\end{equation}

The eigenvalue equation
\begin{equation}
\label{second}
   t(u|\{p\})\omega_N(\{p\},\{\mu\}) =  \Lambda(u|\{p\}) \omega_N(\{p\},\{\mu\})
\end{equation}
can be solved by the standard methods of algebraic Bethe Ansatz.
It is known \cite{korepinBook} that the above equation is satisfied
whenever the magnonic rapidities are solutions to the inhomogeneous
Bethe equation \eqref{nestedBA1}.
Then the  eigenvalues read
\begin{equation}
\label{eig2}
  \Lambda(u|\{p\})=
\prod_{j=1}^N S_1(u-p_j)\left(
\prod_{l=1}^M \frac{\mu_l-u-i\sigma c/2}{\mu_l-u+i\sigma c/2}
+
\prod_{j=1}^N 
\frac{u-p_j}{u-p_j-i\sigma c}
\prod_{l=1}^M \frac{\mu_l-u+3i\sigma c/2}{\mu_l-u+i\sigma c/2}
\right).
\end{equation}
Substituting $u=p_1$ results in
\begin{equation}
\Lambda(p_1|\{p\})=  \prod_{j=1}^N \tilde S_1(u-p_j)
\prod_{l=1}^M \frac{\mu_l-u-i\sigma c/2}{\mu_l-u+i\sigma c/2}.
\end{equation}
with $\tilde S_1(u)=\sigma S_1(u)$. This leads to \eqref{nestedBA0}
with $j=1$. The other conditions with $j=2,\dots,N$ follow from the
symmetry properties of the wave function. It can be checked using the
Yang-Baxter equation that these equations guarantee the periodicity
for the coefficients of all exponentials and not only \eqref{exp123}
considered here.
\end{proof}

It is important to note that the states obtained by the finite
solutions to the equations
\eqref{nestedBA0}-\eqref{nestedBA1} do not span the full Hilbert space:
 it is known that the
Algebraic Bethe Ansatz construction \eqref{omegadef} only produces the states
which are highest weight with respect to the overall $SU(2)$ symmetry
\cite{Faddeev-ABA-intro}. In order to obtain the non-highest weight
states one must act with spin lowering operator. This is known to be
equivalent (in the proper normalization) to sending one magnonic
rapidity to infinity. Supplied with these solutions the eqs.
\eqref{nestedBA0}-\eqref{nestedBA1} are believed to yield a complete
set of states.

\subsection{Form factors in finite volume}

We define the finite volume form factors in the same way as in the
case of the Lieb-Liniger model. For example, in the case of the field
operator the form factor is given by
\begin{equation}
\begin{split}
&  \mathbb{F}^l_N(\{p\}_N,\{\nu\}_{M'},\{k\}_{N+1},\{\mu\}_M)=
\sqrt{N+1} \int_{-L/2}^{L/2}  dx_1 \dots dx_N\
\\
&\hspace{2cm}\times \Big\langle  \chi_{N}(x_1,\dots,x_N|\{p\},\{\nu\})
\Big| U^{(0)}_l \chi_{N+1}(0,x_1,\dots,x_N|\{k\},\{\mu\})
\Big\rangle_N.
\end{split}
\end{equation}
Spin conservation requires $M'=M+\frac{l-1}{2}$. 
Analogous definitions can be given for the form factors of the
bilinear operators.

\begin{thm}
\label{ugyanaz}
The form factors are the same in finite and infinite volume. In other
words, if both sets $\{\{p\},\{\nu\}\}$ and $\{\{k\},\{\mu\}\}$ are
solution to the nested Bethe
equations and there are no coinciding particle rapidities ($p_j\ne k_l$), then 
\begin{equation}
\label{sudaya1}
   \mathbb{F}^l_N(\{p\}_N,\{\nu\}_{M'},\{k\}_{N+1},\{\mu\}_M)=
 \mathcal{F}^l_N(\{p\}_N,\{\nu\}_{M'},\{k\}_{N+1},\{\mu\}_M).
\end{equation}
An analogous relation holds for the
matrix elements of the bilinear operators.
\end{thm}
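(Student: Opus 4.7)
The plan is to mimic the strategy of Theorem \ref{fftcsa}, now allowing vector-valued wave functions and the more elaborate amplitude structure dictated by $\omega_N(\{p\},\{\mu\})$ and $\rho(Q^{-1})\mathcal{Q}(R,p)$. As in the one-component case, it suffices to establish the identity region by region in $x$-space and coordinate by coordinate; the vector-valued structure only enters as an overall linear combination of amplitudes in each region and does not interfere with the analytic argument.

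First I would fix the field operator case and write out both sides of \eqref{sudaya1} as sums over the permutations $Q\in S_N$ and $R\in S_{N+1}$ appearing in \eqref{dejarenX}, and over the $2^N$ regions that arise from the signs of the individual $x_j$ relative to $0$. Within a given region each integrand is a pure exponential $e^{i\sum_j q_j x_j}$ times a constant vector-valued amplitude built out of $\omega_N$, $\omega_{N+1}$ and the $\mathcal{Q}$ matrices. For each such term I would perform the one-dimensional integrals: in infinite volume using the regulator prescription \eqref{rules}, in finite volume using the Newton--Leibniz formula. In both schemes the $x_j=0$ endpoint produces the same local residue $\pm i/q_j$ (this is the content of the identity already exploited in the proof of Theorem \ref{fftcsa}), so the contributions generated at $x_j=0$ cancel out of the difference between the finite and infinite volume form factors.

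The heart of the matter is then to show that the boundary terms produced at $x_j=\pm L/2$ in the Newton--Leibniz evaluation vanish. This is precisely where the nested Bethe equations enter: the theorem established just above (for the periodicity of \eqref{hukkle}) shows that \eqref{nestedBA0}--\eqref{nestedBA1} are exactly the conditions under which the vector-valued coefficient of every exponential $e^{i\sum_j p_jx_j}$ matches between $x_j=-L/2$ and $x_j=+L/2$, once one follows the particle around the circle and conjugates by the corresponding monodromy/transfer-matrix action $\sigma^N t(p_j|\{p\})$. The same is true for $\{k\},\{\mu\}$ on the ket side. Applying this periodicity term by term, and using that both sides of \eqref{sudaya1} involve the same amplitudes pre- and post-integration, the $\pm L/2$ boundary contributions in the Newton--Leibniz sum cancel in pairs, and one is left with the same expression as the regularized infinite volume integral.

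The essentially identical argument handles the bilinear operators, the only change being that one integrates $N-1$ coordinates and must split around the insertion point $x_0=0$ of $U^{(1)}_j$ and $U^{(1)}_l$; the local matching at $0$ and the periodic cancellation at $\pm L/2$ go through unchanged. The main obstacle I anticipate is bookkeeping: one must keep careful track of the permutation-dependent vector amplitudes $\rho(Q^{-1})\mathcal{Q}(R,p)\omega_N$ in each region so that the periodicity relation $\omega_N(\{p\},\{\mu\}) = e^{ip_jL}\sigma^N t(p_j|\{p\})\omega_N(\{p\},\{\mu\})$ can be invoked in exactly the form needed to cancel the boundary terms. Once this dictionary between coordinate regions, permutations and monodromy-matrix entries is set up cleanly, the rest reduces to the scalar calculation of Theorem \ref{fftcsa}.
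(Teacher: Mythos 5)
Your proposal is correct and follows essentially the same route as the paper, whose own proof is just a terse version of your argument: it invokes the same arguments as Theorem \ref{fftcsa}, noting that once the nested Bethe equations \eqref{nestedBA0}--\eqref{nestedBA1} guarantee periodicity of the wave function \eqref{hukkle}, the Newton--Leibniz boundary contributions at $x=\pm L/2$ cancel and only the $x=\pm 0$ contributions survive, which coincide with the regularized infinite-volume integrals \eqref{rules}. One bookkeeping remark: the decomposition must be into the full coordinate-ordering domains of \eqref{dejarenX} (refined by the position of the insertion point $0$), not merely the $2^N$ sign patterns of the $x_j$, since the vector amplitudes $\rho(Q^{-1})\mathcal{Q}(R,p)\,\omega$ jump when two coordinates cross --- but this does not affect your argument, which you in any case flag under ``bookkeeping.''
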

\begin{proof}
The theorem can be proven with the same arguments as Theorem
\ref{fftcsa}. The key idea is that when the wave functions are
periodic, the contributions of the Newton-Leibniz formula at $x=\pm
L/2$ cancel each other and  only the contributions at $x=\pm 0$
remain, which exactly coincide with those given by the infinite volume
regularization of the oscillating integrals.
\end{proof}

Finally we note that the normalized form factors are given by
\begin{equation}
  \label{full-answer}
  \begin{split}
    \bra{\{p\}_N,\{\nu\}_{M'}}\Psi_l\ket{\{k\}_{N+1},\{\mu\}_{M}}=
\frac{ \mathcal{F}^l_N(\{p\}_N,\{\nu\}_{M'},\{k\}_{N+1},\{\mu\}_{M})}
{\sqrt{\mathcal{N}(\{p\}_N,\{\nu\}_{M'})\ \mathcal{N}(\{k\}_{N+1},\{\mu\}_M)}},
  \end{split}
\end{equation}
and similarly for the bilinear operators. Here $\mathcal{N}$ denotes
the norm of the eigenstates
and it reads \cite{resh-su3,pang-zhao-norms,Hubbard-norms}
\begin{equation}
\label{norms}
  \mathcal{N}(\{p\}_N,\{\mu\}_M)=
c^M\ \det \mathcal{G} \prod_{1\le j<k \le M} \left(1+\frac{c^2}{(\mu_j-\mu_k)^2}\right),
\end{equation}
where $\mathcal{G}$ is an $(N+M)\times (N+M)$ matrix, also called the
generalized Gaudin-determinant. It is the Jacobian associated to the
coupled set of nested BA equations and it is given 
by
\begin{equation*}
  \mathcal{G}=
  \begin{pmatrix}
    \GGG_{pp} & \GGG_{p\mu} \\
\GGG_{\mu p} & \GGG_{\mu\mu} 
  \end{pmatrix},
\end{equation*}
where the elements are
\begin{equation}
  \begin{split}
(\GGG_{pp})_{jk}&=\delta_{jk}\left(L-\sigma
\sum_{l}
  \frac{2c'}{(c')^2+(p_j-\mu_l)^2}
\right)+\\
&\hspace{2cm}
+\frac{1+\sigma}{2}\left(\delta_{jk}\sum_l
\frac{2c}{c^2+(p_j-p_l)^2}
-  \frac{2c}{c^2+(p_j-p_k)^2}\right)\\
(\GGG_{p\mu})_{jk}&=(\GGG_{\mu p})_{kj}=\sigma
\frac{2c'}{(c')^2+(p_j-\mu_k)^2}\\
(\GGG_{\mu\mu})_{jk}&=\delta_{jk}\left(
\sum_l \frac{2c'}{(c')^2+(p_l-\mu_j)^2}-
\sum_o \frac{2c}{c^2+(\mu_o-\mu_j)^2}
\right)+
\frac{2c}{c^2+(\mu_k-\mu_j)^2}
  \end{split}
\end{equation}
with $c'=c/2$.

\section{Conclusions and Outlook}

\label{sec:conclusions}

We investigated the form factors of local operators in the
multi-component Quantum Non-Linear Schr\"odinger equation. The main
 results of the present work are the following:
\begin{enumerate}
\item Establishing the analytic properties of the (infinite volume)
  form factors in the general $M$-component case; in particular the
   kinematical pole equation \eqref{infect-me2}. This can be
  regarded as a non-relativistic version of the well-known kinematical
  pole axiom from integrable relativistic QFT.
\item In the two-component case introducing the magnonic form factors and determining their
  analytic structure, in particular the kinematical pole equation
  \eqref{psybient}.
\item The solution of the equation  \eqref{psybient} in a number of simple
  cases, involving at most one magnonic rapidity per state. This was
  presented in section \ref{solutions}. 
\item Making a connection to the finite volume form factors,
  established by the Theorems \ref{fftcsa} and \ref{ugyanaz}, which
  state that the un-normalized form factors are exactly the same in
  finite and infinite volume. The normalized finite volume matrix
  elements are given by expressions \eqref{fanswer} and \eqref{full-answer}.
\end{enumerate}

The most interesting question seems to be whether or not new solutions
of the kinematical pole equations can be found. A possible direction
is to consider integral representations for the co-vector valued form
factors in the spirit of the so-called off-shell Bethe
Ansatz \cite{Karowski-LSZ}. However, it is not clear if such formulas
can be found or if they would be useful for
the calculation of correlation functions.

Another interesting question is to consider expectation values of
local (or non-local) operators in a finite volume. All the techniques
presented here apply in the case where there are no coinciding
particle rapidities; this lies at the heart of identifying the finite
volume and infinite volume matrix elements. On the other hand, it is
known from the one-component case that the mean values can always be
expressed with the properly regularized diagonal limits of the
infinite volume form factors. This leads to integral representations
for the mean values \cite{fftcsa2,LM-sajat}. A natural generalization
is to consider this problem in the multi-component case; research in
this direction is in progress.

It would be interesting to consider the singularity properties of form
factors in the framework of the Algebraic Bethe Ansatz. To our best
knowledge previous works only considered scalar products and norms of
Bethe states; our kinematical pole equation \eqref{psybient}
appears to be new. A very natural step would be to derive its ``finite
volume'' version from ABA. This could help in clarifying the relation
between the finite volume expectation values and the infinite volume
form factors.

In the end of Section \ref{2c} we noted that in the bosonic case the 
kinematical pole equation \eqref{psybient} does not contain enough
information to determine the form factors. It is an important open
question whether additional constraints can be found, which would make
the recursion equations constraining enough. Also, it is an
interesting question why does our formula \eqref{FFKM} work (at least
in the cases $N\le 5$) for the density operator of the down spin
particles. We did try other ways to generalize known formulas, and
only this one did reproduce the results from coordinate Bethe
Ansatz. This question is also left for further research.

As it was already mentioned in the introduction, the infinite volume
Quantum Inverse Scattering Method (QISM) yields a representation for the
field operator in terms of the Faddeev-Zamolodchikov operators
\cite{PhysRevD.21.1523}. From 
this result (the so-called quantum Rosales expansion) the form factors
can be simply read off using only the Faddeev-Zamolodchikov algebra
\cite{creamer-thacker-wilkinson-rossz}. The quantum Rosales expansion
is  established also in the multi-component case
\cite{Zhao-Pu-2comp-fermions-qism,multicomponent-bose-fermion-qism}
and this provides us an alternative way to obtain the form
factors. We checked in a number of simple cases (with low particle
number) that the results thus
obtained coincide with those presented in Section \ref{solutions}
\cite{nincskesz}. However, the 
QISM does not seem to lead to compact and manageable formulas for the generic
form factors with higher particle number. On the other hand, it would be interesting to consider
the two-point functions in this framework: it might be possible to
derive integral series for the finite temperature and finite density
correlations along the lines of
\cite{creamer-thacker-wilkinson-rossz}. 

In the present work we only considered the non-relativistic multi-component continuum
systems. However, it is expected that the ideas developed here
also apply to other model solvable by the nested Bethe Ansatz. In
particular it is expected that the annihilation pole equation \eqref{infect-me2}
(or its ``magnonic version'' \eqref{psybient} and its appropriate
generalizations to higher rank cases) hold in other $sl(N)$
related models, for example in the $SU(N)$ symmetric Heisenberg spin
chains. 

The identification of the finite and infinite volume form factors is
an important ingredient of the present work. It is very natural to ask:
is this result valid in integrable relativistic QFT? As it was
mentioned in the introduction, in the realm of integrable QFT the
infinite volume form factors are obtained using the so-called form
factor bootstrap program, which has been established for theories
with both diagonal and non-diagonal S-matrices. On the other hand,
less is known about the finite volume matrix elements.
In \cite{fftcsa1} it was
shown that a relation like our \eqref{fanswer}
(concerning the 1-component model) holds in the massive relativistic theories
with diagonal scattering. However, in the relativistic case the
equality is not exact: there are finite size effects (decaying
exponentially with the volume) due to virtual particle-antiparticle
pairs ``travelling around the world''. Concerning non-diagonal
scattering theories, for example the sine-Gordon model, it is
known that the finite size spectrum can be obtained (up to exponential
corrections) with 
essentially the same nested Bethe Ansatz construction, as applied in
this work \cite{Takacs-Feher1}.
The introduction
of the ``magnonic form factors'' is very natural also in the relativistic case,
and it is expected that a relation like \eqref{full-answer} holds as
well, once again up to the exponential corrections. As it was remarked
above, an interesting open question (both from the coordinate Bethe
Ansatz and the integrable QFT point of view) is the treatment of the matrix elements with
coinciding particle rapidities, or in other words the finite volume evaluation of
the ``disconnected pieces'' of the form factors. The resolution of
this question is important for the evaluation of finite temperature
correlations in massive integrable QFT \cite{fftcsa2,Essler:2009zz,D22,Doyon:finiteTreview}.

\vspace{1cm}
{\bf Acknowledgements} 

\bigskip

We are grateful to Jean-S\'ebastien Caux for useful discussions and
for bringing the paper \cite{impurity-spin-exp-koehl} to our
attention. Also, we acknowledge his contribution to finding our final formulas
for the form factors of the bilinear operators, expressed as
single determinants (see the footnote on page 8).

We are thankful to G\'abor Tak\'acs and
Jean-S\'ebastien Caux for useful comments about the manuscript.

B. P. was supported by the VENI Grant 016.119.023 of the NWO.

M. K. acknowledges funding from The Welch Foundation, Grant No. C-1739.

\addcontentsline{toc}{section}{References}
\bibliography{../../pozsi-general.bib}

\providecommand{\href}[2]{#2}\begingroup\raggedright\begin{thebibliography}{10}

\bibitem{Lieb-Liniger}
E.~H. Lieb and W.~Liniger, ``Exact Analysis of an Interacting Bose Gas. I. The
  General Solution and the Ground State,''
  \href{http://dx.doi.org/10.1103/PhysRev.130.1605}{{\em Phys. Rev.} {\bf 130}
  (1963)  1605}.

\bibitem{Lieb2}
E.~H. Lieb, ``Exact Analysis of an Interacting Bose Gas. II. The Excitation
  Spectrum,'' \href{http://dx.doi.org/10.1103/PhysRev.130.1616}{{\em Phys.
  Rev.} {\bf 130} (1963) no.~4, 1616--1624}.

\bibitem{XXX}
H.~Bethe, ``Zur Theorie der Metalle,'' {\em Zeitschrift {f\"ur} Physik} {\bf
  A71} (1931)  205.

\bibitem{XXZ1}
R.~Orbach, ``Linear Antiferromagnetic Chain with Anisotropic Coupling,''
  \href{http://dx.doi.org/10.1103/PhysRev.112.309}{{\em Phys. Rev.} {\bf 112}
  (1958) no.~2, 309--316}.

\bibitem{XXZ2}
L.~R. Walker, ``Antiferromagnetic Linear Chain,''
  \href{http://dx.doi.org/10.1103/PhysRev.116.1089}{{\em Phys. Rev.} {\bf 116}
  (1959) no.~5, 1089--1090}.

\bibitem{XXZ3}
C.~N. Yang and C.~P. Yang, ``One-Dimensional Chain of Anisotropic Spin-Spin
  Interactions. I. Proof of Bethe's Hypothesis for Ground State in a Finite
  System,'' \href{http://dx.doi.org/10.1103/PhysRev.150.321}{{\em Phys. Rev.}
  {\bf 150} (1966) no.~1, 321--327}.

\bibitem{Faddeev:1979gh}
L.~D. Faddeev, E.~K. Sklyanin, and L.~A. Takhtajan, ``{The Quantum Inverse
  Problem Method. 1},''
{\em Theor. Math. Phys.} {\bf 40} (1980)  688--706.

\bibitem{faddeev-history}
L.~Faddeev, ``Instructive history of the quantum inverse scattering method,''
  \href{http://dx.doi.org/10.1007/BF00994626}{{\em Acta Applicandae
  Mathematicae} {\bf 39} (1995)  69--84}.

\bibitem{Faddeev-ABA-intro}
L.~D. {Faddeev}, ``{Algebraic Aspects of the Bethe Ansatz},''
  \href{http://dx.doi.org/10.1142/S0217751X95000905}{{\em Int. J. Mod. Phys. A}
  {\bf 10} (1995)  1845--1878},
  \href{http://arxiv.org/abs/arXiv:hep-th/9404013}{{\tt arXiv:hep-th/9404013}}.

\bibitem{korepin-norms}
V.~E. Korepin, ``Calculation of norms of Bethe wave functions,''
  \href{http://dx.doi.org/10.1007/BF01212176}{{\em Comm. Math. Phys.} {\bf 86}
  (1982)  391}.

\bibitem{slavnov-overlaps}
N.~A. Slavnov, ``Calculation of scalar products of wave functions and form
  factors in the framework of the algebraic Bethe ansatz,''
  \href{http://dx.doi.org/10.1007/BF01016531}{{\em Theoretical and Mathematical
  Physics} {\bf 79} (1989)  502--508}.

\bibitem{springerlink:10.1007/BF01029221}
N.~A. Slavnov, ``Nonequal-time current correlation function in a
  one-dimensional Bose gas,'' \href{http://dx.doi.org/10.1007/BF01029221}{{\em
  Theoretical and Mathematical Physics} {\bf 82} (1990)  273--282}.

\bibitem{kojima-korepin-slavnov}
T.~Kojima, V.~E. Korepin, and N.~A. Slavnov, ``Determinant Representation for
  Dynamical Correlation Functions of the Quantum Nonlinear Schr\"odinger
  Equation,'' \href{http://dx.doi.org/10.1007/s002200050182}{{\em
  Communications in Mathematical Physics} {\bf 188} (1997)  657--689}.

\bibitem{korepinBook}
V.~Korepin, N.~Bogoliubov, and A.~Izergin, {\em Quantum inverse scattering
  method and correlation functions}.
\newblock Cambridge University Press, 1993.

\bibitem{2011arXiv1101.1626K}
K.~K. {Kozlowski}, ``{Large-distance and long-time asymptotic behavior of the
  reduced density matrix in the non-linear Schr{\"o}dinger model},''
  \href{http://arxiv.org/abs/1101.1626}{{\tt arXiv:1101.1626 [math-ph]}}.

\bibitem{nested-S-1}
F.~A. Berezin, G.~P. Pokhil, and V.~M. Finkelberg {\em Vestnik Most Gos. Univ.}
  {\bf 1} (1964)  21.

\bibitem{McGuire-BA}
J.~B. McGuire, ``Study of Exactly Soluble One Dimensional N-Body Problems,''
  \href{http://dx.doi.org/doi:10.1063/1.1704156}{{\em Journal of Mathematical
  Physics} {\bf 5} (1964) no.~5, 622}.

\bibitem{nested-S-2}
E.~Brezin and J.~Zinn-Justin {\em CR Acad. Sci, Paris} {\bf B263} (1966)  670.

\bibitem{Yang-nested}
C.~N. Yang, ``Some Exact Results for the Many-Body Problem in one Dimension
  with Repulsive Delta-Function Interaction,''
  \href{http://dx.doi.org/10.1103/PhysRevLett.19.1312}{{\em Phys. Rev. Lett.}
  {\bf 19} (1967)  1312--1315}.

\bibitem{Yang-nested-S}
C.~N. Yang, ``$S$ Matrix for the One-Dimensional $N$-Body Problem with
  Repulsive or Attractive $\delta$-Function Interaction,''
  \href{http://dx.doi.org/10.1103/PhysRev.168.1920}{{\em Phys. Rev.} {\bf 168}
  (1968)  1920--1923}.

\bibitem{sutherland-nested}
B.~Sutherland, ``Further Results for the Many-Body Problem in One Dimension,''
  \href{http://dx.doi.org/10.1103/PhysRevLett.20.98}{{\em Phys. Rev. Lett.}
  {\bf 20} (1968)  98--100}.

\bibitem{all-spin-chains}
S.~Belliard and E.~Ragoucy, ``The nested Bethe ansatz for 'all' closed spin
  chains,'' \href{http://dx.doi.org/10.1088/1751-8113/41/29/295202}{{\em
  Journal of Physics A: Mathematical and Theoretical} {\bf 41} (2008) no.~29,
  295202}.

\bibitem{HubbardBook}
F.~H.~L. Essler, H.~Frahm, F.~G\"{o}hmann, A.~Kl\"{u}mper, and V.~E. Korepin,
  \href{http://dx.doi.org/10.2277/0521802628}{{\em The One-Dimensional
  {H}ubbard Model}}.
\newblock Cambridge University Press, Cambridge, 2005.

\bibitem{Kulish-FZ-algebra}
P.~P. Kulish, ``Representation of the Zamolodchikov-Faddeev algebra,''
  \href{http://dx.doi.org/10.1007/BF01087242}{{\em Journal of Mathematical
  Sciences} {\bf 24} (1984)  208--215}.

\bibitem{kulish-resh-glN}
P.~P. Kulish and N.~Y. Reshetikhin, ``Diagonalisation of GL(N) invariant
  transfer matrices and quantum N-wave system (Lee model),''
  \href{http://dx.doi.org/10.1088/0305-4470/16/16/001}{{\em Journal of Physics
  A: Mathematical and General} {\bf 16} (1983) no.~16, L591}.

\bibitem{resh-su3}
N.~Y. Reshetikhin, ``Calculation of the norm of bethe vectors in models with
  SU(3)-symmetry,'' \href{http://dx.doi.org/10.1007/BF01099200}{{\em Journal of
  Mathematical Sciences} {\bf 46} (1989)  1694}.

\bibitem{pang-zhao-norms}
G.-D. Pang, F.-C. Pu, and B.-H. Zhao, ``Norms of Bethe wave functions for the
  nonlinear Schr{\" o}dinger model of spin-1/2 particles,''
  \href{http://dx.doi.org/10.1063/1.528993}{{\em Journal of Mathematical
  Physics} {\bf 31} (1990) no.~10, 2497--2500}.

\bibitem{Hubbard-norms}
F.~{G{\"o}hmann} and V.~E. {Korepin}, ``{The Hubbard chain: Lieb-Wu equations
  and norm of the eigenfunctions},''
  \href{http://dx.doi.org/10.1016/S0375-9601(99)00774-4}{{\em Physics Letters
  A} {\bf 263} (1999)  293--298},
  \href{http://arxiv.org/abs/arXiv:cond-mat/9908114}{{\tt
  arXiv:cond-mat/9908114}}.

\bibitem{tarasov-varchenko1}
V.~{Tarasov} and A.~{Varchenko}, ``{Asymptotic Solutions to the Quantized
  Knizhnik-Zamolodchikov Equation and Bethe Vectors},''
  \href{http://arxiv.org/abs/arXiv:hep-th/9406060}{{\tt arXiv:hep-th/9406060}}.

\bibitem{tarasov-varchenko2}
V.~{Tarasov} and A.~{Varchenko}, ``{Solutions to the Quantized
  Knizhnik-Zamolodchikov Equation and the Bethe Ansatz},''
  \href{http://arxiv.org/abs/arXiv:hep-th/9411181}{{\tt arXiv:hep-th/9411181}}.

\bibitem{pakuliak-ragoucy}
S.~{Belliard}, S.~{Pakuliak}, and E.~{Ragoucy}, ``{Universal Bethe Ansatz and
  Scalar Products of Bethe Vectors},''
  \href{http://dx.doi.org/10.3842/SIGMA.2010.094}{{\em SIGMA} {\bf 6} (2010)
  94}, \href{http://arxiv.org/abs/1012.1455}{{\tt arXiv:1012.1455 [math-ph]}}.

\bibitem{Mussardo:1992uc}
G.~Mussardo, ``{Off critical statistical models: Factorized scattering theories
  and bootstrap program},''
\href{http://dx.doi.org/10.1016/0370-1573(92)90047-4}{{\em Phys. Rept.} {\bf
  218} (1992)  215--379}.

\bibitem{zam-zam}
A.~B. Zamolodchikov and A.~B. Zamolodchikov, ``{Factorized S-matrices in two
  dimensions as the exact solutions of certain relativistic quantum field
  models},''
\href{http://dx.doi.org/10.1016/0003-4916(79)90391-9}{{\em Annals Phys.} {\bf
  120} (1979)  253--291}.

\bibitem{Karowski:1978vz}
M.~Karowski and P.~Weisz, ``{Exact Form-Factors in (1+1)-Dimensional Field
  Theoretic Models with Soliton Behavior},''
\href{http://dx.doi.org/10.1016/0550-3213(78)90362-0}{{\em Nucl. Phys.} {\bf
  B139} (1978)  455}.

\bibitem{Berg-Karowski-Weisz}
B.~Berg, M.~Karowski, and P.~Weisz, ``Construction of Green's functions from an
  exact $S$ matrix,'' \href{http://dx.doi.org/10.1103/PhysRevD.19.2477}{{\em
  Phys. Rev. D} {\bf 19} (1979)  2477--2479}.

\bibitem{smirnov_ff}
F.~A. Smirnov, ``{Form-factors in completely integrable models of quantum field
  theory},''
{\em Adv. Ser. Math. Phys.} {\bf 14} (1992)  1--208.

\bibitem{Karowski-LSZ}
H.~{Babujian}, A.~{Fring}, M.~{Karowski}, and A.~{Zapletal}, ``{Exact form
  factors in integrable quantum field theories: the sine-Gordon model},''
  \href{http://dx.doi.org/10.1016/S0550-3213(98)00737-8}{{\em Nuclear Physics
  B} {\bf 538} (1999)  535--586},
  \href{http://arxiv.org/abs/arXiv:hep-th/9805185}{{\tt arXiv:hep-th/9805185}}.

\bibitem{bootstrap-osszfogl}
H.~{Babujian} and M.~{Karowski}, ``{Towards the Construction of Wightman
  Functions of Integrable Quantum Field Theories},''
  \href{http://dx.doi.org/10.1142/S0217751X04020294}{{\em International Journal
  of Modern Physics A} {\bf 19} (2004)  34--49},
  \href{http://arxiv.org/abs/arXiv:hep-th/0301088}{{\tt arXiv:hep-th/0301088}}.

\bibitem{Smirnov-qKZ1}
F.~A. Smirnov, ``{Dynamical symmetries of massive integrable models, 1.
  Form-factor bootstrap equations as a special case of deformed Knizhnik-
  Zamolodchikov equations},''
{\em Int. J. Mod. Phys.} {\bf A71B} (1992)  813--837.

\bibitem{Smirnov-qKZ2}
F.~A. Smirnov, ``{Dynamical symmetries of massive integrable models. 2. Space
  of states of massive models as space of operators},''
{\em Int. J. Mod. Phys.} {\bf A7S1B} (1992)  839--858.

\bibitem{zam_Lee_Yang}
A.~B. Zamolodchikov, ``{Two point correlation function in scaling Lee-Yang
  model},''
\href{http://dx.doi.org/10.1016/0550-3213(91)90207-E}{{\em Nucl. Phys.} {\bf
  B348} (1991)  619--641}.

\bibitem{balog-weisz-structure}
J.~{Balog} and P.~{Weisz}, ``{Structure functions of the 2d O(n) non-linear
  sigma models},''
  \href{http://dx.doi.org/10.1016/j.nuclphysb.2004.12.017}{{\em Nuclear Physics
  B} {\bf 709} (2005)  329--380},
  \href{http://arxiv.org/abs/arXiv:hep-th/0409095}{{\tt arXiv:hep-th/0409095}}.

\bibitem{nonrelFF}
M.~Kormos, G.~Mussardo, and B.~Pozsgay, ``Bethe Ansatz Matrix Elements as
  Non-Relativistic Limits of Form Factors of Quantum Field Theory,''
  \href{http://dx.doi.org/10.1088/1742-5468/2010/05/P05014}{{\em J. Stat.
  Mech.} (2010)  P05014}, \href{http://arxiv.org/abs/arXiv:1002.3387}{{\tt
  arXiv:1002.3387 [cond-mat.stat-mech]}}.

\bibitem{PhysRevD.21.1523}
D.~B. Creamer, H.~B. Thacker, and D.~Wilkinson, ``Gelfand-Levitan method for
  operator fields,'' \href{http://dx.doi.org/10.1103/PhysRevD.21.1523}{{\em
  Phys. Rev. D} {\bf 21} (1980) no.~6, 1523--1528}.

\bibitem{PhysRevD.23.3081}
D.~B. Creamer, H.~B. Thacker, and D.~Wilkinson, ``Some exact results for the
  two-point function of an integrable quantum field theory,''
  \href{http://dx.doi.org/10.1103/PhysRevD.23.3081}{{\em Phys. Rev. D} {\bf 23}
  (1981) no.~12, 3081--3084}.

\bibitem{creamer-thacker-wilkinson-rossz}
D.~B. Creamer, H.~B. Thacker, and D.~Wilkinson, ``A study of correlation
  functions for the delta-function Bose gas,''
  \href{http://dx.doi.org/10.1016/0167-2789(86)90029-1}{{\em Physica D} {\bf
  20} (1986)  155}.

\bibitem{iz-kor-resh}
A.~G. Izergin, V.~E. Korepin, and N.~Y. Reshetikhin, ``Correlation functions in
  a one-dimensional Bose gas,''
  \href{http://dx.doi.org/10.1088/0305-4470/20/14/022}{{\em J. Phys A} {\bf 20}
  (1987)  4799}.

\bibitem{Pakuliak-XXZ-FF}
S.~{Pakuliak}, ``{Annihilation Poles for Form Factors in the Xxz Model},''
  \href{http://dx.doi.org/10.1142/S0217751X94000881}{{\em International Journal
  of Modern Physics A} {\bf 9} (1994)  2087--2102},
  \href{http://arxiv.org/abs/arXiv:hep-th/9307090}{{\tt arXiv:hep-th/9307090}}.

\bibitem{impurity-spin-exp-koehl}
S.~Palzer, C.~Zipkes, C.~Sias, and M.~K\"ohl, ``Quantum Transport through a
  Tonks-Girardeau Gas,''
  \href{http://dx.doi.org/10.1103/PhysRevLett.103.150601}{{\em Phys. Rev.
  Lett.} {\bf 103} (2009)  150601}.

\bibitem{zvonarev-impurity}
C.~J.~M. {Mathy}, M.~B. {Zvonarev}, and E.~{Demler}, ``{Quantum flutter of
  supersonic particles in one-dimensional quantum liquids},''
  \href{http://arxiv.org/abs/1203.4819}{{\tt arXiv:1203.4819
  [cond-mat.quant-gas]}}.

\bibitem{Takacs-Feher1}
G.~{Feh{\'e}r} and G.~{Tak{\'a}cs}, ``{Sine-Gordon form factors in finite
  volume},'' \href{http://dx.doi.org/10.1016/j.nuclphysb.2011.06.020}{{\em
  Nuclear Physics B} {\bf 852} (2011)  441--467},
  \href{http://arxiv.org/abs/1106.1901}{{\tt arXiv:1106.1901 [hep-th]}}.

\bibitem{Takacs-Feher2}
G.~Z. Feh\'er, T.~P\'almai, and G.~Tak\'acs, ``Sine-Gordon multisoliton form
  factors in finite volume,''
  \href{http://dx.doi.org/10.1103/PhysRevD.85.085005}{{\em Phys. Rev. D} {\bf
  85} (2012)  085005}, \href{http://arxiv.org/abs/1112.6322}{{\tt
  arXiv:1112.6322 [hep-th]}}.

\bibitem{gaudin-book}
M.~Gaudin, {\em La function d'onde de Bethe}.
\newblock Paris: Masson, 1983.

\bibitem{opdam-completeness}
G.~J. Heckman and E.~M. Opdam, ``Yang's System of Particles and Hecke
  Algebras,'' {\em The Annals of Mathematics} {\bf 145} (1997) no.~1, 139--173.
  \url{http://www.jstor.org/stable/2951825}.

\bibitem{tracy-widom}
C.~A. {Tracy} and H.~{Widom}, ``{The dynamics of the one-dimensional
  delta-function Bose gas},''
  \href{http://dx.doi.org/10.1088/1751-8113/41/48/485204}{{\em Journal of
  Physics A Mathematical General} {\bf 41} (2008)  5204},
  \href{http://arxiv.org/abs/0808.2491}{{\tt arXiv:0808.2491 [math-ph]}}.

\bibitem{YangYang2}
C.~N. Yang and C.~P. Yang, ``{Thermodynamics of a One-Dimensional System of
  Bosons with Repulsive Delta-Function Interaction},''
  \href{http://dx.doi.org/10.1063/1.1664947}{{\em J. Math. Phys.} {\bf 10}
  (1969)  1115}.

\bibitem{dorlas-completeness}
T.~C. Dorlas, ``Orthogonality and completeness of the Bethe Ansatz eigenstates
  of the nonlinear Schroedinger model,''
  \href{http://dx.doi.org/10.1007/BF02097001}{{\em Communications in
  Mathematical Physics} {\bf 154} (1993)  347--376}.

\bibitem{gaudin-LL-norms}
M.~Gaudin, ``La function d'onde de Bethe pour les mod\`eles exacts de la
  m\'ecanique statistique,'' {\em Commisariat \'a l'\'energie atomique, Paris}
  (1983)  .

\bibitem{korepin-izergin}
A.~G. Izergin and V.~E. Korepin, ``The quantum inverse scattering method
  approach to correlation functions,''
  \href{http://dx.doi.org/10.1007/BF01212350}{{\em Comm. Math. Phys.} {\bf 94}
  (1984)  67}.

\bibitem{korepin-LL1}
V.~E. Korepin, ``Correlation functions of the one-dimensional Bose gas in the
  repulsive case,'' \href{http://dx.doi.org/10.1007/BF01212351}{{\em Comm.
  Math. Phys.} {\bf 94} (1984)  93}.

\bibitem{fftcsa2}
B.~Pozsgay and G.~Takacs, ``{Form factors in finite volume II:disconnected
  terms and finite temperature correlators},''
  \href{http://dx.doi.org/10.1016/j.nuclphysb.2007.07.008}{{\em Nucl. Phys.}
  {\bf B788} (2008)  209--251},
\href{http://arxiv.org/abs/0706.3605}{{\tt arXiv:0706.3605 [hep-th]}}.

\bibitem{LM-sajat}
B.~{Pozsgay}, ``{Mean values of local operators in highly excited Bethe
  states},'' \href{http://dx.doi.org/10.1088/1742-5468/2011/01/P01011}{{\em J.
  Stat. Mech.} {\bf 2011} (2011)  P01011},
  \href{http://arxiv.org/abs/1009.4662}{{\tt arXiv:1009.4662 [hep-th]}}.

\bibitem{korepin-slavnov}
V.~E. {Korepin} and N.~A. {Slavnov}, ``{Form Factors in the Finite Volume},''
  \href{http://dx.doi.org/10.1142/S0217979299002769}{{\em Int. J. Mod. Phys. B}
  {\bf 13} (1999)  2933--2941},
  \href{http://arxiv.org/abs/arXiv:math-ph/9812026}{{\tt
  arXiv:math-ph/9812026}}.

\bibitem{XXZ-to-LL-sajat}
B.~{Pozsgay}, ``{Local correlations in the 1D Bose gas from a scaling limit of
  the XXZ chain},''
  \href{http://dx.doi.org/10.1088/1742-5468/2011/11/P11017}{{\em J. Stat.
  Mech.} {\bf 11} (2011)  17}, \href{http://arxiv.org/abs/1108.6224}{{\tt
  arXiv:1108.6224 [cond-mat.stat-mech]}}.

\bibitem{Babujian:2001xn}
H.~Babujian and M.~Karowski, ``{Exact form factors in integrable quantum field
  theories: The sine-Gordon model. II},''
  \href{http://dx.doi.org/10.1016/S0550-3213(01)00551-X}{{\em Nucl. Phys.} {\bf
  B620} (2002)  407--455},
\href{http://arxiv.org/abs/hep-th/0105178}{{\tt arXiv:hep-th/0105178}}.

\bibitem{Babujian:2002fi}
H.~Babujian and M.~Karowski, ``{Sine-Gordon breather form factors and quantum
  field equations},'' \href{http://dx.doi.org/10.1088/0305-4470/35/43/308}{{\em
  J. Phys.} {\bf A35} (2002)  9081--9104},
\href{http://arxiv.org/abs/hep-th/0204097}{{\tt arXiv:hep-th/0204097}}.

\bibitem{sinhG-LL1}
M.~Kormos, G.~Mussardo, and A.~Trombettoni, ``Expectation Values in the
  Lieb-Liniger Bose Gas,''
  \href{http://dx.doi.org/10.1103/PhysRevLett.103.210404}{{\em Phys. Rev.
  Lett.} {\bf 103} (2009)  210404}, \href{http://arxiv.org/abs/0909.1336}{{\tt
  arXiv:0909.1336 [cond-mat.stat-mech]}}.

\bibitem{sinhG-LL2}
M.~Kormos, G.~Mussardo, and A.~Trombettoni, ``1D Lieb-Liniger Bose Gas as
  Non-Relativistic Limit of the Sinh-Gordon Model,''
  \href{http://dx.doi.org/10.1103/PhysRevA.81.043606}{{\em Phys. Rev. A} {\bf
  81} (2010)  043606}, \href{http://arxiv.org/abs/0912.3502}{{\tt
  arXiv:0912.3502 [cond-mat.stat-mech]}}.

\bibitem{stokman-opdam-spin-case}
E.~Emsiz, E.~Opdam, and J.~Stokman, ``Trigonometric Cherednik algebra at
  critical level and quantum many-body problems,''
  \href{http://dx.doi.org/10.1007/s00029-009-0516-y}{{\em Selecta Mathematica,
  New Series} {\bf 14} (2009)  571--605},
  \href{http://arxiv.org/abs/0804.0046}{{\tt arXiv:0804.0046 [math]}}.

\bibitem{Balog:jopofa}
J.~Balog and T.~Hauer, ``{Polynomial form-factors in the O(3) nonlinear sigma
  model},'' \href{http://dx.doi.org/10.1016/0370-2693(94)91453-2}{{\em Phys.
  Lett.} {\bf B337} (1994)  115--121},
\href{http://arxiv.org/abs/hep-th/9406155}{{\tt arXiv:hep-th/9406155}}.

\bibitem{Delfino:1996nf}
G.~Delfino, P.~Simonetti, and J.~L. Cardy, ``{Asymptotic factorisation of form
  factors in two- dimensional quantum field theory},''
  \href{http://dx.doi.org/10.1016/0370-2693(96)01035-0}{{\em Phys. Lett.} {\bf
  B387} (1996)  327--333},
\href{http://arxiv.org/abs/hep-th/9607046}{{\tt arXiv:hep-th/9607046}}.

\bibitem{Zhao-Pu-2comp-fermions-qism}
F.-C. Pu and B.-H. Zhao, ``Quantum Gelfand-Levitan equations for nonlinear
  Schr\"odinger model of spin-onehalf particles,''
  \href{http://dx.doi.org/10.1103/PhysRevD.30.2253}{{\em Phys. Rev. D} {\bf 30}
  (1984) no.~10, 2253--2256}.

\bibitem{multicomponent-bose-fermion-qism}
F.-C. Pu, Y.-Z. Wu, and B.-H. Zhao, ``Quantum inverse scattering method for
  multicomponent non-linear Schrodinger model of bosons or fermions with
  repulsive coupling,''
  \href{http://dx.doi.org/10.1088/0305-4470/20/5/026}{{\em J. Phys. A} {\bf 20}
  (1987) no.~5, 1173}.

\bibitem{nincskesz}
B.~Pozsgay and M.~Kormos Unpublished.

\bibitem{fftcsa1}
B.~Pozsgay and G.~Takacs, ``{Form factors in finite volume I: form factor
  bootstrap and truncated conformal space},''
  \href{http://dx.doi.org/10.1016/j.nuclphysb.2007.06.027}{{\em Nucl. Phys.}
  {\bf B788} (2008)  167--208},
\href{http://arxiv.org/abs/0706.1445}{{\tt arXiv:0706.1445 [hep-th]}}.

\bibitem{Essler:2009zz}
F.~H.~L. Essler and R.~M. Konik, ``{Finite-temperature dynamical correlations
  in massive integrable quantum field theories},''
  \href{http://dx.doi.org/10.1088/1742-5468/2009/09/P09018}{{\em J. Stat.
  Mech.} {\bf 0909} (2009)  P09018},
\href{http://arxiv.org/abs/0907.0779}{{\tt arXiv:0907.0779 [cond-mat.str-el]}}.

\bibitem{D22}
B.~Pozsgay and G.~Takacs, ``{Form factor expansion for thermal correlators},''
  \href{http://dx.doi.org/10.1088/1742-5468/2010/11/P11012}{{\em J. Stat.
  Mech.} {\bf 11} (2010)  12}, \href{http://arxiv.org/abs/1008.3810}{{\tt
  arXiv:1008.3810 [hep-th]}}.

\bibitem{Doyon:finiteTreview}
B.~Doyon, ``{Finite-temperature form-factors: A Review},'' {\em SIGMA} {\bf 3}
  (2007)  011,
\href{http://arxiv.org/abs/hep-th/0611066}{{\tt arXiv:hep-th/0611066
  [hep-th]}}.

\end{thebibliography}\endgroup
\bibliographystyle{utphys}

\end{document}